\definecolor{KUblue}{RGB}{0, 32, 91}
\DeclareMathOperator*{\argmax}{arg\,max}
\DeclareMathOperator*{\argmin}{arg\,min}
\DeclareMathOperator{\expect}{\mathbb{E}}
\DeclareMathOperator{\ind}{1}
\DeclareMathOperator{\imr}{IMR}
\DeclareMathOperator{\lam}{LAM}
\DeclareMathOperator{\plugin}{plug-in}
\DeclareMathOperator{\rimr}{RIMR}
\DeclarePairedDelimiter{\abs}{\lvert}{\rvert}
\DeclarePairedDelimiter{\dangle}{\langle}{\rangle}
\DeclarePairedDelimiter{\dbrack}{\lbrack}{\rbrack}
\DeclarePairedDelimiter{\dbrace}{\lbrace}{\rbrace}
\DeclarePairedDelimiter{\dparen}{\lparen}{\rparen}
\DeclarePairedDelimiter{\norm}{\lVert}{\rVert}
\DeclarePairedDelimiter{\normB}{\lVert}{\rVert_{\mathbb{B}}}
\theoremstyle{plain}% default
\newtheorem{theorem}{Theorem}
\newtheorem{lemma}{Lemma}
\theoremstyle{definition}
\newtheorem{assumption}{Assumption}
\newtheorem{example}{Example}
\newcommand{\continuation}{??}
\newenvironment{examplecontinued}[1]
 {\renewcommand{\continuation}{\ref*{#1}}\excont[continued]}
 {\endexcont}
\theoremstyle{remark}
\newtheorem{remark}{Remark}
\setlist[enumerate]{label=(\roman*)}
\newlist{lemmaenum}{enumerate}{1} % also creates a counter called 'propenumi'
\setlist[lemmaenum]{label=(\roman*), ref=\thelemma.(\roman*)}
\newlist{assumptionenum}{enumerate}{1} % also creates a counter called 'propenumi'
\setlist[assumptionenum]{label=(\roman*), ref=\theassumption.(\roman*),nosep}
\crefname{theorem}{Theorem}{Theorems}
\Crefname{theorem}{Theorem}{Theorems}
\crefname{lemma}{Lemma}{Lemmas}
\Crefname{lemma}{Lemma}{Lemmas}
\crefname{corollary}{Corollary}{Corollaries}
\Crefname{corollary}{Corollary}{Corollaries}
\crefname{proposition}{Proposition}{Propositions}
\Crefname{proposition}{Proposition}{Propositions}
\crefname{assumption}{Assumption}{Assumptions}
\Crefname{assumption}{Assumption}{Assumptions}
\crefname{example}{Example}{Examples}
\Crefname{example}{Example}{Examples}
\crefname{equation}{}{}
\Crefname{equation}{}{}
\title{
    Locally Asymptotically Minimax Statistical Treatment Rules Under Partial Identification%
    \thanks{I acknowledge Takanori Ida for his continuous support. I also acknowledge Tim Armstrong, Keisuke Hirano, Hidehiko Ichimura, Hiroaki Kaido, Toru Kitagawa, Ryo Okui, Yoshiyasu Rai, Aleksey Tetenov Takahide Yanagi, and Kohei Yata for their helpful comments. Finally, I acknowledge seminar participants at the Bravo Center/JEA/SNSF Workshop at Brown University and the Summer Econometrics Forum at the University of Tokyo.}
}
\author{Daido Kido%
    \thanks{Graduate School of Economics, Kyoto University. E-mail: \href{mailto:daido.kido@gmail.com}{\texttt{daido.kido@gmail.com}}}}
\date{\today}
\begin{document}

\maketitle

\begin{abstract}
    Policymakers often desire a \emph{statistical treatment rule (STR)} that determines a treatment assignment rule deployed in a future population from available data. With the true knowledge of the data generating process, the average treatment effect (ATE) is the key quantity characterizing the optimal treatment rule. Unfortunately, the ATE is often not point identified but partially identified. Presuming the partial identification of the ATE, this study conducts a local asymptotic analysis and develops the \emph{locally asymptotically minimax (LAM)} STR. The analysis does not assume the full differentiability but the directional differentiability of the boundary functions of the identification region of the ATE. Accordingly, the study shows that the LAM STR differs from the \emph{plug-in} STR. A simulation study also demonstrates that the LAM STR outperforms the plug-in STR.\\
    \textit{Keywords:} Statistical Treatment Rules, Partial Identification, Hadamard Directional Differentiability, Local Asymptotic Minimaxity
\end{abstract}

\section{Introduction}\label{sec:intro}

Consider the problem of choosing a future treatment for the population of interest from two alternatives, say ``treat'' and ``do not treat''. The objective is to maximize the population's mean outcome, such as average income and cure rate from a specific disease. In making the decision, data that is somewhat informative about the performance of each treatment is available. A data-dependent procedure that effectively chooses the future treatment based on the obtained information from the data is, thus, desirable. \textcite{Manski2004} calls such data-dependent procedures \emph{statistical treatment rules (STRs)} and proposes to analyze this problem within the framework of statistical decision theory to derive the optimal STR. Since then, studies on STRs have developed under various settings \parencite[e.g.,][]{Stoye2009,Hirano2009,Tetenov2012,Kitagawa2018,Mbakop2021,Athey2021}.
\par
A common presumption in these studies is the \emph{point identification} of the (conditional) \emph{average treatment effect (ATE)} in the target population. In other words, if the true data generating process (dgp), the distribution from which the data is drawn, was known, the target population's ATE could be uniquely determined. It would then be optimal to choose the future treatment depending on the sign of the ATE. Although the true dgp is not known in practice, the preceding argument suggests that one of the most intuitive STRs is the empirical success rule that chooses ``treat'' if and only if the estimated ATE is positive \parencite{Manski2004}. This rule has several desirable properties. Specifically, \textcite{Stoye2009} shows that this STR is a nearly finite-sample minimax regret STR under several circumstances. Moreover, \textcite{Hirano2009} shows that it is asymptotically minimax optimal.
\par
However, in practice, the ATE is frequently not point identified but \emph{partially identified}. For example, one might not assume the random assignment of the treatments in an observational study. Even in experimental studies, the experiment participants may not be the random sample from the population of interest. Moreover, non-compliance with assigned treatment may be allowed. In these cases, the ATE of the target population cannot be point identified but partially identified. Namely, even the full knowledge of the true dgp cannot uniquely pin down the value of the ATE. Rather, the knowledge only gives the identification region, the set to which the ATE belongs. The identification region can take various forms depending on the identifying assumption researchers impose on dgps. Still, the set is not generally a singleton. 
\par
When the ATE is partially identified, it is troubling for two reasons. First, as the identification region depends on the dgp, it must be estimated from the data. Thus, there is the problem of statistical uncertainty. This is the fundamental feature any statistical decision problem has in common. Second, superior treatment might be ambiguous even with the true knowledge of the dgp. Suppose the true identification region contains positive and negative values. That is, ``treat'' may be strictly better than ``do not treat'' in the target population, and vice versa. Hence, it is unclear which treatment should be implemented in the population, even with the true knowledge of the identification region. This is the challenge inherent in the treatment decision problem under partial identification of the ATE.
\par
Although the partial identification of the ATE adds ambiguity to the treatment decision problem, the framework of statistical decision theory still applies. In particular, presuming the partial identification of the ATE, several studies have derived finite-sample minimax regret STRs by restricting the class of dgps to some analytically tractable classes or focusing on a specific identifying assumption. Specifically, \textcite{Ishihara2021,Stoye2012,Yata2021} solve the statistical decision problem exactly by restricting the class of dgps to Gaussian experiments, where the data comes from a normal distribution with unknown mean and known covariance matrix. \textcite{Manski2007} also solves the finite-sample problem exactly when the partial identification is conducted with empirical evidence alone.
\par
Unfortunately, it is challenging to solve the statistical treatment decision problem for a general class of dgps induced from arbitrary identification assumptions. Instead, in the same spirit as \textcite{Hirano2009}, this study conducts a local asymptotic analysis to derive asymptotically optimal STRs. 
A brief discussion of the setup and results are as follows. The setup comprises semi- or non-parametric models. Suppose the lower and upper bounds of the identification region can be written as $\tau_L(\theta)$ and $\tau_U(\theta)$. Here, $\theta$ is the possibly infinite-dimensional intermediate parameter that summarizes some features of the dgp relevant to the region. Denote by $\theta_0$ the value of the intermediate parameter at the true dgp. The boundary functions $\tau_L$ and $\tau_U$ are known real-valued functions from theory. Assume that these functions are \emph{Hadamard directionally differentiable}, a version of directional differentiability for maps between two normed spaces. This is the key assumption that determines the scope of this study. Do not assume full differentiability because $\tau_L$ and $\tau_U$ are often not fully differentiable. For instance, $\max$ and $\min$ are not fully differentiable but directionally differentiable, although they often appear in the bounds \parencite[e.g.,][]{Manski1990,Manski2000a,Balke1997}.
\par
Allow for nondeterministic treatment rules: ``treat'' and ``do not treat'' can be implemented with positive probabilities in the target population. As noted, when the ATE is partially identified, the optimal treatment cannot be uniquely determined even if the true dgp is known. Thus, to address the ambiguity arising from the partial identification of the ATE, use the minimax regret criterion to uniquely determine the optimal treatment rule that should be implemented if the true dgp were known. \textcite{Manski2007,Manski2009} have shown that the optimal treatment rule is generally nondeterministic. Specifically, if $\theta_0$ were known, then the optimal treatment rule would be
\begin{equation}
    \kappa(\theta_0) \coloneqq \frac{\tau_U^+(\theta_0)}{\tau_U^+(\theta_0) + \tau_L^-(\theta_0)},
    \label{eq:optimal-treatment-rule}
\end{equation}
where $f^+(\cdot) = \max\dbrace{f(\cdot),0}$ and $f^-(\cdot) = \max\dbrace{-f(\cdot),0}$ for arbitrary real-valued function $f$. Concretely, ``treat'' would be implemented with probability $\kappa(\theta_0)$, while ``do not treat'' would be implemented with probability $1-\kappa(\theta_0)$. In particular, if the identification region contains positive and negative values, this rule is certainly nondeterministic. When the boundary functions, $\tau_L$ and $\tau_U$, are only directionally differentiable, the optimal treatment rule is also only directionally differentiable.
\par
The study seeks an asymptotically optimal STR. Formally, an STR is a function that maps data to a treatment rule. The asymptotic optimality criterion investigated here is the \emph{local asymptotic minimaxity (LAM)}. Given a risk function that measures the loss from the repeated use of an STR for a fixed dgp, the local asymptotic maximum risk of the STR is the worst-case risk over neighborhoods of the true dgp. Then, the LAM STR is the STR that minimizes the local asymptotic maximum risk. 
\par
The main result of this study implies that the LAM STR takes the form of 
\begin{equation*}
    \kappa\dparen*{\hat{\theta}_n + \frac{w_{\theta_0}^*}{\sqrt{n}}},
\end{equation*}
where $\hat{\theta}_n$ is the efficient estimator for $\theta_0$, and $w_{\theta_0}^*$ are the optimal adjustment terms at $\theta_0$. The adjustment term $w_{\theta_0}^*$ is not necessarily zero. Hence, the \emph{plug-in} STR (i.e., $\kappa(\hat{\theta}_n)$) is not generally LAM. The optimal adjustment term generally depends on $\theta_0$; hence, it is unknown in practice unless $\theta_0$ is known. Instead, this study employs a procedure to determine a data-dependent adjustment term $\hat{w}_n$ and show that the feasible STR $\kappa(\hat{\theta}_n + \hat{w}_n/\sqrt{n})$ is LAM. A simulation study exemplifies the proposed STR relative to the plug-in STR.
\paragraph{Related Literature.}{
This study mainly contributes to a line of research that studies the statistical treatment choice problem with partially identified ATE. As noted, \textcite{Manski2007,Stoye2012,Yata2021,Ishihara2021} have derived exact finite-sample minimax regret STRs by restricting the class of data generating process such that the finite-sample problem becomes analytically tractable. In a similar split of this study, \textcite{Christensen2022} have also conducted the local asymptotic analysis to derive asymptotically optimal STR under partial identification of the ATE. The major difference from their analysis is the adopted optimality criterion: while they analyze the asymptotic average risk optimality, this study analyzes the asymptotic minimax optimality. Further, they restrict themselves to deterministic treatment rules, while this study allows for nondeterministic treatment rules. Several studies examine \emph{individualized} treatment assignment problems using observable covariates \parencite{Adjaho2022,Kido2022,DAdamo2022,Pu2021,Mo2021,Zhao2019,Si2020,Siforthcoming,Kallus2018}. However, such studies focus on deterministic treatment rules; this study posits that the treatment assignment in the future population can be probabilistic. Furthermore, the result of this study applies to such individualized treatment assignment problems as long as it considers local asymptotics pointwise in covariates, and there is no constraint on feasible individualized treatment rules. 
\par
This study is also closely related to studies of efficient estimation of parameters expressed as $f(\theta_0)$ for some known function $f$. When $f$ is Hadamard differentiable, the plug-in estimator $f(\hat{\theta}_n)$ is the efficient estimator \parencite{Vaart1991b}. When $f$ is not Hadamard differentiable but Hadamard \emph{directionally} differentiable, then regular estimators, which are robust to small perturbations of dgps, do not exist \parencite{Hirano2012,Vaart1991}. Thus, the classical notion of efficiency based on best regularity breaks down for such nondifferentiable functions. However, the efficiency based on LAM still works \parencite{Song2014a,Fang2016,Ponomarev2022}. The relevant assume the (Hadamard) directional differentiability of $f$ and derive the LAM point estimators for $f(\theta)$. In such studies, the LAM point estimators also require some adjustment terms. In particular, \textcite{Ponomarev2022} shows that the LAM point estimators generally take the form of $f(\hat{\theta}_n + w_{\theta_0}^\dagger/\sqrt{n}) + v_{\theta_0}^\dagger/\sqrt{n}$. Accordingly, it is possible to construct the LAM point estimator for $\kappa(\theta_0)$. However, the optimal adjustment terms $w_{\theta_0}^*$ and $w_{\theta_0}^\dagger$ generally differ because the objective functions differ in the estimation problem and treatment assignment problem; the point estimation problem aims at minimizing the symmetric loss function such as mean squared error, while the treatment assignment problem aims at minimizing the asymmetric loss function such as welfare regret loss. The study will exemplify this difference using an example. The Hadamard directional differentiability has also been adopted in \textcite{Fang2019,Hong2018} to develop asymptotically valid inference procedures on $f(\theta_0)$. 
}
\paragraph{Structure of the Paper.}{
\cref{sec:stat-treat-decide-prob-under-pi} formally sets up the statistical treatment decision problem under the partial identification of the ATE. It also introduces the precise requirement for the identification region of the ATE with some motivating examples. \cref{sec:framework-of-local-asymptotic-analysis} discusses the framework of the local asymptotic analysis employed in this study. \cref{sec:asymptotically-minimax-optimal-statistical-treatment-rules} derives the LAM STRs, and \cref{sec:conclusion} concludes the paper. All proofs are relegated to \cref{sec:proofs-of-results-in-texts}.
}

\section{Setup}\label{sec:stat-treat-decide-prob-under-pi}
\subsection{Statistical Treatment Choice Problem}\label{sec:setup}
Suppose there is a binary treatment encoded as 0 and 1. Let $Y(a) \in \mathcal{Y}$ denote the potential outcome that would be realized if an individual was exposed to treatment $a \in \{0,1\}$. The set $\mathcal{Y}$ is assumed to be a bounded subset of $\mathbb{R}$, and let $y_L \coloneqq \inf \mathcal{Y}$ and $y_U \coloneqq \sup \mathcal{Y}$. Consider a social planner who wants to assign treatment 0 or 1 to each individual of a population of interest. The planner is allowed to implement probabilistic assignments independently of potential outcomes. Specifically, the planner can assign treatment 1 to an individual with probability $\delta \in [0,1]$, which will be called a \emph{treatment rule}. When the planner adopts a treatment rule $\delta$, the \emph{welfare} at the population of interest is given by 
\begin{equation}
    \expect \dbrack{Y(1)\cdot \delta + Y(0)\cdot (1-\delta)} = \expect[Y(1) - Y(0)]\cdot \delta + \expect[Y(0)],
    \label{eq:welfare}
\end{equation}
where the expectation is taken with respect to the target population's joint distribution of $(Y(1),Y(0))$. Suppose that higher welfare is more desirable. Then, if the planner knew the true marginal distributions of potential outcomes, she would search for the treatment rule that maximizes the welfare. The right-hand side of \cref{eq:welfare} shows that the important quantity for her optimization is the \emph{average treatment effect (ATE)}, $\expect \dbrack{Y(1) - Y(0)}$. The optimal treatment rules assign treatment 1 with probability one (zero) if and only if the ATE is positive (negative). Therefore, if she knew the signs of ATE, she could obtain an optimal treatment rule. Unfortunately, this rule is generally infeasible, as the ATE is unknown.
\par
Suppose instead a sample $\mathbf{Z}_n = (Z_1, \cdots, Z_n)$ is available to the planner. It is assumed that each observation $Z_i$ belongs to a complete separable metric space $\mathcal{Z}$ equipped with the Borel $\sigma$-algebra $\mathcal{B}(\mathcal{Z})$. Further, it is independent and identically distributed (i.i.d.) according to distribution $P_0 \in \mathcal{P}$, where $\mathcal{P}$ is a known class of Borel probability measures.\footnote{The completeness and separability of the space $\mathcal{Z}$ is put to ensure the separability of $L_2(P)$ space for every $P \in \mathcal{P}$. Indeed, every Borel probability measure $P$ is Radon \parencite[][Theorem~7.1.7]{Bogachev2007}. This then implies that every $P$ is separable \parencite[][Proposition~7.14.12]{Bogachev2007}. The separability of $L_2(P)$ follows from the separability of $P$ \Parencite[][Exercise~4.7.63]{Bogachev2007}.} Given this i.i.d. assumption, the sample is generated following $\mathbf{Z}_n \sim P_0^n$, which is the $n$-fold product of $P_0$. It is common to often refer each $P \in \mathcal{P}$ to a \emph{data generating process (dgp)} and also refer $P_0$ to the true dgp. If an infinite number of observations of $Z_i$ were available, the dgp $P_0$ would be learnable with arbitrary precision.
\par
This study assumes that the ATE cannot be point identified but \emph{partially identified}. Formally, even with knowledge of the true dgp $P_0$, it is only possible to conclude that the ATE belongs to the identification region $\tau(P_0)$, which is a subset of $[y_L - y_U,y_U-y_L]$ by the boundedness of $\mathcal{Y}$. Note that $\tau(P_0)$ may or may not contain positive and negative values, in which case the better treatment cannot be determined. The study places some restrictions on $\tau(P_0)$, which will be discussed in \cref{sec:identification-region-ATE}.
\par
Given the availability of a sample $\mathbf{Z}_n$, it is natural that the planner decides a treatment rule, utilizing the knowledge obtained from $\mathbf{Z}_n$. From this perspective, what the planner needs is the procedure that outputs a good treatment rule for each possible realization of $\mathbf{Z}_n$. The planner's problem can then be interpreted as one of the statistical decision problems \citep{Manski2004}. A statistical decision problem mainly comprises two components: a statistical decision rule and a risk. In the current context, a statistical decision rule is a mapping $\delta_n:\mathcal{Z}^n \to [0,1]$, which is called a \emph{statistical treatment rule (STR)}. Here, $\delta_n(\mathbf{z}_n) \in [0,1]$ is interpreted as the probability of assigning treatment 1 in the target population when the observed sample is $\mathbf{z}_n$. For the risk function, it is important to first specify a loss function that penalizes an arbitrary treatment rule $\delta$ when ATE is $\tau$. Following the literature, this study focuses on the welfare regret loss given by 
\begin{align*}
    L(\delta,\tau) 
    &\coloneqq \sup_{d \in [0,1]} \expect\dbrack{Y(1)\cdot d + Y(0)\cdot (1-d)} - \expect\dbrack{Y(1)\cdot \delta + Y(0)\cdot (1-\delta)}\\
    &= \tau(1-\delta)\cdot\ind\dbrace{\tau > 0} - \tau\delta\cdot \ind\dbrace{\tau \leq 0}. 
\end{align*}
This measures the loss the planner incurs by not knowing $\tau$. Given this loss function, for every $P \in \mathcal{P}$ and $\tau \in \tau(P)$, the risk of a STR $\delta_n$ is defined by $\expect_{P^n} \dbrack*{ L \dparen*{ \delta_n \dparen*{\mathbf{Z}_n}, \tau } }$.
\par
The definition clearly shows that the risk depends on the true dgp $P_0$ and ATE $\tau$. As neither $P_0$ nor $\tau$ are known with a sample of finite size, the planner would seek the STR that minimizes the risk for every possible $P \in \mathcal{P}$ and $\tau \in \tau(P)$. However, such STRs do not necessarily exist. To alleviate this difficulty, the literature on statistical decision theory has considered several criteria. Among those criteria, this study employs the \emph{minimax regret criterion}, as suggested by \textcite{Manski2004}. Specifically, the minimax regret STR minimizes 
\begin{equation}
    \adjustlimits \sup_{P \in \mathcal{P}} \sup_{\tau \in \tau(P)} \expect_{P^n} \dbrack*{ L \dparen*{ \delta_n \dparen*{\mathbf{Z}_n}, \tau } }.
    \label{eq:finite-sample-maximum-regret}
\end{equation}
From here, suppress the dependence of an STR $\delta_n(\mathbf{Z}_n)$ on the data $\mathbf{Z}_n$ to simplify the notation.
\subsection{Identification Region of ATE}\label{sec:identification-region-ATE}
Among the features of the identification region $\tau(P)$, the important quantities in the current context are its lower and upper bounds. Indeed, when one fixes $P \in \mathcal{P}$ and consider the inner maximization problem in \cref{eq:finite-sample-maximum-regret}, the direct calculation yields
\begin{align*}
    \sup_{\tau \in \tau(P)} \expect_{P^n} \dbrack*{ L \dparen*{ \delta_n, \tau } }
    = \max\dbrace*{ \dparen*{ \inf\tau(P) }^-\expect_{P^n}[\delta_n],~ \dparen*{ \sup\tau(P) }^+\dparen*{ 1 - \expect_{P^n}[\delta_n] } },
\end{align*}
where for any real value $x \in \mathbb{R}$, $(x)^+ = \max\{x,0\}$, and $(x)^- = \max\{-x,0\}$. Note that the boundedness of the outcome space implies that the bounds of $\tau(P)$ are bounded by a constant uniformly in $P$. The equation reveals that whatever the form of $\tau(P)$ is, the value of the inner maximization in \cref{eq:finite-sample-maximum-regret} depends only on $\inf\tau(P)$ and $\sup\tau(P)$. This study requires that the bounds, $\inf\tau(P)$ and $\sup\tau(P)$, are sufficiently smooth in $P$.
\subsubsection{Motivating Examples}\label{sec:motivating-examples}
To motivate the concrete smoothness conditions, the study first presents at some examples of partial identification of ATE.
\begin{example}
    \label{ex:identification-region-with-empirical-evidence-alone}
    Imagine an observational study in which treatments are not randomly assigned. Formally, suppose a researcher draws $n$ i.i.d. copies of $(Y(1),Y(0),D)$ from a population of interest and observes $Z_i = (Y_i,D_i),~i=1,\cdots,n$. Here, $D_i \in \{0,1\}$ denotes the treatment $i$-th subject received, and $Y_i$ denotes the $i$-th subject's observed outcome satisfying $Y_i = Y_i(D_i)$. Let $p = P(D = 1)$ and $\mu_d = \expect_P[Y|D=d]$ for $d \in \{0,1\}$. With this empirical evidence alone, the ATE is not point identified but partially identified. Specifically, \textcite{Manski1990} gives the sharp bounds as
    \begin{align}
        \inf\tau(P) = (\mu_1-y_U)p + (y_L - \mu_0)(1-p),
        \quad
        \sup\tau(P) = (\mu_1-y_L)p + (y_U - \mu_0)(1-p).
        \label{eq:example1-bound}
    \end{align}
    The width of $\tau(P)$ is always one irrespective of $P$. This implies that $\tau(P)$ always contains zero, whence the sign of ATE is never point identified.
    \qed
\end{example}
\begin{example}
    \label{ex:experimental-study-with-non-random-sampling}
    Imagine an experimental study in which an experimenter recruits participants non-randomly. More concretely, suppose a researcher has access to $n$ i.i.d. copies of $(Y(1),Y(0),D,S)$ from a population of interest and observes $Z_i = (S_iY_i,S_iD_i,S_i)$, where $Y_i$ and $D_i$ are defined in the same way as \cref{ex:identification-region-with-empirical-evidence-alone}, and $S_i \in \{0,1\}$ denotes the $i$-th subject's participation in the experiment ($S_i = 1$ if and only if subject $i$ participates in the experiment). Thus, for subjects with $S_i = 0$, the experimenter only observes that these subjects do not participate. However, among subjects with $S_i = 1$, the experimenter randomly assigns treatments with known probability, where one can assume $(Y(1),Y(0)) \perp D |S=1$. This setting has often been studied in the literature on external validity \parencite[e.g.,][]{Hotz2005,Cole2010}. Let $p = P(S_i = 1)$ and $\mu_d = \expect_P\dbrack{Y_i | D_i = d, S_i = 1}$ for $d \in \{0,1\}$. Without any further assumptions such as $(Y(1),Y(0))\perp S$, the ATE is partially identified. In particular, \textcite{Manski1989}'s result implies
    \begin{align}
        \inf\tau(P) = (\mu_1 - \mu_0)p - (y_U - y_L)(1-p),
        \quad
        \sup\tau(P) = (\mu_1 - \mu_0)p + (y_U - y_L)(1-p).
        \label{eq:example2-bound}
    \end{align}
    The width of $\tau(P)$ is $2(1-p)$, which depends on the dgp $P$. Moreover, the set $\tau(P)$ may or may not contain zero depending on $P$.
    \qed
\end{example}
\begin{example}
    \label{ex:identification-region-with-mean-independent-IV}
    In \cref{ex:identification-region-with-empirical-evidence-alone}, suppose the researcher has access to a binary instrument $V \in \{v_0,v_1\}$. More precisely, suppose the researcher draws $n$ i.i.d. copies of $(Y(1),Y(0),D,V)$ from a population of interest and observes $Z_i = (Y_i,D_i,V_i),~i=1,\cdots,n$, where $Y_i$ and $D_i$ are defined in the same way as \cref{ex:identification-region-with-empirical-evidence-alone}. Here, suppose the instrument satisfies the exclusion restriction, which rules out the possibility of the direct effect of the instrument on outcomes. Assume the instrument $V$ is mean independent of potential outcomes, that is, 
    \begin{equation*}
        \expect[Y(d)|V=v_j] = \expect[Y(d)]\quad\text{for every}\quad (d,j) \in \{0,1\} \times \{0,1\}.
        \label{eq:mean-independent-iv}
    \end{equation*}
    Let $p_j = P(D = 1|V = v_j)$ for $j \in \{0,1\}$ and $\mu_{d,j} = \expect_P[Y|D = d, V = v_j]$ for each $(d,j)$. Then, \textcite{Manski1990}'s result gives the sharp bounds as
    \begin{align}
        \begin{aligned}
        \inf\tau(P) 
        &= \max\dbrace*{
            \begin{aligned}
                & \mu_{1,1}p_1 + y_L(1-p_1) - \mu_{0,1}(1-p_1) - y_Up_1\\
                & \mu_{1,1}p_1 + y_L(1-p_1) - \mu_{0,0}(1-p_0) - y_Up_0\\
                & \mu_{1,0}p_0 + y_L(1-p_0) - \mu_{0,1}(1-p_1) - y_Up_1\\
                & \mu_{1,0}p_0 + y_L(1-p_0) - \mu_{0,0}(1-p_0) - y_Up_0
            \end{aligned}
        },\\
        \sup\tau(P)
        &= \min\dbrace*{
            \begin{aligned}
                & \mu_{1,1}p_1 + y_U(1-p_1) - \mu_{0,1}(1-p_1) - y_Lp_1\\
                & \mu_{1,1}p_1 + y_U(1-p_1) - \mu_{0,0}(1-p_0) - y_Lp_0\\
                & \mu_{1,0}p_0 + y_U(1-p_0) - \mu_{0,1}(1-p_1) - y_Lp_1\\
                & \mu_{1,0}p_0 + y_U(1-p_0) - \mu_{0,0}(1-p_0) - y_Lp_0
            \end{aligned} 
        }.
        \end{aligned}
        \label{eq:example3-bound}
    \end{align}
    In the binary outcome setting (i.e., $\mathcal{Y} = \{0,1\}$), \textcite{Balke1997} showed that the bounds given in \cref{eq:example3-bound} can be tightened with the assumption, $(Y(1),Y(0)) \perp V$, which is stronger than the mean-independence. Although the exact form of their bounds is not shown to save space, note that they still contain $\max$ and $\min$ functions. The set $\tau(P)$ may or may not contain zero depending on $P$.
    \qed
\end{example}
\subsubsection{Directional Differentiability of the Bounds}
The inspection of \cref{ex:identification-region-with-empirical-evidence-alone,ex:experimental-study-with-non-random-sampling,ex:identification-region-with-mean-independent-IV} reveals that the lower and upper bounds of $\tau(P)$ can often be written as a composition of maps. The relevant features of the data-generating process $P$ are first summarized by intermediate parameters $\theta$. Then, $\theta$ is mapped by certain functions, say $\tau_L$ and $\tau_U$, to the lower and upper bounds of $\tau(P)$. For the smoothness of $\tau_L$ and $\tau_U$, we must deal with the possible nondifferentiability of these functions. For instance, max and min functions, which appear in \cref{ex:identification-region-with-mean-independent-IV}, are not differentiable at some points. Rather, the max and min functions are directionally differentiable. Thus, to cover such instances, it is adequate to assume the directional differentiability of $\tau_L$ and $\tau_U$ in some strength. In the following, this study formalizes these features as the restrictions on the identification region.
\par
Suppose there exist $\theta:\mathcal{P} \to \Theta \subset \mathbb{B}$ for a Banach space $\mathbb{B}$ with norm $\normB{\cdot}$ and known functions $\tau_j:\Theta \to \mathbb{R}$ for $j \in \{L,U\}$ such that
\begin{equation}
    \inf\tau(P) = \tau_L\circ \theta (P) \quad \text{and} \quad \sup\tau(P) = \tau_U\circ\theta(P).
    \label{eq:form-of-bounds}
\end{equation}
One can take $\mathbb{B} = \mathbb{R}^k$ for an integer $k$ in most applications, but the formulation allows infinite-dimensional parameters, such as $\mathbb{B} = L_2(P)$. The intermediate parameter $\theta(P)$ is required to be pathwise differentiable, which will be discussed in \cref{sec:regularity-of-intermediate-parameter}. On the other hand, the functions $\tau_L(\theta)$ and $\tau_U(\theta)$ are assumed to be directionally differentiable in a certain strength, which is clarified in \cref{asum:bounds-of-identificaiton-region} below. In that statement, let $P_0 \in \mathcal{P}$ be the true data generating process and let $\theta_0 \coloneqq \theta(P_0)$ be the value of the intermediate parameter under true dgp. 
\begin{assumption}\label{asum:bounds-of-identificaiton-region}
    The functions $\tau_L:\Theta \to \mathbb{R}$ and $\tau_U:\Theta \to \mathbb{R}$ are continuous in $\theta$ and satisfies $\tau_L(\theta) \leq \tau_U(\theta)$ for all $\theta \in \Theta$ and $\tau_L(\theta_0) < \tau_U(\theta_0)$ at $\theta_0 \in \Theta$. Moreover, the following properties hold.
    \begin{assumptionenum}
        \item\label{assumptionenum:directional-differentiability} For each $j \in \{L,U\}$, the function $\tau_j:\Theta \to \mathbb{R}$ is \emph{Hadamard directionally differentiable} at $\theta_0 \in \Theta$. That is, there is a continuous map $\tau_{j,\theta_0}':\mathbb{B} \to \mathbb{R}$ such that
        \begin{equation*}
            \lim_{n\to\infty} \abs*{ \frac{\tau_j(\theta_0 + t_nb_n) - \tau_j(\theta_0)}{t_n} - \tau_{j,\theta_0}'(b) } = 0
        \end{equation*}
        for all sequences $\{b_n\} \subset \mathbb{B}$ and $\{t_n\} \subset \mathbb{R}_+$ such that $t_n \downarrow 0$, $b_n \to b \in \mathbb{B}$ as $n \to \infty$ and $\theta_0 + t_nb_n \in \Theta$ for all $n$.
        \item\label{assumptionenum:lipschitz-continuity} For each $j \in \{L,U\}$, the Hadamard directional derivative $\tau_{j,\theta_0}':\mathbb{B} \to \mathbb{R}$ is Lipschitz continuous; that is, there exists $C_{\tau_{j,\theta_0}'} > 0$ such that 
        \begin{equation*}
            \abs*{\tau_{j,\theta_0}'(b) - \tau_{j,\theta_0}'(\tilde{b})} \leq C_{\tau_{j,\theta_0}'}\normB*{b - \tilde{b}}
        \end{equation*}
        for all $b,\tilde{b} \in \mathbb{B}$.
        \item\label{assumptionenum:translation-equivariance} For every $v \in \mathbb{R}$, there exists $\tilde{v} \in \mathbb{B}$ such that $\tau_{j,\theta_0}'(b) + v = \tau_{j,\theta_0}'(b + \tilde{v})$ for all $b \in \mathbb{B}$ and $j \in \{L,U\}$.
    \end{assumptionenum}
\end{assumption}
\Cref{assumptionenum:directional-differentiability} gives the precise strength of directional differentiability needed in this paper. It is important to note that the Hadamard directional differentiability does not demand the linearity of the derivative $\tau_{j,\theta_0}':\mathbb{B} \to \mathbb{R}$. When it is linear, the function $\tau_{j}:\mathbb{B} \to \mathbb{R}$ is said to be \emph{Hadamard differentiable} at $\theta_0 \in \Theta$ \parencite[see][Proposition~2.1]{Fang2019}. Conventionally, Hadamard differentiability has often been utilized to derive the asymptotic distribution of statistics by applying the delta method. Concretely, if $\theta_n$ is an estimator for $\theta_0$ such that $\sqrt{n}(\theta_n - \theta_0) \leadsto \mathbb{G}$ and $\tau_j$ is Hadamard differentiable at $\theta_0$, then
\begin{equation}
    \sqrt{n}\dparen*{\tau_j(\theta_n) - \tau_j(\theta_0)} \leadsto \tau_{j,\theta_0}'(\mathbb{G}),
    \label{eq:delta-method}
\end{equation}
where $\leadsto$ indicates the weak convergence. Unfortunately, the Hadamard differentiability is too strong to cover max and min functions that often appear in the bounds of identification region as in \cref{ex:identification-region-with-mean-independent-IV}. Instead, this study only assumes the directional differentiability.
\par
Among the several notions of directional differentiability \parencite[see, e.g.,][]{Shapiro1990}, the Hadamard directional differentiability is the most suitable for this study. Especially, as emphasized in \textcite{Fang2019}, this directional differentiability allows for the direct generalization of the standard delta method: as long as $\tau_{j}(\theta)$ is Hadamard directionally differentiable at $\theta_0$, the weak convergence in \cref{eq:delta-method} continues to hold. Based on this generalized delta-method, \textcite{Fang2016,Ponomarev2022} developed the locally asymptotically minimax estimator for $\tau_j(\theta_0)$ itself. This study utilizes the generalized delta method to develop the LAM STR. In addition to the generalization of the delta method, the Hadamard directional differentiability is also useful, as it admits the chain rule \parencite[][Proposition~3.6]{Shapiro1990}.
\par
\cref{assumptionenum:lipschitz-continuity} requires that the Hadamard directional derivatives are Lipschitz continuous. \cref{assumptionenum:translation-equivariance} is imposed to simplify the analysis. When $\mathbb{B} = \mathbb{R}^k$, \cref{assumptionenum:translation-equivariance} is satisfied if $\tau_{j,\theta_0}'$ is translation equivariant. In this case, $\tau_{j,\theta_0}'(b + v\mathbf{1}) = \tau_{j,\theta_0}'(b) + v$ for every $b$ and $v$, where $\mathbf{1} = (1,\cdots,1) \in \mathbb{R}^k$. Particularly, $\max$ and $\min$ functions, which often appear in the bounds of partial identification, satisfy the translation equivariance.
\begin{examplecontinued}{ex:identification-region-with-empirical-evidence-alone}\label{ex:example1-second}
    In this example, let $\theta(P) = (\mu_1,\mu_0,p)^\intercal \in \Theta = [y_L,y_U]^2\times[0,1] \subset \mathbb{R}^3$, and let $\tau_L(\theta)$ and $\tau_U(\theta)$ be specified in the right-hand sides of \cref{eq:example1-bound}. As long as $\theta_0$ is an interior point of $\Theta$, \cref{asum:bounds-of-identificaiton-region} is satisfied. In particular, $\tau_L(\theta)$ and $\tau_U(\theta)$ are Hadamard differentiable, and the corresponding derivative can be calculated from the usual partial derivatives. For instance,
    \begin{equation*}
        \tau_{L,\theta_0}'(b) = \dparen*{\frac{\partial \tau_L}{\partial \theta}\Bigr|_{\theta = \theta_0}}^\intercal b
    \end{equation*}
    for $b \in \mathbb{R}^3$.
\end{examplecontinued}
\begin{examplecontinued}
{ex:experimental-study-with-non-random-sampling}
    Let $\theta(P) = (\mu_1,\mu_0,p)^\intercal \in \Theta = [y_L,y_U]^2\times[0,1] \subset \mathbb{R}^3$, and let $\tau_L(\theta)$ and $\tau_U(\theta)$ be given in the right-hand side of \cref{eq:example2-bound}. Again, \cref{asum:bounds-of-identificaiton-region} is satisfied as long as $\theta_0$ is an interior point. The functions $\tau_L(\theta)$ and $\tau_U(\theta)$ are Hadamard differentiable with the derivatives calculated from the partial derivative as in \cref{ex:identification-region-with-empirical-evidence-alone}.
\end{examplecontinued}
\begin{examplecontinued}{ex:identification-region-with-mean-independent-IV}
    Let $\theta(P) = (\mu_{1,1},\mu_{1,0},\mu_{0,1},\mu_{0,0},p_1,p_0)^\intercal \in \Theta = [y_L,y_U]^4\times[0,1]^2 \subset \mathbb{R}^6$. Moreover, specify $\tau_L(\theta)$ and $\tau_U(\theta)$ by the right-hand sides of \cref{eq:example3-bound}. Then, \cref{asum:bounds-of-identificaiton-region} holds as long as $\theta_0$ is in the interior of $\Theta$. To see the Hadamard directionally differentiability of $\tau_L(\theta)$, it is convenient to express the function as the composition of the max function and
    \begin{equation*}
        \psi_L(\theta) 
        = (\psi_{L,1}(\theta),\cdots,\psi_{L,4}(\theta))^\intercal 
        = \begin{pmatrix}
            \mu_{1,1}p_1 + y_L(1-p_1) - \mu_{0,1}(1-p_1) - y_Up_1\\
            \mu_{1,1}p_1 + y_L(1-p_1) - \mu_{0,0}(1-p_0) - y_Up_0\\
            \mu_{1,0}p_0 + y_L(1-p_0) - \mu_{0,1}(1-p_1) - y_Up_1\\
            \mu_{1,0}p_0 + y_L(1-p_0) - \mu_{0,0}(1-p_0) - y_Up_0
        \end{pmatrix}
        \in\mathbb{R}^4.
    \end{equation*}
    As in \cref{ex:identification-region-with-empirical-evidence-alone,ex:experimental-study-with-non-random-sampling}, one can easily observe that $\psi_L(\theta)$ is Hadamard differentiable with the derivative given by $\psi_{L,\theta_0}'(b) = \dparen*{\frac{\partial \psi_L}{\partial \theta^\intercal }|_{\theta = \theta_0}}^\intercal b$ for $b = (b_1,\cdots,b_6) \in \mathbb{R}^6$. Moreover, the $\max$ function is Hadamard directionally differentiable; the directional derivative of the $\max$ function at $\psi_L(\theta_0)$ is given by
    \begin{equation*}
        \max_{j \in B_L(\theta_0)} \psi_j \quad\text{for every}\quad (\psi_1,\dots,\psi_4) \in \mathbb{R}^4,
    \end{equation*}
    where $B_L(\theta_0) = \argmax_{j = 1,\cdots,4} \psi_{L,j}(\theta_0)$. By the chain rule for the Hadamard directionally differentiable maps, $\tau_L(\theta)$ is Hadamard directionally differentiable with derivative 
    \begin{equation*}
        \tau_{L,\theta_0}'(b) = \max_{j \in B_L(\theta_0)} \psi_{L,\theta_0}'(b)
    \end{equation*}
    Similarly, $\tau_U(\theta)$ is also Hadamard directionally differentiable.
\end{examplecontinued}

\subsection{Optimal Treatment Rule with Knowledge of True DGP}\label{sec:optimal-treatment-rule-with-knowledge-of-true-dgp}
If the social planner knew the true dgp $P_0$, there would be no need to consider statistical uncertainty, and hence, they would solve $\inf_{\delta \in [0,1]}\sup_{\tau \in \tau(P_0)} L(\delta,\tau)$. As has been observed in \textcite{Manski2007,Manski2009}, the minimax regret value is 
\begin{equation}
    \adjustlimits{\inf}_{\delta \in [0,1]}{\sup}_{\tau \in \tau(P_0)} L(\delta,\tau) = \frac{\tau_U^+(\theta_0)\tau_L^-(\theta_0)}{\tau_U^+(\theta_0) + \tau_L^-(\theta_0)},
    \label{eq:minimax-value-optimal-treatment-rule}
\end{equation}
and the optimal treatment rule is given by \cref{eq:optimal-treatment-rule}. Depending on the true intermediate parameter $\theta_0$, the future assignment given the optimal treatment rule may (not) be deterministic, and the corresponding minimax value may (not) be greater than zero. Following \textcite{Manski2009}, this study introduces a concept of error because of selecting an inferior treatment to interpret the regret value and rule. Specifically, refer to choosing treatment 0 when the ATE $\tau$ is positive as a Type 0 error. Conversely, refer to choosing treatment 1 when $\tau$ is negative as a Type 1 error. Further, recall that the welfare regret loss is given by $\tau(1-\delta)\cdot \ind\dbrace{\tau > 0} - \tau\delta\cdot \ind\dbrace{\tau \leq 0}$. The regret loss measures the loss from Type 0 or 1 errors: $\tau(1-\delta)$ corresponds to the loss from the Type 0 error, while $-\tau\delta$ corresponds to the loss given the Type 1 error.
\par
First, if $\tau_L(\theta_0) < \tau_U(\theta_0) \leq 0$, then $\kappa(\theta_0) = 0$, and hence, treatment 1 is assigned with probability 0. In this case, the ATE is certainly non-positive, and what matters is the Type 1 error. From the knowledge of the identification region $\tau(P_0)$, treatment 0 is never worse than treatment 1. Accordingly, the superior treatment can be chosen for sure, and the loss from the Type 1 error can be made zero. Similarly, if $0 \leq \tau_L(\theta_0) < \tau_U(\theta_0)$, then $\kappa(\theta_0) = 1$, and the minimax regret is again zero. This case is also interpreted similarly. Finally, if $\tau_L(\theta_0) < 0 < \tau_U(\theta_0)$, $\kappa(\theta_0) = \tau_U(\theta_0)/\dparen{\tau_U(\theta_0) - \tau_L(\theta_0)} \in (0,1)$, and the minimax regret value is greater than zero. In this case, positive and negative ATEs are possible, and hence, Type 0 and Type 1 errors matter. In particular, the worst-case loss from Type 0 error is $\tau_U(\theta_0)(1-\delta)$, and the worst-case loss from Type 1 error is $-\tau_L(\theta_0)\delta$. As $\delta$ increases from 0 to 1, the former loss decreases from $\tau_U(\theta_0)$ to $0$, while the latter loss increases from $0$ to $-\tau_L(\theta_0)$. As the welfare regret loss equally weighs the losses from both errors, the optimal treatment rule balances the worst-case losses from Type 0 and 1 errors; that is, $\tau_U(\theta_0)(1-\kappa(\theta_0)) = -\tau_L(\theta_0)\kappa(\theta_0)$.
\par
Under \cref{asum:bounds-of-identificaiton-region}, the optimal treatment rule $\kappa(\theta)$ inherits the properties of $\tau_L$ and $\tau_U$.
\begin{lemma}\label{lem:kappa-directionally-differentiable}
    If \cref{asum:bounds-of-identificaiton-region} is in force, the following hold.
    \begin{lemmaenum}
        \item\label{lemmaenum:directional-differentiability} The optimal treatment rule $\kappa: \Theta \to \mathbb{R}$ is Hadamard directionally differentiable at $\theta_0$ with the derivative given by
        \begin{equation}
            \kappa_{\theta_0}'(b) \coloneqq \frac{\tau_L^-(\theta_0)\cdot m_{\tau_U(\theta_0)}'\dparen*{\tau_{U,\theta_0}'(b)} - \tau_U^+(\theta_0)\cdot  m_{-\tau_L(\theta_0)}'\dparen*{-\tau_{L,\theta_0}'(b)}}{\dparen*{\tau_U^+(\theta_0) + \tau_L^-(\theta_0)}^2}
            \label{eq:kappa-directional-derivative}
        \end{equation}
         for all $b \in \mathbb{B}$. Here, $m_y':\mathbb{R}\to\mathbb{R}$ given by
         \begin{equation}
            m_{y}'(x) \coloneqq x \cdot \ind\dbrace{y > 0} + \max\dbrace{x,0}\cdot \ind\dbrace{y = 0} + 0 \cdot \ind\dbrace{y < 0}
            \label{eq:max-directional-derivative}
        \end{equation}
         denotes the Hadamard directional derivative of $x \mapsto \max\{x,0\}$ at $y \in \mathbb{R}$.
        \item\label{lemmaenum:lipschitz-continuity} The Hadamard directional derivative $\kappa_{\theta_0}':\mathbb{B} \to \mathbb{R}$ is Lipschitz continuous.
        \item\label{lemmaenum:translation-equivariance} For every $v \in \mathbb{R}$, there exists $\tilde{v} \in \mathbb{B}$ such that $\kappa_{\theta_0}'(b + \tilde{v}) = \kappa_{\theta_0}'(b) + v$ for all $b \in \mathbb{B}$.
    \end{lemmaenum}
\end{lemma}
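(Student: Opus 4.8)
The plan is to obtain all three parts from the corresponding properties of $\tau_{L,\theta_0}'$ and $\tau_{U,\theta_0}'$ in \cref{asum:bounds-of-identificaiton-region} by writing $\kappa$ as a composition of maps and invoking the chain rule for Hadamard directionally differentiable maps \parencite[][Proposition~3.6]{Shapiro1990}. For part~(i), I would first record that $x \mapsto \max\{x,0\}$ is Hadamard directionally differentiable with derivative $m_y'$ of \cref{eq:max-directional-derivative}, a one-line verification splitting on the sign of $y$. Since $\tau_U^+ = \max\{\cdot,0\}\circ \tau_U$ and $\tau_L^- = \max\{\cdot,0\}\circ(-\tau_L)$, the chain rule together with \cref{assumptionenum:directional-differentiability} shows that $\tau_U^+$ and $\tau_L^-$ are Hadamard directionally differentiable at $\theta_0$ with derivatives $b \mapsto m_{\tau_U(\theta_0)}'(\tau_{U,\theta_0}'(b))$ and $b \mapsto m_{-\tau_L(\theta_0)}'(-\tau_{L,\theta_0}'(b))$, respectively. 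Assembling these into the vector map $\Phi(\theta) = (\tau_U^+(\theta),\tau_L^-(\theta))$, which is Hadamard directionally differentiable componentwise, completes the inner layer.

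The crucial step is the outer map $g(a,c) = a/(a+c)$. The strict inequality $\tau_L(\theta_0) < \tau_U(\theta_0)$ guarantees that the denominator $D_0 \coloneqq \tau_U^+(\theta_0) + \tau_L^-(\theta_0)$ is strictly positive: were $D_0 = 0$, then $\tau_U(\theta_0) \le 0$ and $\tau_L(\theta_0) \ge 0$, contradicting $\tau_L(\theta_0) < \tau_U(\theta_0)$. Hence $g$ is smooth, and so fully Hadamard differentiable, on a neighborhood of $\Phi(\theta_0)$, with gradient $(\partial_a g,\partial_c g) = (c,-a)/(a+c)^2$ evaluated at $(\tau_U^+(\theta_0),\tau_L^-(\theta_0))$. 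A final application of the chain rule to $\kappa = g\circ\Phi$ then yields precisely \cref{eq:kappa-directional-derivative}. I expect the only care required in part~(i) is checking the chain-rule hypotheses (inner map directionally differentiable, outer map differentiable at the image point) and the bound $D_0 > 0$.

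For part~(ii), I would note that each $m_y'$ is nonexpansive: for fixed $y$ it equals the identity, the zero map, or $\max\{\cdot,0\}$, all of which are $1$-Lipschitz. Composing with the Lipschitz maps $\tau_{U,\theta_0}'$ and $\tau_{L,\theta_0}'$ from \cref{assumptionenum:lipschitz-continuity}, and using that $\tau_U^+(\theta_0)$, $\tau_L^-(\theta_0)$, and $1/D_0^2$ are finite constants, the triangle inequality gives $\lvert \kappa_{\theta_0}'(b) - \kappa_{\theta_0}'(\tilde{b})\rvert \le D_0^{-2}\,(\tau_L^-(\theta_0)\,C_{\tau_{U,\theta_0}'} + \tau_U^+(\theta_0)\,C_{\tau_{L,\theta_0}'})\,\lVert b - \tilde{b}\rVert_{\mathbb{B}}$, which is the claimed bound.

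For part~(iii), I would use \cref{assumptionenum:translation-equivariance} to select, for a prescribed scalar shift $s$, a single $\tilde{v}$ with $\tau_{j,\theta_0}'(b+\tilde{v}) = \tau_{j,\theta_0}'(b)+s$ for both $j \in \{L,U\}$ simultaneously, substitute into \cref{eq:kappa-directional-derivative}, and organize the verification by the signs of $\tau_U(\theta_0)$ and $\tau_L(\theta_0)$. In the substantive case $\tau_L(\theta_0) < 0 < \tau_U(\theta_0)$, where the optimal rule of \cref{eq:optimal-treatment-rule} is genuinely nondeterministic, both indicators reduce $m_y'$ to the identity, so $\kappa_{\theta_0}'$ is linear and the shift produces $\kappa_{\theta_0}'(b+\tilde{v}) = \kappa_{\theta_0}'(b) + s/D_0$; choosing $s = D_0 v$ then delivers the claim. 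The main obstacle is the remaining sign configurations, in which one of $m_{\tau_U(\theta_0)}'$ or $m_{-\tau_L(\theta_0)}'$ is the kinked map $\max\{\cdot,0\}$ and $\kappa_{\theta_0}'$ is either constant or piecewise linear; these must be handled by a direct case-by-case computation, and I expect this case analysis to be the most delicate part of the argument.
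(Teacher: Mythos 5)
Parts (i) and (ii) of your proposal are correct and are essentially the paper's own argument: the paper likewise writes $\kappa = \chi\circ\phi\circ\tau_B$ with $\tau_B(\theta) = (\tau_U(\theta),-\tau_L(\theta))$, $\phi$ the componentwise positive part, and $\chi(x_1,x_2) = x_1/(x_1+x_2)$, uses $\tau_L(\theta_0)<\tau_U(\theta_0)$ to keep the denominator away from zero, invokes Shapiro's chain rule, and proves Lipschitz continuity by exactly your triangle-inequality bound; your grouping $g\circ\Phi$ versus the paper's $\chi\circ\phi\circ\tau_B$ is immaterial.

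The genuine problem is part (iii), and it sits precisely in the cases you defer. That case-by-case verification cannot be completed, because the claimed identity is false outside the configuration $\tau_L(\theta_0)<0<\tau_U(\theta_0)$. Concretely: if $\tau_L(\theta_0)=0<\tau_U(\theta_0)$, then \cref{eq:kappa-directional-derivative} reduces to $\kappa_{\theta_0}'(b) = \min\{\tau_{L,\theta_0}'(b),0\}/\tau_U(\theta_0) \leq 0$ for all $b$, and since Hadamard directional derivatives vanish at $b=0$ we have $\kappa_{\theta_0}'(0)=0$; hence for $v>0$ the requirement $\kappa_{\theta_0}'(0+\tilde v) = \kappa_{\theta_0}'(0)+v = v>0$ has no solution $\tilde v\in\mathbb{B}$. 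Likewise, if $\tau_U(\theta_0)<0$ or $\tau_L(\theta_0)>0$, then $\kappa_{\theta_0}'\equiv 0$, so \cref{lemmaenum:translation-equivariance} would force $v=0$. No direct computation rescues these configurations. For what it is worth, the paper's own proof has the same hole: it asserts that $m_y'$ is ``translation equivariant,'' which is true only for $y>0$ (where $m_y'$ is the identity); for $y=0$ it is $\max\{\cdot,0\}$ and for $y<0$ the zero map, neither of which is equivariant. So the published argument, like yours, is valid exactly in the interior case $\tau_L(\theta_0)<0<\tau_U(\theta_0)$ --- where, incidentally, your choice $s = D_0 v$ with $D_0 = \tau_U^+(\theta_0)+\tau_L^-(\theta_0)$ is the right one, and the paper's $\bar v = (\tau_U^+(\theta_0)+\tau_L^-(\theta_0))^{-1}v$ appears to be a typo for $(\tau_U^+(\theta_0)+\tau_L^-(\theta_0))v$. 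A correct write-up must therefore either add the hypothesis $\tau_L(\theta_0)<0<\tau_U(\theta_0)$ to part (iii) or weaken its conclusion, rather than promise a case analysis that does not exist.
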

Importantly, $\kappa(\theta)$ is also Hadamard directionally differentiable at $\theta_0$. The Hadamard directional derivative is derived by applying the chain rule for Hadamard directionally differentiable maps. Note that $\kappa(\theta)$ may not be differentiable at $\theta_0$ even if $\tau_L(\theta)$ and $\tau_U(\theta)$ are Hadamard differentiable at $\theta_0$. For instance, even if $\tau_U(\theta)$ is Hadamard differentiable at $\theta_0$, $\max\{\tau_U(\theta_0),0\}$ is not Hadamard differentiable when $\tau_U(\theta_0) = 0$; so is $\kappa(\theta_0)$. See also \pageref{ex:example2-third}
\begin{examplecontinued}
{ex:identification-region-with-empirical-evidence-alone}
\label{ex:example1-third}
    This example always have $\tau_L(\theta) \leq 0$ and $\tau_U(\theta) \geq 0$ for all $\theta \in \Theta$. Hence, the optimal treatment rule can be simplified as $\kappa(\theta_0) = \tau_U(\theta_0)/(y_U - y_L)$. As is clear from the simplified expression, $\kappa(\theta_0)$ is Hadamard differentiable with derivative $\kappa_{\theta_0}'(b) = \tau_{U,\theta_0}'(b)/(y_U - y_L)$.
\end{examplecontinued}
\begin{examplecontinued}
{ex:experimental-study-with-non-random-sampling}
\label{ex:example2-third}
    We separate the arguments into cases. If $\tau_L(\theta_0) < 0 < \tau_U(\theta_0)$, $\tau_U(\theta_0)$ and $\tau_L(\theta_0)$ are Hadamard differentiable at $\theta_0$, and so are $\tau_U^+(\theta_0)$ and $\tau_L^-(\theta_0)$. Hence, $\kappa(\theta_0)$ is also Hadamard differentiable, and $\kappa_{\theta_0}'(b)$ can be obtained from \cref{eq:kappa-directional-derivative,eq:max-directional-derivative}. If $\tau_L(\theta_0) = 0 < \tau_U(\theta_0)$, $\tau_L^-(\theta_0)$ is only Hadamard directionally differentiable at $\theta_0$, and so is $\kappa(\theta_0)$. In particular, \cref{eq:kappa-directional-derivative,eq:max-directional-derivative} imply 
    \begin{equation*}
        \kappa_{\theta_0}'(b) = \frac{1}{\tau_U(\theta_0)}\min\dbrace*{ \tau_{L,\theta_0}'(b),~0 }.
    \end{equation*}
    Finally, if $0 < \tau_L(\theta_0) < \tau_U(\theta_0)$, $\kappa_{\theta_0}'(b) = 0$. For other cases, it is possible to calculate $\kappa_{\theta_0}'(b)$ similarly.
\end{examplecontinued}

\section{Framework of Local Asymptotic Analysis}
\label{sec:framework-of-local-asymptotic-analysis}
It is often challenging to exactly solve the finite-sample problem \cref{eq:finite-sample-maximum-regret} for general data-generating processes. Instead, in the same spirit of \textcite{Hirano2009}, this study conducts the local asymptotic analysis and then derives asymptotically minimax STRs. This section discusses the framework of the local asymptotic analysis. Specifically, \cref{sec:local-statistical-experiments} setups local statistical experiments that are tailored to i.i.d. observations. \cref{sec:regularity-of-intermediate-parameter} gives the precise smoothness required for the intermediate parameter. Finally, \cref{sec:local-asymptotic-minimaxity} formally defines the notion of asymptotic minimax optimality employed in this study.
\subsection{Local Statistical Experiments}
\label{sec:local-statistical-experiments}
The study considers a sequence of \emph{local} statistical experiments in the following asymptotic analysis. Although the original statistical experiment involves all of the possible dgps in $\mathcal{P}$, the fixed true dgp $P_0$ is learnable with arbitrary precision as the sample size approaches infinity. Thus, from the perspective of asymptotic analysis, the asymptotic analysis, considering dgps that are too distant from $P_0$ may be meaningless. Instead, the local asymptotic analysis focuses only on dgps that are statistically difficult to distinguish from the true dgp $P_0$ to approximate the finite-sample situation well. The next concrete construction of the sequence of local experiments closely follows \textcite{Hirano2009,Vaart1991a}.
\par
Given the fixed true dgp $P_0 \in \mathcal{P}$, let $\mathcal{P}(P_0)$ be a collection of paths $(0, \epsilon) \ni t \mapsto P_{t} \in \mathcal{P}$ such that 
\begin{equation}
    \int \dbrack*{ \frac{dP_{t}^{1/2} - dP_0^{1/2}}{t} - \frac{1}{2}hdP_0^{1/2} }^2 \to 0 \quad \text{as} \quad t \downarrow 0,
    \label{eq:DQM}
\end{equation}
for some measurable function $h: \mathcal{Z} \to \mathbb{R}$, which is often called the \emph{score function}. It is possible to view each path $t \mapsto P_t$ as a one-dimensional parametric submodel $\dbrace{P_t : t \in (0,\epsilon)} \subset \mathcal{P}$, in which $t$ is treated as the only parameter. Gathering all score functions associated with paths in $\mathcal{P}(P_0)$ induces the \emph{tangent set} $T(P_0)$, which is the collection of score functions. It is known that $\int h dP_0 = 0$ and $\int h^2 dP_0 < \infty$ for any score function satisfying \cref{eq:DQM}. Thus, regarding $P_0$-almost surely equal score functions as an equivalence class, $T(P_0)$ can be viewed as a subset of $L_2(P_0)$, which is the separable Hilbert space with the inner product $\dangle{h,g} \coloneqq \int hg dP_0$ and norm $\norm{h}_{2,P_0} \coloneqq \sqrt{\dangle{h,h}}$. To simplify the exposition, this study posits that the tangent set is closed with respect to addition and scalar multiplication.\footnote{The result of this paper can be extended to the case where the tangent set is a convex cone with some appropriate modifications of proofs. See, e.g., \textcite{Vaart1989,Ponomarev2022}.} 
\begin{assumption}\label{asum:tangent-set}
    The tangent set $T(P_0)$ is a linear subspace of the separable Hilbert space $L_2(P_0)$. 
\end{assumption}
We introduce some notations to be used hereafter. For each path $t\mapsto P_t$ with score function $h \in T(P_0)$, let $P_{1/\sqrt{n},h} \in \mathcal{P}$ denote the value of the path evaluated at $t = 1/\sqrt{n}$. Moreover, let $\mathbb{P}_{n,h} = P_{1/\sqrt{n},h}^n$ and let $\mathbb{P}_{n,0} = P_0^n$. Correspondingly, the expectation with respect to $\mathbb{P}_{n,h}$ is denoted by $\expect_{n,h}[\cdot]$. Further, for every path with score function $h$, write $\theta_n(h) = \theta(P_{1/\sqrt{n},h})$.
\par
An important consequence of the property \cref{eq:DQM} is the local asymptotic normality of the statistical experiments $\mathcal{E}_n = (\mathcal{Z}^n,\mathcal{B}(\mathcal{Z}^n),\mathbb{P}_{n,h}:h \in T(P_0))$. Specifically, for each $h \in T(P_0)$, the log-likelihood ratio admits a linear expansion such that
\begin{equation}
    \log\diff{\mathbb{P}_{n,h}}{\mathbb{P}_{n,0}} = \frac{1}{\sqrt{n}}\sum_{i=1}^n h(Z_i) - \frac{1}{2}\norm{h}_{2,P_0}^2 + o_{p}(1),
    \label{eq:loglikelihood-expansion}
\end{equation}
where $\Delta_{n,h} = \sum_{i=1}^n h(Z_i)/\sqrt{n}$ converges weakly to $N(0,\norm{h}_{2,P_0}^2)$ under $\mathbb{P}_{n,0}$. This local asymptotic normality further implies the convergence of experiments $\mathcal{E}_n$ to a certain limit experiment $\mathcal{E}$. Under \cref{asum:tangent-set}, a useful characterization of $\mathcal{E}$ is available. \cref{asum:tangent-set} implies that $\overline{T(P_0)}$ itself is one of separable Hilbert spaces. Then, there exists a complete orthonormal basis $\{h_1,h_2,\cdots\} \subset T(P_0)$, and every $h \in T(P_0)$ can be written as $h = \sum_{j=1}^\infty \dangle{h,h_j}h_j$. In this case, the following convergence of experiments holds:
\begin{equation*}
    \mathcal{E}_n \to \mathcal{E} = \dparen*{\mathbb{R}^\infty, \mathcal{B}(\mathbb{R}^\infty), N_\infty((\dangle{h,h_j}),\mathbf{1}): h \in T(P_0)}.
\end{equation*}
The latter experiment corresponds to observe a single draw $\Xi = (\Xi_1,\Xi_2,\cdots) \in \mathbb{R}^\infty$ such that $\Xi_1,\Xi_2,\cdots$ are independent and $\Xi_j \sim N(\dangle{h,h_j},1)$ for each $j \in \mathbb{N}$. For ease of notation, write $\mathbb{P}_h$ for $N_\infty((\dangle{h,h_j}),\mathbf{1})$ for every $h \in T(P_0)$. Accordingly, the expectation with respect to $\mathbb{P}_h$ is denoted by $\expect_{h}[\cdot]$. As one expects, this limit experiment $\mathcal{E}$ is more tractable relative to the original experiment $\mathcal{E}_n$. Finally, the convergence of experiments culminates in the following asymptotic representation theorem.
\begin{lemma}[{\cite[Theorem~3.1]{Vaart1991a}}]\label{lem:asymptotic-representation-theorem}
    Suppose a sequence of experiments $\mathcal{E}_n$ converges to a dominated experiment $\mathcal{E}$. Let $T_n:\mathcal{Z}^n \to \mathbb{D}$ be a statistic with values in a metric space $\mathbb{D}$. Suppose the sequence $T_n \overset{h}{\leadsto} Q_h$ for every $h \in T(P_0)$ and that there exists a complete separable subset $\mathbb{D}_0 \subset \mathbb{D}$ such that $Q_h(\mathbb{D}_0) = 1$ for every $h$. Then there exists a randomized statistic $T$ in $\mathcal{E}$ such that $T_n \overset{h}{\leadsto} T$ for every $h$.
\end{lemma}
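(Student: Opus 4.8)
The plan is to reproduce the Le~Cam--theoretic argument behind \textcite[Theorem~3.1]{Vaart1991a}, which the lemma quotes verbatim; the paper merely invokes it, but a proof proceeds as follows. First I would restate the convergence of the dominated experiments $\mathcal{E}_n \to \mathcal{E}$ as joint weak convergence of the likelihood-ratio processes under the baseline measure: for every finite collection $h_1,\ldots,h_k \in T(P_0)$,
\[ \dparen*{ \diff{\mathbb{P}_{n,h_1}}{\mathbb{P}_{n,0}},\ \ldots,\ \diff{\mathbb{P}_{n,h_k}}{\mathbb{P}_{n,0}} } \overset{0}{\leadsto} (L_{h_1},\ldots,L_{h_k}), \qquad L_{h_j} \coloneqq \diff{\mathbb{P}_{h_j}}{\mathbb{P}_0}, \]
where in the present Gaussian-shift limit the $L_{h_j}$ are explicit exponentials of the canonical coordinates $\Xi$. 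Next I would append the statistic and study the joint laws of $(T_n, L_{n,h_1},\ldots,L_{n,h_k})$ under $\mathbb{P}_{n,0}$, writing $L_{n,h}\coloneqq \diff{\mathbb{P}_{n,h}}{\mathbb{P}_{n,0}}$. Because $T_n \overset{0}{\leadsto} Q_0$ with $Q_0$ concentrated on the complete separable set $\mathbb{D}_0$, the $T_n$-marginals are uniformly tight; combined with the convergence of the likelihood ratios this makes the joint laws tight, so Prohorov's theorem furnishes a subsequence along which $(T_n,(L_{n,h_j})_j)$ converges weakly under $\mathbb{P}_{n,0}$ to a coupling $(S,(L_{h_j})_j)$ with the prescribed $L$-marginal.

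I would then identify the shifted limits by a change of measure (Le~Cam's third lemma): along the same subsequence the weak limit of $T_n$ under $\mathbb{P}_{n,h}$ equals the law of $S$ computed under the density $L_h$ relative to the baseline limit, namely $A \mapsto \expect_0\dbrack*{\ind\dbrace*{S \in A}\, L_h}$. Since $T_n \overset{h}{\leadsto} Q_h$ by hypothesis, this forces that shifted law to be exactly $Q_h$, so the single coupling $(S,(L_h)_h)$ reproduces every target law through the corresponding change of measure. Finally, exploiting that $\mathbb{D}_0$ is Polish and that $\mathcal{E}$ is dominated, I would disintegrate the conditional law of $S$ given the canonical observation $\Xi$ to obtain a regular conditional distribution, i.e.\ a Markov kernel $T(\Xi,\cdot)$ --- equivalently a measurable function of $\Xi$ and an independent uniform randomization. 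This kernel is the desired randomized statistic $T$ in $\mathcal{E}$, and by construction its $\mathbb{P}_h$-law is the $\mathbb{P}_h$-marginal of $S$, namely $Q_h$.

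The hard part is producing \emph{one} randomized statistic that reproduces $Q_h$ for \emph{all} $h \in T(P_0)$ simultaneously, rather than along a subsequence or for finitely many directions. To upgrade from subsequences I would run a diagonal argument over the countable orthonormal basis $\{h_j\}$ generating $T(P_0)$, and then argue that the joint limit is uniquely pinned down by the fixed $L$-marginals of Step~1 together with the prescribed laws $\{Q_h\}$ --- the likelihood ratios $L_h$ of the Gaussian-shift experiment separate the relevant $\sigma$-field, so the coupling, and hence $T$, is independent of the subsequence and the full sequence converges. The delicate ingredients are precisely this uniqueness of the coupling and the disintegration on the Polish target space; these are the technical core assembled in \textcite{Vaart1991a}, on which the argument relies.
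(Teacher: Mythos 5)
This lemma is not proved in the paper at all: it is quoted verbatim as Theorem~3.1 of \textcite{Vaart1991a}, so the benchmark is the standard Le~Cam argument (which the paper's own \cref{lem:rimr-asymptotic-representation} replays in a special case, using subnets, Prohorov's theorem, and Le~Cam's third lemma). Your skeleton matches that argument in its sound parts: the LAN expansion \cref{eq:loglikelihood-expansion} gives weak convergence of the likelihood ratios under $\mathbb{P}_{n,0}$; asymptotic tightness and asymptotic measurability of $T_n$ follow from its weak convergence to the tight law $Q_0$ (tight because $Q_0(\mathbb{D}_0)=1$ with $\mathbb{D}_0$ Polish), so Prohorov yields subsequential joint limits $(S,(L_{h_j})_j)$; contiguity plus Le~Cam's third lemma identifies the shifted limits as the $L_h$-reweighted laws, which the hypothesis $T_n \overset{h}{\leadsto} Q_h$ forces to equal $Q_h$; and since $\Xi$ is recoverable from the likelihood process of a basis ($\Xi_j = \log L_{h_j} + \tfrac12$), disintegrating the law of $S$ given $\Xi$ on the Polish set $\mathbb{D}_0$ yields a regular conditional distribution realizable as $T(\Xi,U)$ with $U$ an independent uniform. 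All of this is the right architecture.

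The genuine gap is in your final step, which both misdiagnoses the difficulty and rests on a false claim. The coupling $(S,(L_h)_h)$ is a joint law and is \emph{not} pinned down by its marginals: knowing each $Q_h$ and each $\mathcal{L}(L_h)$ constrains marginals only, so the likelihood ratios ``separating the relevant $\sigma$-field'' does not make the conditional kernel unique, different subsequences may well produce different couplings, and the joint sequence $(T_n, L_{n,h})$ need not converge along the full sequence --- only the $T_n$-marginals converge, by hypothesis. Fortunately, no uniqueness and no full-sequence joint convergence are needed: the conclusion asserts only that $T$ has law $Q_h$ under $\mathbb{P}_h$, and since $Q_h$ is by hypothesis the limit of the \emph{full} sequence $T_n$ under $\mathbb{P}_{n,h}$, any subsequential construction with these marginals gives $T_n \overset{h}{\leadsto} T$ for the full sequence automatically. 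The actual work in getting one kernel for uncountably many $h$ is different from what you describe: after diagonalizing so that $(T_n, \Delta_{n,h_1},\dots,\Delta_{n,h_m})$ converges jointly for the countable basis, linearity $\Delta_{n,\sum_j c_j h_j} = \sum_j c_j\Delta_{n,h_j}$ extends joint convergence to the linear span, and the isometry $\operatorname{Var}_{\mathbb{P}_{n,0}}\dparen{\Delta_{n,h}-\Delta_{n,g}} = \norm{h-g}_{2,P_0}^2$, uniform in $n$, extends it to all of $T(P_0)$ by approximation, after which Le~Cam's third lemma applies direction by direction along the single subsequence (alternatively, index by finite subsets of $T(P_0)$ and pass to subnets, as the paper does in its own auxiliary lemma). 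With that repair, your outline becomes the standard proof.
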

In the statement of \cref{lem:asymptotic-representation-theorem}, the randomized statistic $T$ refers to a measurable map $T:\mathbb{R}^\infty \times [0,1] \ni (\Xi,U) \mapsto T(\Xi,U) \in \mathbb{D}$, where $\Xi \sim \mathbb{P}_h$ and $U \sim \mathrm{Unif}([0,1])$ independently of $\Xi$. Moreover, $\overset{h}{\leadsto}$ means the weak convergence under $\mathbb{P}_{n,h}$ for $h \in T(P_0)$. \cref{lem:asymptotic-representation-theorem} allows for guessing a statistic with a certain asymptotic property by the following procedures; (i) focus on the limit experiment $\mathcal{E}$; (ii) obtain the statistic $T$ with the desired property in $\mathcal{E}$; (iii) construct a sequence of statistics $T_n$ that converges weakly to $T$. 
\subsection{Regularity of Intermediate Parameter}
\label{sec:regularity-of-intermediate-parameter}
For the intermediate parameter $\theta:\mathcal{P} \to \mathbb{B}$ introduced in \cref{sec:identification-region-ATE}, assume its pathwise differentiability.
\begin{assumption}\label{asum:theta-differentiable}
    The parameter $\theta:\mathcal{P} \to \mathbb{B}$ is pathwise differentiable at $P_0$ relative to the tangent set $T(P_0)$. That is, there exists a continuous linear map $\Dot{\theta}_0:T(P_0) \to \mathbb{B}$ such that
    \begin{equation*}
        \frac{\theta(P_{t}) - \theta(P_0)}{t} \to \Dot{\theta}_0(h) \quad\text{as}\quad t \downarrow 0
    \end{equation*}
    for each path $t \mapsto P_t$ in $\mathcal{P}(P_0)$ with score function $h \in T(P_0)$.
\end{assumption}
The pathwise differentiability especially implies $\sqrt{n}\dparen{\theta_n(h) - \theta_0} \to \Dot{\theta}_0(h)$ as $n \to \infty$. The pathwise derivative $\Dot{\theta}_0(h)$ must be linear maps. When $\mathbb{B}$ is the Euclidean space, the Riesz representation theorem gives a useful characterization of the pathwise derivative. Specifically, as $\overline{T(P_0)}$ is a Hilbert space, the theorem implies that there exists $\tilde{\theta}_0 = (\tilde{\theta}_{0,1},\cdots,\tilde{\theta}_{0,k})^\top \in \overline{T(P_0)}^k$ such that $\Dot{\theta}_{0,j}(h) = \dangle{\tilde{\theta}_{0,j},h}$ for all $h \in \overline{T(P_0)}$, where $\Dot{\theta}_{0,j}$ denotes the $j$-th element of $\Dot{\theta}_0$. The function $\tilde{\theta}_0$, thus constructed, is often called \emph{efficient influence function}. More generally, if $\mathbb{B}$ is an arbitrary Banach space, an adjoint map can characterize the pathwise derivative. In particular, there exists $\Dot{\theta}_0^*:\mathbb{B}^* \to \overline{T(P_0)}$ determined by $\dangle{\Dot{\theta}_0^*b^*, h} = b^*(\Dot{\theta}_0(h))$ for every $h \in T(P_0)$ and $b^* \in \mathbb{B}^*$.
\par
The pathwise differentiability assumption is often employed in the study of efficient estimators \parencite[see, e.g.,][]{Vaart1998,Vaart1996}. Among the results obtained in the study, one of the most important is the convolution theorem that characterizes the limit law of regular estimators. An estimator $\theta_n:\mathcal{Z}^n \to \Theta \subset \mathbb{B}$ is said to be \emph{regular} at $P_0$ for estimating $\theta_0$ if 
\begin{equation*}
    \sqrt{n}\dparen*{\theta_n - \theta_n(h)} \overset{h}{\leadsto} G \quad\text{for every}\quad h \in T(P_0),
\end{equation*}
where $G$ is a tight random element in $\mathbb{B}$ such that its law does not depend on $h$. In words, the limit law of the standardized statistics $\sqrt{n}\dparen{\theta_n - \theta_n(h)}$ are not affected by small perturbations of the dgp. Given the pathwise differentiability of $\theta(P)$, the convolution theorem \parencite[e.g.,][Theorem~3.11.2]{Vaart1996} ensures that the limit law of any regular estimator $\theta_n$ can be expressed as in
\begin{equation*}
    \mathcal{L}(G) = \mathcal{L}(\mathbb{G}_0 + W),
\end{equation*}
where $\mathbb{G}_0$ and $W$ are independent, tight, Borel measurable random elements in $\mathbb{B}$ such that $b^*\mathbb{G}_0 \sim N(0,\norm{\Dot{\theta}_0^*b^*}_{2,P_0}^2)$ for $b^* \in \mathbb{B}^*$, and the support of $\mathbb{G}_0$ is the closure of $\Dot{\theta}_0(T(P_0))$. The random element $W$ can be interpreted as an independent noise. Since adding an independent noise solely increases the variance, the noise injection is usually undesirable. Thus, the \emph{best} regular estimator $\hat{\theta}_n:\mathcal{Z}^n \to \Theta$ is defined as the regular estimator whose limit law equals $\mathbb{G}_0$. That is, the limit law of best regular estimators is maximally concentrated around zero. Hence, the best regularity is often considered one of the desirable properties for an estimator. In parametric models, typical examples of the best regular estimator include the maximum likelihood estimator and Bayesian posterior mean \parencite[see, e.g.,][]{Ibragimov1981}.
\par
The availability of a best regular estimator $\hat{\theta}_n$ for $\theta_0$ immediately provides a best regular estimator for $f(\theta_0)$ as far as $f: \Theta \to \mathbb{R}$ is an arbitrary Hadamard differentiable map. In this case, the plug-in estimator (i.e., $f(\hat{\theta}_n)$) is a best regular estimator for $f(\theta_0)$. Indeed, it is easy to show that the limit law of any regular estimator $f_n$ for $f(\theta_0)$ can be written as 
\begin{equation*}
    \sqrt{n}\dparen*{ f_n - f(\theta_n(h)) } \overset{h}{\leadsto} f_{\theta_0}'(\mathbb{G}_0) + \tilde{W} \quad \text{for every} \quad h \in T(P_0)
\end{equation*}
\parencite[see, e.g.,][Theorem~3.11.2]{Vaart1996}. Again, $\tilde{W}$ is a noise term independent of $\mathbb{G}_0$, and hence the limit law of best regular estimator for $f(\theta_0)$ is $f_{\theta_0}'(\mathbb{G}_0)$. Using the standard delta method, $f(\hat{\theta}_n)$ is the best regular. 
\par
When $f$ is not Hadamard differentiable but directionally differentiable, the plug-in estimator is no longer the best regular estimator. On the contrary, if $f$ is not Hadamard differentiable, there is no best regular estimator for $f(\theta_0)$ \parencite{Hirano2012,Vaart1991}. Therefore, in the current setting, there exists no estimators for $\tau_L(\theta_0)$, $\tau_U(\theta_0)$, and $\kappa(\theta_0)$ that are efficient in terms of best regularity.
\begin{examplecontinued}{ex:identification-region-with-empirical-evidence-alone}
\label{ex:example1-fourth}
    In this example, the tangent set is equal to $L_2^0(P_0) \coloneqq \dbrace{ h \in L_2(P_0) : \int hdP_0 = 0 }$, and $\theta(P_t)$ is pathwise differentiable at $P_0$ relative to this tangent set. The best regular estimator for $\theta_0$ is
    \begin{equation*}
        \hat{\theta}_n
        = \dparen*{ \hat{\mu}_{1,n},\hat{\mu}_{0,n},\hat{p}_n }
        = \dparen*{ \frac{\sum_{i=1}^n Y_iD_i}{\sum_{i=1}^n D_i},~\frac{\sum_{i=1}^nY_i(1-D_i)}{\sum_{i=1}^n (1-D_i)},~\frac{\sum_{i=1}^n D_i}{n} },
    \end{equation*}
    and the limit law of $\sqrt{n}\dparen{\hat{\theta}_n - \theta_0}$ is the mean-zero normal distribution whose covariance matrix is the diagonal matrix with its diagonal elements given by $\sigma_{Y|D=1}^2/p_0$, $\sigma_{Y|D=0}^2/(1-p_0)$, and $p_0(1-p_0)$, where $\sigma_{Y|D=d}^2 = \expect_{P_0}[\dparen{Y - \mu_{d,0}}^2|D=d]$ for $d \in \{0,1\}$.
\end{examplecontinued}
\begin{examplecontinued}{ex:experimental-study-with-non-random-sampling}
\label{ex:example2-fourth}
    As in \cref{ex:identification-region-with-empirical-evidence-alone}, the tangent set is $L_2^0(P_0)$, and $\theta(P_t)$ is pathwise differentiable relative to this tangent set.
    The best regular estimator is 
    \begin{equation*}
        \hat{\theta}_n = (\hat{\mu}_{1,n},\hat{\mu}_{0,n},\hat{p}_n) = \dparen*{ \frac{\sum_{i=1}^n Y_iD_iS_i}{\sum_{i=1}^nD_iS_i},~\frac{\sum_{i=1}^nY_i(1-D_i)S_i}{\sum_{i=1}^n (1-D_i)S_i},\frac{\sum_{i=1}^n S_i}{n} },
    \end{equation*}
    and $\mathbb{G}_0$ is distributed according to the mean-zero normal distribution whose covariance matrix is the diagonal matrix with its diagonal elements given by $\sigma_{Y|D=1,S=1}^2/(\pi_0p_0)$, $\sigma_{Y|D=0,S=1}^2/((1-\pi_0)p_0)$, and $p_0(1-p_0)$, where $\sigma_{Y|D=d,S=1}^2 = \expect_{P_0}[(Y-\mu_{d,0})^2|D=d,S=1]$ for $d \in \{0,1\}$ and $\pi_0 = P_0(D=1|S=1)$.
\end{examplecontinued}
\subsection{Local Asymptotic Minimaxity}
\label{sec:local-asymptotic-minimaxity}
Here, we precisely define our asymptotic minimax criterion (i.e., the LAM). First, define the criterion for a general risk function. For an STR $\delta_n$ and a dgp $P \in \mathcal{P}$, let $R_n(\delta_n,P) \geq 0$ be the \emph{risk function} that quantifies the loss from adopting the STR $\delta_n$ when the dgp is $P$. The (local) asymptotic maximum risk of $\delta_n$ with respect to $R_n$ is defined by
\begin{equation}
    \adjustlimits{\sup}_{I \subset T(P_0)}{\liminf}_{n \to \infty}\sup_{h \in I} R_n(\delta_n,P_{1/\sqrt{n},h}),
    \label{eq:local-asymptotic-maximum-risk}
\end{equation}
where the first supremum is taken with respect to arbitrary finite subsets $I$ of $T(P_0)$. This notion of local asymptotic maximum risk has been adopted in \textcite{Vaart1996,Vaart1998,Vaart1991a,Hirano2009,Fang2016,Ponomarev2022}. For the motivation of this notion, refer to \textcite{Fang2016}. The STR $\delta_n$ is said to be \emph{locally asymptotically minimax (LAM)} with respect to the risk function $R_n$ if it minimizes \cref{eq:local-asymptotic-maximum-risk}. Note that the LAM property depends on the choice of the risk function.
\par
Now, we specify the risk function employed. For an STR $\delta_n$ and $h \in T(P_0)$, its \emph{identifiable maximum risk (IMR)} is given by\footnote{This name originally comes from \textcite{Song2014a}.}
\begin{equation}
    \imr_n(\delta_n,h) 
    % \coloneqq \max\dbrace[\Big]{ \tau_L^-\dparen*{ \theta_n(h) }\expect_{n,h*}[\delta_n],~ \tau_U^+\dparen*{ \theta_n(h) } \expect_{n,h*}[1 - \delta_n] },
    \coloneqq \sup_{\tau \in \tau\dparen*{P_{1/\sqrt{n},h}}} \expect_{n,h*}\dbrack*{ L \dparen*{ \delta_n,\tau } }.
    \label{eq:identifiable-maximum-risk}
\end{equation}
where $\expect_{n,h*}[\cdot]$ denotes the inner integral with respect to $\mathbb{P}_{n,h}$. The right-hand side of \cref{eq:identifiable-maximum-risk} measures the worst-case welfare regret incurred by not knowing the ATE even with the knowledge of the dgp $P_{1/\sqrt{n},h}$, and it has the same structure with that of \cref{eq:finite-sample-maximum-regret}. Note that the IMR can be expressed as 
\begin{equation}
    \imr_n(\delta_n,h) 
    = \max\dbrace[\Big]{ \tau_L^-\dparen*{ \theta_n(h) }\expect_{n,h*}[\delta_n],~ \tau_U^+\dparen*{ \theta_n(h) } \dparen*{1 - \expect_{n,h*}[\delta_n] } }.
    \label{eq:identifiable-maximum-risk-expansion}
\end{equation}
The LAM based on the IMR is unsatisfactory. It is easy to show that any consistent estimator for $\kappa(\theta_0)$ is LAM with respect to the IMR. This includes not only the \emph{plug-in STR}; that is, 
\begin{equation}
    \delta_n^{\plugin} \coloneqq \kappa\dparen{\hat{\theta}_n},
    \label{eq:plug-in-str}
\end{equation}
but also the STR, $\kappa(\hat{\theta}_n + a_n)$, with arbitrary adjustment term $a \in \mathbb{B}$ such that $a = o_p(1)$. That is, it is not possible to rank consistent estimators.
\par
To overcome the drawback of the asymptotic minimaxity based on the IMR, it is necessary to employ other risk functions. The choice here is the regret concerning the IMR: for an STR $\delta_n$ and $h \in T(P_0)$, its \emph{regret of identifiable maximum risk (RIMR)} is given by
\begin{equation}
    \mathrm{RIMR}_n(\delta_n,h) \coloneqq \mathrm{IMR}_n(\delta_n,h) - \min_{\delta_n} \mathrm{IMR}_n(\delta_n,h).
    \label{eq:identifiable-maximum-risk-regret}
\end{equation}
This measures the loss in terms of the IMR because of not knowing the true state $h$. The RIMR is scaled by $\sqrt{n}$ in the analysis. A similar approach has also been adopted in \citet{Christensen2022} and \citet{Song2014}. From \cref{eq:identifiable-maximum-risk-expansion}, observe that
\begin{equation*}
    \min_{\delta_n} \imr_n(\delta_n,h) = \frac{\tau_U^+(\theta_n(h))\tau_L^-(\theta_n(h))}{\tau_U^+(\theta_n(h)) + \tau_L^-(\theta_n(h))},
\end{equation*}
where the minimum is attained by $\delta_n = \kappa(\theta_n(h))$. Substituting the above equation and \cref{eq:identifiable-maximum-risk-expansion} in \cref{eq:identifiable-maximum-risk-regret}, the RIMR can be expressed as 
\begin{align}
    \begin{aligned}
        \mathrm{RIMR}_n(\delta_n,h)
        =\max\dbrace*{
        \begin{aligned}
            &\tau_L^-(\theta_n(h))\expect_{n,h*}\dbrack*{\delta_n - \kappa(\theta_n(h))},\\
            -&\tau_U^+(\theta_n(h))\expect_{n,h*}\dbrack*{\delta_n - \kappa(\theta_n(h))}
        \end{aligned}
        }.
    \end{aligned}
    \label{eq:rimr-expansion}
\end{align}
\par
For the following technical reason, it is challenging to deal directly with $\mathrm{RIMR}_{n}(\delta_n,h)$ in the asymptotic analysis, whence this study employs a modified version. In the process of obtaining a lower bound of the asymptotic maximum risk for a broad class of statistics, it is often necessary to evaluate the asymptotic lower bound of $\expect_{n,h*}[\ell(\sqrt{n}(\delta_n-\kappa(\theta_n(h)))]$ for some function $\ell:\mathbb{R}\to\mathbb{R}$. When one is interested in the point estimation of $\kappa(\theta_0)$, $\ell$ is usually chosen to be a subconvex loss function, such as squared loss $\ell(x) = x^2$. As the subconvex loss functions are bounded from below and satisfy lower semicontinuity, one can apply the Portmanteau lemma to obtain the desired bound. In the current problem, however, $\ell$ is the identity function that is not bounded; hence, one cannot employ the usual strategy. To circumvent this challenge, this study modifies $\rimr_{n}(\delta_n,h)$ by replacing $\ell$ with its truncated version. Specifically, it employs $\ell_M(x) \coloneqq \max\{-M,\min\{M,x\}\}$ for $M \in \mathbb{N}$ instead of $\ell$. Correspondingly, the modified $\rimr_n(\delta_n,h)$ becomes
\begin{align*}
    \begin{aligned}
        \mathrm{RIMR}_{M,n}(\delta_n,h)
        \coloneqq \max\dbrace*{
        \begin{aligned}
            &\tau_L^-(\theta_n(h))\expect_{n,h*}\dbrack*{\ell_M\dparen*{\sqrt{n}\dparen*{\delta_n - \kappa(\theta_n(h))}}},\\
            -&\tau_U^+(\theta_n(h))\expect_{n,h*}\dbrack*{\ell_M\dparen*{\sqrt{n}\dparen*{\delta_n - \kappa(\theta_n(h))}}}
        \end{aligned}
        }.
    \end{aligned}
\end{align*} 

\section{Asymptotically Minimax Optimal Statistical Treatment Rules}
\label{sec:asymptotically-minimax-optimal-statistical-treatment-rules}
\subsection{Lower Bound of Asymptotic Maximum RIMR}
We first obtain the lower bound of the asymptotic maximum RIMR for a broad class of STRs in \cref{thm:rimr-lower-bound}.
\begin{theorem}\label{thm:rimr-lower-bound}
    Suppose \cref{asum:bounds-of-identificaiton-region,asum:tangent-set,asum:theta-differentiable} hold. Let $\delta_n$ be an arbitrary STR such that $\sqrt{n}\dparen{\delta_n - \kappa(\theta_0)}$ is asymptotically tight and asymptotically measurable under $P_0^n$. Then, there exists $M_0 \in \mathbb{N}$ such that for every $M \geq M_0$,
    \begin{align}
        \MoveEqLeft
        \adjustlimits{\sup}_{I}{\liminf}_{n \to \infty}\sup_{h \in I} \sqrt{n}\mathrm{RIMR}_{M,n}(\delta_n,h)\nonumber\\
        \geq {} & \adjustlimits{\inf}_{w \in \mathbb{B}}{\sup}_{h \in T(P_0)} \max\dbrace*{
        \begin{aligned}
            &\tau_L^-(\theta_0)\expect\dbrack*{ \kappa_{\theta_0}'\dparen*{\mathbb{G}_0 + w + \Dot{\theta}_0(h) } - \kappa_{\theta_0}'\dparen*{ \Dot{\theta}_0(h) } },\\
            -&\tau_U^+(\theta_0)\expect\dbrack*{ \kappa_{\theta_0}'\dparen*{\mathbb{G}_0 + w + \Dot{\theta}_0(h) } - \kappa_{\theta_0}'\dparen*{ \Dot{\theta}_0(h) } }
        \end{aligned}
        }.
        \label{eq:rimr-lower-bound}
    \end{align}
\end{theorem}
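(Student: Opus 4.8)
The plan is to transport the problem to the limit experiment $\mathcal{E}$ via the asymptotic representation theorem (\cref{lem:asymptotic-representation-theorem}) and then to evaluate the induced minimax risk there. First I would recenter, writing
\begin{equation*}
    \sqrt{n}\dparen*{\delta_n - \kappa(\theta_n(h))} = \sqrt{n}\dparen*{\delta_n - \kappa(\theta_0)} - \sqrt{n}\dparen*{\kappa(\theta_n(h)) - \kappa(\theta_0)}.
\end{equation*}
By \cref{asum:theta-differentiable} we have $\sqrt{n}\dparen{\theta_n(h) - \theta_0} \to \Dot{\theta}_0(h)$, so the definition of Hadamard directional differentiability of $\kappa$ (\cref{lemmaenum:directional-differentiability}) yields the numeric limit $\sqrt{n}\dparen{\kappa(\theta_n(h)) - \kappa(\theta_0)} \to \kappa_{\theta_0}'\dparen{\Dot{\theta}_0(h)}$. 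Since $\mathcal{E}_n \to \mathcal{E}$ and $\sqrt{n}\dparen{\delta_n - \kappa(\theta_0)}$ is asymptotically tight and measurable under $\mathbb{P}_{n,0} = P_0^n$, the local asymptotic normality~\cref{eq:loglikelihood-expansion} and contiguity propagate tightness to every $\mathbb{P}_{n,h}$; a diagonal extraction along a countable dense subset of $T(P_0)$ followed by \cref{lem:asymptotic-representation-theorem} furnishes a randomized statistic $T$ in $\mathcal{E}$ with $\sqrt{n}\dparen{\delta_n - \kappa(\theta_0)} \overset{h}{\leadsto} T$ for every $h$, and hence $\sqrt{n}\dparen{\delta_n - \kappa(\theta_n(h))} \overset{h}{\leadsto} T - \kappa_{\theta_0}'\dparen{\Dot{\theta}_0(h)}$.

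Next I would pass to the limit inside $\mathrm{RIMR}_{M,n}$. Because $\ell_M$ is bounded and continuous and $\tau_L^-,\tau_U^+$ are continuous (\cref{asum:bounds-of-identificaiton-region}), the portmanteau lemma gives, for each finite $I \subset T(P_0)$,
\begin{equation*}
    \liminf_{n\to\infty}\sup_{h\in I}\sqrt{n}\,\mathrm{RIMR}_{M,n}(\delta_n,h) \;\geq\; \sup_{h\in I} R_M(T,h),
\end{equation*}
where $R_M(T,h) \coloneqq \max\dbrace*{\tau_L^-(\theta_0)A_h,\, -\tau_U^+(\theta_0)A_h}$ and $A_h \coloneqq \expect_h\dbrack{\ell_M\dparen{T - \kappa_{\theta_0}'(\Dot{\theta}_0(h))}}$ (the asymptotic measurability assumption is what lets the inner integrals $\expect_{n,h*}$ converge). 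A technical point is that the extracted $T$ a priori depends on $I$; I would remove this dependence by a contradiction argument that diagonalizes over an increasing family of finite sets exhausting a dense subset and fixes a single $T$, using the Lipschitz continuity of $\kappa_{\theta_0}'$ (\cref{lemmaenum:lipschitz-continuity}) to make $h \mapsto R_M(T,h)$ continuous and thus pass from the dense set to all of $T(P_0)$. Taking the supremum over finite $I$ then bounds the left-hand side of \cref{eq:rimr-lower-bound} below by $\sup_{h\in T(P_0)} R_M(T,h)$, which, as $T$ is one admissible statistic, is at least the minimax value $\inf_{T'}\sup_{h} R_M(T',h)$ of the limit decision problem.

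The central step is to evaluate this limit-experiment minimax value and identify it with the right-hand side of \cref{eq:rimr-lower-bound}. I would introduce the efficient statistic $\hat{G}$ in $\mathcal{E}$, which under $\mathbb{P}_h$ is distributed as $\mathbb{G}_0 + \Dot{\theta}_0(h)$, and the canonical decisions $\Xi \mapsto \kappa_{\theta_0}'(\hat{G} + w)$ indexed by $w \in \mathbb{B}$; their $R_M$-risk is $\max\dbrace{\tau_L^-(\theta_0)\expect[\ell_M(Z_w(h))],\, -\tau_U^+(\theta_0)\expect[\ell_M(Z_w(h))]}$ with $Z_w(h) \coloneqq \kappa_{\theta_0}'(\mathbb{G}_0 + w + \Dot{\theta}_0(h)) - \kappa_{\theta_0}'(\Dot{\theta}_0(h))$, so restricting to them already yields the inequality $\inf_{T'}\sup_h R_M \leq \inf_{w}\sup_h\max\dbrace{\tau_L^-(\theta_0)\expect[\ell_M(Z_w(h))],\, -\tau_U^+(\theta_0)\expect[\ell_M(Z_w(h))]}$. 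The hard part will be the matching lower bound, namely that no decision rule beats the canonical class. Here I would exploit the translation structure: by \cref{assumptionenum:translation-equivariance} and \cref{lemmaenum:translation-equivariance} the derivative $\kappa_{\theta_0}'$ is translation equivariant, and the Gaussian shift $\mathcal{E}$ is invariant under the abelian (hence amenable) group of reparametrizations $h \mapsto h + h_0$ that translate $\Dot{\theta}_0(h)$. A Hunt--Stein/averaging reduction over these shifts, combined with properties of the centered Gaussian $\mathbb{G}_0$ and the explicit $R_M = \tau_L^-(\theta_0)(\cdot)^+ + \tau_U^+(\theta_0)(\cdot)^-$ form of the penalty, forces the minimax risk to be attained within the equivariant class, i.e.\ at rules $\kappa_{\theta_0}'(\hat{G}+w)$, giving the reverse inequality.

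Finally, I would handle the truncation. Because $Z_w(h)$ is a Lipschitz (\cref{lemmaenum:lipschitz-continuity}) functional of the Gaussian $\mathbb{G}_0$, it is integrable, so $\expect[\ell_M(Z_w(h))]\to\expect[Z_w(h)]$ as $M\to\infty$ with uniformly controlled truncation error; I would choose $M_0$ large enough that for all $M \geq M_0$ the minimax value with $\ell_M$ dominates the untruncated target, which deletes $\ell_M$ and delivers exactly \cref{eq:rimr-lower-bound}. This integrability-driven selection is the origin of the quantifier ``there exists $M_0$.'' I expect the invariance and minimax-completeness argument in the infinite-dimensional Gaussian experiment above to be the principal obstacle, precisely because $\kappa_{\theta_0}'$ is merely directionally differentiable (hence generally nonlinear) and the loss is asymmetric, so standard efficiency-bound results for linear functionals under symmetric loss do not apply directly and one must lean on the equivariance granted by \cref{lemmaenum:translation-equivariance}.
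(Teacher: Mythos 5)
Your scaffolding (recentering, transporting to the limit experiment, portmanteau with the bounded loss $\ell_M$, reducing to the minimax value $\inf_{T'}\sup_h R_M(T',h)$ of the limit problem) matches the paper's strategy, but the step you yourself flag as the ``principal obstacle'' is a genuine gap, and the tool you propose for it cannot close it. A Hunt--Stein reduction requires the \emph{decision problem}, not merely the Gaussian experiment, to be invariant: for each shift $h \mapsto h + h_0$ one needs an action on the decision space under which the loss is preserved, i.e.\ one needs $\kappa_{\theta_0}'\dparen*{\Dot{\theta}_0(h+h_0)} - \kappa_{\theta_0}'\dparen*{\Dot{\theta}_0(h)}$ to be a constant independent of $h$. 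That holds precisely when $\kappa_{\theta_0}'$ is additive along $\Dot{\theta}_0(T(P_0))$, i.e.\ in the Hadamard differentiable case --- where the theorem is uninteresting because $w^*=0$ is optimal. In the nondifferentiable case that motivates the paper, the target $\kappa_{\theta_0}'(\Dot{\theta}_0(h))$ transforms nonlinearly, the risk is not invariant, and amenability of the translation group gives you nothing to average. Moreover, \cref{lemmaenum:translation-equivariance} is far weaker than full translation equivariance: it only supplies, for each scalar $v$, \emph{some} direction $\tilde{v} \in \mathbb{B}$ with $\kappa_{\theta_0}'(b+\tilde{v}) = \kappa_{\theta_0}'(b)+v$, and invariance under this small subgroup of shifts cannot force minimax rules into the canonical class $\{\kappa_{\theta_0}'(\hat{G}+w)\}$. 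So your chain of inequalities terminates at $\inf_{T'}\sup_h R_M(T',h)$ with no way to connect it to the right-hand side of \cref{eq:rimr-lower-bound}.

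The paper closes exactly this gap by a different mechanism, with no invariance argument at all: it lower-bounds the maximal risk by Bayes risks against Gaussian priors $N(0,\lambda^{-1}I_m)$ on finite-dimensional blocks $\bm{a}^\intercal\bm{h}$ of the local parameter; it then proves the structural representation in \cref{lem:rimr-asymptotic-representation} (via Ponomarev's Lemma~A.4), showing that the prior-mixed limit law of the recentered STR coincides with the law of $V_{\lambda,m} - \kappa_{\theta_0}'\dparen*{\mathbb{G}_{\lambda,m} + W_{\lambda,m} + \Dot{\theta}_0(h)} + \kappa_{\theta_0}'\dparen*{\Dot{\theta}_0(h)}$ with a randomization $(V_{\lambda,m},W_{\lambda,m})$ \emph{independent} of the Gaussian core; and it then removes that randomization using the Feinberg--Piunovskiy purification theorem, which converts the randomized adjustment into a deterministic pair $(v,w)$ without increasing the maximum over finitely many $h$'s. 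Only after these steps (plus the $\lambda \downarrow 0$, $m \to \infty$, $J \to \infty$, $M \to \infty$ limits, the last interchange handled by a Stone--\v{C}ech compactification argument) does translation equivariance enter, and then only to absorb the residual scalar $v$ into the $\mathbb{B}$-valued adjustment $w$. Separately, two smaller points in your write-up would also need repair: the $\sup_I\liminf_n$ structure ranges over the directed set of finite subsets of $T(P_0)$, so the extraction argument must be run along subnets rather than a single diagonalized subsequence; and your treatment of the quantifier ``there exists $M_0$'' presumes a uniform-in-$M$ domination that, like the paper's own $\liminf_{M}$ argument, has to be stated carefully since $\expect[\ell_M(\cdot)]$ is not monotone in $M$.
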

\cref{thm:rimr-lower-bound} gives the lower bound of the asymptotic maximum RIMR. This lower bound holds for a broad class of STRs: the only requirement for STRs is the asymptotic tightness and measurability of the standardized statistic $\sqrt{n}(\delta_n - \kappa(\theta_0))$. To see the benefit of this lower bound, it is important to notice that the integrand in the right-hand side of \cref{eq:rimr-lower-bound} is the weak limit of a certain class of statistics. Specifically, noting that 
\begin{equation*}
    \sqrt{n}\dparen*{\hat{\theta}_n + w/\sqrt{n} - \theta_0} \overset{h}{\leadsto} \mathbb{G}_0 + w + \Dot{\theta}_0(h)
\end{equation*}
for any $w$ and $h$, apply the generalized delta method to obtain
\begin{align*}
    \sqrt{n}\dparen*{\kappa\dparen*{ \hat{\theta}_n + \frac{w}{\sqrt{n}} } - \kappa\dparen*{\theta_n(h)} } 
    &= \sqrt{n}\dparen*{ \kappa\dparen*{ \hat{\theta}_n + \frac{w}{\sqrt{n}} } - \kappa(\theta_0) } - \sqrt{n}\dparen*{ \kappa(\theta_n(h)) - \kappa(\theta_0) }\\
    &\overset{h}{\leadsto} \kappa_{\theta_0}'\dparen*{\mathbb{G}_0 + w + \Dot{\theta}_0(h) } - \kappa_{\theta_0}'\dparen*{ \Dot{\theta}_0(h) }.
\end{align*}
This implies that if there exists an adjustment term $w_{\theta_0}^* \in \mathbb{B}$ that solves the minimization in the right-hand side of \cref{eq:rimr-lower-bound}, the LAM STR can be constructed by
\begin{equation}
    \kappa\dparen*{\hat{\theta}_n + \frac{w_{\theta_0}^*}{\sqrt{n}}}.
    \label{eq:infeasible-LAM-STR}
\end{equation}
If $w_{\theta_0}^* \neq 0$, this implies that the plug-in STR is not LAM. Instead, it is essential to find an optimal adjustment term $w_{\theta_0}^*$ to obtain the LAM STR.
\par
A special case is when $\kappa(\theta)$ is Hadamard differentiable at $\theta_0$. In this case, the optimal adjustment term $w_{\theta_0}^*$ always equals zero. Indeed, the full differentiability implies the linearity of $\kappa_{\theta_0}'$, and hence
\begin{align*}
    &\sup_{h \in T(P_0)} \max\dbrace*{
    \begin{aligned}
        &\tau_L^-(\theta_0)\expect\dbrack*{ \kappa_{\theta_0}'\dparen*{\mathbb{G}_0 + w + \Dot{\theta}_0(h) } - \kappa_{\theta_0}'\dparen*{ \Dot{\theta}_0(h) } },\\
        -&\tau_U^+(\theta_0)\expect\dbrack*{ \kappa_{\theta_0}'\dparen*{\mathbb{G}_0 + w + \Dot{\theta}_0(h) } - \kappa_{\theta_0}'\dparen*{ \Dot{\theta}_0(h) } }
    \end{aligned}
    }\\
    &= 
    \max\dbrace*{ \tau_L^-(\theta_0)\dparen{ \expect\dbrack*{ \kappa_{\theta_0}'(\mathbb{G}_0) } + \kappa_{\theta_0}'(w) }, -\tau_U^+(\theta_0)\dparen{ \expect\dbrack*{ \kappa_{\theta_0}'(\mathbb{G}_0) } + \kappa_{\theta_0}'(w) } }.
\end{align*}
Moreover, the linearity of $\kappa_{\theta_0}'$ implies that $\kappa_{\theta_0}'$ is an element of the dual space $\mathbb{B}^*$, where $\kappa_{\theta_0}'(\mathbb{G}_0) \sim N(0, \norm{\Dot{\theta}_0^*\kappa_{\theta_0}'}_{2,P_0}^2)$. Thus, noting that the right-hand side of \cref{eq:rimr-lower-bound} is not smaller than zero, the choice, $w_{\theta_0}^* = 0$, is optimal.
\begin{remark}\label{rem:rimr-lower-bound-support-expression}
    The lower bound in \cref{thm:rimr-lower-bound} has an equivalent expression. Specifically, let $s = \Dot{\theta}_0(h) \in \mathbb{B}$ for $h \in T(P_0)$. Then, it is possible to replace the supremum for $h \in T(P_0)$ with the supremum for $s \in \overline{\Dot{\theta}_0(T(P_0))}$, as the corresponding objective function is continuous in $s$. Therefore, the right-hand side of \cref{eq:rimr-lower-bound} equals
    \begin{align}
        \begin{aligned}
            \adjustlimits{\inf}_{w\in\mathbb{B}}{\sup}_{s \in S(\mathbb{G}_0)} \max\dbrace*{
            \begin{aligned}
                &\tau_L^-(\theta_0)\expect\dbrack*{ \kappa_{\theta_0}'\dparen*{ \mathbb{G}_0 + w + s } - \kappa_{\theta_0}'(s) },\\
                -&\tau_U^+(\theta_0)\expect\dbrack*{ \kappa_{\theta_0}'\dparen*{ \mathbb{G}_0 + w + s} - \kappa_{\theta_0}'(s) }
            \end{aligned}
            },
        \end{aligned}
        \label{eq:rimr-lower-bound-support-expression} 
    \end{align}
    where $S(\mathbb{G}_0)$ denotes the support of $\mathbb{G}_0$. Especially when $\mathbb{B} = \mathbb{R}^k$, the support is equal to the linear span of the column vectors of the efficiency bound of estimators for $\theta_0$, $\Sigma_{\theta_0} = \expect_{P_0}[\tilde{\theta}_0^{~}\tilde{\theta}_0^\top]$. Moreover, if $\Sigma_{\theta_0}$ is nonsigular, then it holds that $S(\mathbb{G}_0) = \mathbb{R}^k$. This expression will be used for constructing the optimal STR that is LAM with respect to RIMR in \cref{sec:lam-str-with-respect-to-rimr}.
\end{remark}
\begin{remark}
    \textcite{Ponomarev2022,Song2014a,Fang2016} have derived the lower bounds of the asymptotic maximum mean squared error (MSE) to develop the LAM point estimators for parameters characterized by nondifferentiable functionals. When one views STRs as estimators for $\kappa(\theta)$, \textcite{Ponomarev2022}'s result, combined with the conditions in \cref{thm:rimr-lower-bound}, implies that the asymptotic maximum MSE is bounded from below as in 
    \begin{align}
        \MoveEqLeft
        \adjustlimits\sup_I\liminf_{n\to\infty}\sup_{h \in I} \mathbb{E}_{n,h*}\dbrack*{\dparen*{\sqrt{n}(\delta_n - \kappa(\theta_n(h)))}^2}\nonumber\\
        \geq {} & \adjustlimits\inf_{w \in \mathbb{B}}\sup_{s \in S(\mathbb{G}_0)}\mathbb{E}\dbrack*{ \dparen*{ \kappa_{\theta_0}'\dparen*{ \mathbb{G}_0 + w + s } - \kappa_{\theta_0}'\dparen*{ s } }^2 }.
        \label{eq:mse-lower-bound}
    \end{align}
    This lower bound suggests that the LAM point estimator also takes the form of $\kappa(\hat{\theta}_n + w_{\theta_0}^\dagger)$, where $w_{\theta_0}^\dagger$ is the optimal adjustment term that solves the minimization problem in the right-hand side of the preceding inequality. As the objective functions in \cref{eq:rimr-lower-bound,eq:mse-lower-bound} differ, the optimal adjustment terms, $w_{\theta_0}^*$ and $w_{\theta_0}^\dagger$, generally differ as well. This is exemplified using \cref{ex:experimental-study-with-non-random-sampling} below.
\end{remark}
\begin{examplecontinued}{ex:identification-region-with-empirical-evidence-alone}
    As $\kappa(\theta)$ is Hadamard differentiable at $\theta_0$, as long as $\theta_0$ is an interior point, the above discussion implies that the lower bound in \cref{eq:rimr-lower-bound} can always be made zero by choosing $w_{\theta_0}^* = 0$. Thus, the plug-in STR is LAM in this example.
\end{examplecontinued}
\begin{examplecontinued}{ex:experimental-study-with-non-random-sampling}
    \label{ex:example2-fifth}
    As noted above, if neither $\tau_L(\theta_0) = 0$ nor $\tau_U(\theta_0) = 0$, then $\kappa(\theta)$ is Hadamard differentiable at $\theta_0$. Thus, the above argument implies that the optimal adjustment term $w_{\theta_0}^*$ equals zero. If $\tau_L(\theta_0) = 0 < \tau_U(\theta_0)$, $\kappa(\theta)$ is Hadamard directionally differentiable at $\theta_0$ with derivative $\kappa_{\theta_0}'(b) = \min\dbrace{\tau_{L,\theta_0}'(b),0}/\tau_U(\theta_0)$. With the linearity of $\tau_{L,\theta_0}'$, \cref{rem:rimr-lower-bound-support-expression} implies that the lower bound is the infimum of 
    \begin{equation*}
        \sup_{s\in\mathbb{R}^3}\max\dbrace*{0, \min\dbrace*{\tau_{L,\theta_0}'(s),0} - \expect\dbrack*{ \min\dbrace*{\tau_{L,\theta_0}'(\mathbb{G}_0) + \tau_{L,\theta_0}'(w) + \tau_{L,\theta_0}'(s), 0  }  } }.
    \end{equation*}
    The direct argument based on the closed-form expression of $\expect\dbrack{ \min\dbrace{\tau_{L,\theta_0}'(\mathbb{G}_0 + w + s), 0  }  }$ shows that the supremum in the preceding display always attained at $s$ such that $\tau_{L,\theta_0}'(s) = 0$. That is, the object in the preceding display is equal to 
    \begin{equation*}
        \max\dbrace*{0,-\mathbb{E}\dbrack*{\min\dbrace*{\tau_{L,\theta_0}'(\mathbb{G}_0) + \tau_{L,\theta_0}'(w),0}}}.
    \end{equation*}
    Then, it is clear that the above object monotonically approaches zero from above by moving $w$ such that $\tau_{L,\theta_0}'(w)\to\infty$. Hence, the lower bound remains zero, although the optimal adjustment term $w_{\theta_0}^*$ does not exist. This argument implies that the plug-in STR $\kappa(\hat{\theta}_n)$ is not LAM STR. The plug-in STR is asymptotically inferior to the STR $\kappa(\hat{\theta}_n + w /\sqrt{n})$ for any $w$ satisfying $\tau_{L,\theta_0}'(w) > 0$.\par
    The LAM point estimator that minimizes the asymptotic maximum MSE is obtained by solving 
    \begin{equation*}
        \adjustlimits\inf_{w \in \mathbb{R}^3}\sup_{s \in \mathbb{R}^3} \mathbb{E}\dbrack*{ \dparen*{ \min\dbrace*{ \tau_{L,\theta_0}'(\mathbb{G}_0 + w + s), 0 } - \min\dbrace*{ \tau_{L,\theta_0}'(s), 0 } }^2 }.
    \end{equation*}
    The direct calculation shows that the optimal adjustment term that solves the above infimum is zero (i.e., $w_{\theta_0}^\dagger = 0$). Therefore, the LAM point estimator is the plug-in estimator $\kappa(\hat{\theta}_n)$. The argument in the preceding paragraph implies that the LAM point estimator is also inferior to the STR $\kappa(\hat{\theta}_n + w/\sqrt{n})$ in terms of the RIMR when $\tau_{L,\theta_0}'(w) > 0$. 
\end{examplecontinued}
\subsection{Feasible Construction of LAM STR}
\label{sec:lam-str-with-respect-to-rimr}
If the value $\theta_0$ of the intermediate parameter at the true data generating process $P_0$ were known, one could solve the minimax problem in \cref{eq:rimr-lower-bound-support-expression} to obtain the optimal adjustment term $w_{\theta_0}^*$. The value $\theta_0$ is unknown in practice; hence, one cannot solve the minimax problem because of the objects that depend on $\theta_0$. Specifically, five unknown objects exist given the unknown identity of $\theta_0$. First, the limit law $\mathbb{G}_0$ of the best regular estimator $\hat{\theta}_n$ is unknown. Correspondingly, the support $S(\mathbb{G}_0)$ of $\mathbb{G}_0$ is not known as well. The Hadamard directional derivatives $\tau_{L,\theta_0}'$ and $\tau_{U,\theta_0}'$ are also unknown, and so is $\kappa_{\theta_0}'$. Finally, $\tau_L^-(\theta_0)$ and $\tau_U^+(\theta_0)$ are unknown quantities.
\par
To alleviate this challenge, this study appropriately estimates those unknown objects and solves the sample analog of \cref{eq:rimr-lower-bound-support-expression}, where the unknown objects are replaced with the estimators, to obtain the estimate $\hat{w}_n \in \mathbb{B}$ for $w_{\theta_0}^*$. If $\hat{w}_n$ is a consistent estimator for $w_{\theta_0}^*$, then the STR given by $\kappa(\hat{\theta}_n + \hat{w}_n/\sqrt{n})$ is LAM. To ensure the consistency of $\hat{w}_n$, the estimators of unknown quantities must satisfy the appropriate conditions. In what follows, the study states the required conditions, defines our proposed STR formally, and then shows that the STR is LAM with respect to RIMR.
\par
First, estimate the law of $\mathbb{G}_0$ by that of the estimator $(\mathbf{Z}_n,\mathbf{W}_n) \mapsto \hat{\mathbb{G}}_n \in \mathbb{B}$, where $\mathbf{W}_n$ is independent of $\mathbf{Z}_n$. This estimator is particularly constructed via the bootstrap or simulation. For the bootstrap, let $\mathbf{W}_n = (W_{1},\dots,W_{n})$ be the nonparametric bootstrap weights independent of $\mathbf{Z}_n$, and let $(\mathbf{Z}_n,\mathbf{W}_n) \mapsto \hat{\theta}_n^*$ be the bootstrap version of the best regular estimator $\hat{\theta}_n$. Then, use $\hat{\mathbb{G}}_n = \sqrt{n}(\hat{\theta}_n^* - \hat{\theta}_n)$ to approximate the law of $\mathbb{G}_0$. For the simulation, presume the law of $\mathbb{G}_0$ belongs to a known class of distributions. For example, $\mathbb{G}_0 \sim N(0,\Sigma_{\theta_0})$ when $\mathbb{B} = \mathbb{R}^{d_\theta}$. Then, given a consistent estimator $\hat{\Sigma}_n$ for $\Sigma_{\theta_0}$, use $ \hat{\mathbb{G}}_n = \mathbf{W}_n \sim N(0,\hat{\Sigma}_n)$ to approximate the law of $\mathbb{G}_0$. In general, the estimator $\hat{\mathbb{G}}_n$ must satisfy \cref{asum:on-bootstrap} below. Given this assumption, $\hat{\mathbb{G}}_n \leadsto \mathbb{G}_0$ conditionally on $Z^n$; and hence, it is reasonable to approximate the law of $\mathbb{G}_0$ by $\hat{\mathbb{G}}_n$ \parencite[][Theorems~3.6.1 and 3.9.11]{Vaart1996}.
\begin{assumption}\label{asum:on-bootstrap}
    The estimator $(\mathbf{Z}_n,\mathbf{W}_n) \mapsto \hat{\mathbb{G}}_n \in \mathbb{B}$ with $\mathbf{W}_n$ independent of $\mathbf{Z}_n$ has the following properties: (i) $\sup_{f\in \mathrm{BL}_1(\mathbb{B})} \abs{ \expect \dbrack[]{f\dparen{\hat{\mathbb{G}}_n}|\mathbf{Z}_n} - \expect \dbrack[]{f(\mathbb{G}_0)} } = o_p(1)$, where $\mathrm{BL}_1(\mathbb{B})$ is the space of functions $f:\mathbb{B} \to \mathbb{R}$ such that $\sup_{b \in \mathbb{B}}\abs*{f(b)} \leq 1$ and $\abs*{f(b_1) - f(b_2)} \leq \normB{b_1 - b_2}$ for all $b_1,b_2 \in \mathbb{B}$; (ii) $\hat{\mathbb{G}}_n$ is asymptotically measurable jointly in $(\mathbf{Z}_n,\mathbf{W}_n)$.
\end{assumption}
Second, approximate the support $S(\mathbb{G}_0)$ by an expanding sequence of compact sets. The rationale of this strategy is based on the tightness of $\mathbb{G}_0$. Specifically, $\mathbb{G}_0$ has a tight Borel law if and only if its support is a $\sigma$-compact set (i.e., a countable union of compact sets). Thus, the support of $\mathbb{G}_0$ can be approximated by an expanding sequence of compact sets $S_n \subset \mathbb{B}$. For the property of this set sequence, assume the following.
\begin{assumption}\label{asum:support-G0-compact-sieve}
    There exists an expanding sequence of compact sets $S_n \subset \mathbb{B}$ such that (i) for every $s \in S(\mathbb{G}_0)$ and for any $\epsilon > 0$, there is $s_n \in S_n$ such that $\norm{s_n - s}_{\mathbb{B}} \leq \epsilon$  for sufficiently large $n$; (ii) $\sup_{s \in S_n} \normB{s} = o(\sqrt{n})$. The sequence $\{S_n\}$ are known or estimated by $\{\hat{S}_n\}$ satisfying $d_H(\hat{S}_n,S_n) = o_p(1)$, where $d_H(\hat{S}_n,S_n)$ denotes the Hausdorff distance between the sets $\hat{S}_n$ and $S_n$.
\end{assumption}
As stated in \cref{asum:support-G0-compact-sieve}, the expanding sequence $\{S_n\}$ is not necessary to be known ex-ante. Rather, it is sufficient to have an estimator $\hat{S}_n$ whose Hausdorff distance from $S_n$ goes to zero in probability. A typical candidate of $\hat{S}_n$ and $S_n$ for the case when $\mathbb{B} = \mathbb{R}^{d_\theta}$ is constructed as follows. Let $\hat{\Sigma}_n$ be a $\sqrt{n}$-consistent estimator for the efficiency bound $\Sigma_{\theta_0}$ of estimators for $\theta_0$. Moreover, let $\hat{\sigma}_j$ and $\sigma_j$ be the $j$-th column vector of $\hat{\Sigma}_n$ and $\Sigma_{\theta_0}$, respectively. Then, the following expanding sequence of compact sets usually fulfills the requirement:
\begin{equation*}
    S_n = \dbrace*{\sum_{j=1}^k a_j \sigma_j : \norm{a} \leq \lambda_n}
    \quad\text{and}\quad
    \hat{S}_n = \dbrace*{ \sum_{j=1}^k a_j \hat{\sigma}_j : \norm{a} \leq \lambda_n},
\end{equation*}
where $\lambda_n = o(\sqrt{n})$. Furthermore, in some examples such as \cref{ex:identification-region-with-empirical-evidence-alone,ex:experimental-study-with-non-random-sampling}, $\Sigma_{\theta_0} = \mathbb{R}^{d_\theta}$. Then, it suffices to take as $S_n$ the closed ball with center $0$ and radius $\lambda_n$. For the choice of $S_n$ and $\hat{S}_n$ when $\mathbb{B}$ is a Banach space, see Section~5.2.2 in \textcite{Ponomarev2022}.
\par
Third, this study postulates that the estimators $\hat{\tau}_{U,n}'$ and $\hat{\tau}_{L,n}'$ for $\tau_{U,\theta_0}'$ and $\tau_{L,\theta_0}'$ satisfying the following assumption are available.
\begin{assumption}\label{asum:bounds-derivative-estimator}
    For each $j \in \{U,L\}$, the estimator $\hat{\tau}_{j,n}':\mathbb{B} \to \mathbb{R}$ is a function of $\mathbf{Z}_n$ and has the following properties: (i) for every $\delta > 0$, $\sup_{b \in K_n^\delta} \abs{ \hat{\tau}_{j,n}'(b) - \tau_{j,\theta_0}'(b) } = o_p(1)$ for arbitrary expanding sequence of compact sets $K_n \subset \mathbb{B}$ with $\sup_{b \in K_n}\normB{b} = o(\sqrt{n})$; (ii) there exists an asymptotically tight random variable $C_{\hat{\tau}_{j,n}'} \in \mathbb{R}$ such that $\abs{\hat{\tau}_{j,n}'(b_1) - \hat{\tau}_{j,n}'(b_2)} \leq C_{\hat{\tau}_{j,n}'}\norm{b_1-b_2}_{\mathbb{B}}$ holds outer almost surely for every $(b_1,b_2) \in \mathbb{B}\times\mathbb{B}$.
\end{assumption}
\cref{asum:bounds-derivative-estimator}.(i) requires some uniform consistency of $\hat{\tau}_{j,n}'$. In many applications, estimators with stronger properties such as $\hat{\tau}_{j,n}(b) = \tau_{j,\theta_0}'(b)$ with probability approaching 1 are often available. See Remarks 3.3 and 3.4 and Section~S.4 in the supplementary material of \textcite{Fang2019} for more details. Usually, one can construct such estimators from the analytical expression of $\tau_{j,\theta_0}'$ or numerical method proposed in \textcite{Hong2018}.
\par
It is now appropriate to define the proposed STR. Given the estimators $\hat{\tau}_{U,n}'$ and $\hat{\tau}_{L,n}'$ for $\tau_{U,\theta_0}'$ and $\tau_{L,\theta_0}'$, define the estimator $\hat{\kappa}_n'$ for $\kappa_{\theta_0}'$ by 
\begin{equation}
    \hat{\kappa}_n'(b) \coloneqq \frac{\tau_L^-(\hat{\theta}_n)\cdot \hat{m}_{\tau_U(\hat{\theta}_n),n}'\dparen*{\hat{\tau}_{U,n}'(b)} - \tau_U^+(\hat{\theta}_n) \cdot \hat{m}_{-\tau_L(\hat{\theta}_n),n}\dparen*{-\hat{\tau}_{L,n}'(b)}}{\dparen[\big]{\tau_U^+(\hat{\theta}_n) + \tau_L^-(\hat{\theta}_n)}^2}
    \label{eq:kappa-hat}
\end{equation}
for $b \in \mathbb{B}$. Here, $\hat{m}_{y,n}$ is the estimator of the directional derivative $m_y'$ of $\max\{x,0\}$ at $y \in \mathbb{R}$, and, for any $x \in \mathbb{R}$, it is given by
\begin{equation*}
    \hat{m}_{y,n}(r) = x\cdot \ind\dbrace{y > \epsilon_n} + \max\dbrace{x,0}\cdot \ind\dbrace*{ \abs*{y} \leq \epsilon_n } + 0\cdot \ind\dbrace{y < -\epsilon_n}
\end{equation*}
with $\epsilon_n > 0$ such that $\epsilon_n \downarrow 0$ and $\sqrt{n}\epsilon_n \uparrow \infty$. Let $\{K_m\}$ be an expanding sequence of compact sets in $\mathbb{B}$. For a sufficiently large constant $M > 0$, let 
\begin{equation}
    \begin{aligned}
        \hat{w}_{n} \in \adjustlimits{\argmin}_{w \in K_m}{\sup}_{s \in \hat{S}_n} \max\dbrace*{
        \begin{aligned}
            &\tau_L^-(\hat{\theta}_n)\expect\dbrack*{ \ell_M\dparen*{ \hat{\kappa}_n'\dparen*{ \hat{\mathbb{G}}_n + w + s } - \hat{\kappa}_n'(s)} \middle| \mathbf{Z}_n},\\
            -&\tau_U^+(\hat{\theta}_n)\expect\dbrack*{ \ell_M\dparen*{ \hat{\kappa}_n'\dparen*{ \hat{\mathbb{G}}_n + w + s } - \hat{\kappa}_n'(s)} \middle| \mathbf{Z}_n }
        \end{aligned}        
        }.
    \end{aligned}
    \label{eq:optimal-adjustment-term}
\end{equation}
Then, define the STR by
\begin{equation}
    \delta_n^{\text{LAM}} \coloneqq \kappa\dparen*{ \hat{\theta}_n + \frac{\hat{w}_n}{\sqrt{n}} },
    \label{eq:LAM-STR}
\end{equation}
where $\hat{\theta}_n$ is the best regular estimator for $\theta_0$. The following theorem gives an upper bound of local asymptotic maximum RIMR of $\delta_n^{\text{LAM}}$.
\begin{theorem}\label{thm:lam-with-respect-to-rimr}
    Suppose \cref{asum:bounds-of-identificaiton-region,asum:tangent-set,asum:theta-differentiable,asum:on-bootstrap,asum:support-G0-compact-sieve,asum:bounds-derivative-estimator} hold. Then,
    \begin{align}
        \begin{aligned}
            \MoveEqLeft
            \adjustlimits\limsup_{m \to \infty}\limsup_{M \to \infty}\adjustlimits\sup_{I}\liminf_{n\to\infty}\sup_{h \in I} \sqrt{n}\rimr_{M,n}(\delta_n^{\lam},h)\\
            \leq {} & \adjustlimits{\inf}_{w \in \mathbb{B}}{\sup}_{h \in T(P_0)} \max\dbrace*{
            \begin{aligned}
                &\tau_L^-(\theta_0)\expect\dbrack*{ \kappa_{\theta_0}'\dparen*{\mathbb{G}_0 + w + \Dot{\theta}_0(h) } - \kappa_{\theta_0}'\dparen*{ \Dot{\theta}_0(h) }},\\
                -&\tau_U^+(\theta_0)\expect\dbrack*{ \kappa_{\theta_0}'\dparen*{\mathbb{G}_0 + w + \Dot{\theta}_0(h) } - \kappa_{\theta_0}'\dparen*{ \Dot{\theta}_0(h) }}
            \end{aligned}
            }.
        \end{aligned}
        \label{eq:rimr-upper-bound-LAM-STR}
    \end{align}
\end{theorem}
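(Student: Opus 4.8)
The plan is to show that $\delta_n^{\lam}$ asymptotically attains the lower bound of \cref{thm:rimr-lower-bound}, with the data-dependent $\hat w_n$ playing the role of the optimal adjustment $w$. The argument proceeds pointwise in $h$: I first identify the weak limit under $\mathbb P_{n,h}$ of the standardized rule $\sqrt n(\delta_n^{\lam} - \kappa(\theta_n(h)))$, then pass the truncated scaled regret $\sqrt n\rimr_{M,n}(\delta_n^{\lam},h)$ through that limit, and finally recognize the worst case over $h$ as the value of the sample minimax problem \eqref{eq:optimal-adjustment-term} defining $\hat w_n$, whose population limit is \eqref{eq:rimr-lower-bound-support-expression}.

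First I would fix the truncation level $M$ and the compact index $m$ and establish the weak limit. By regularity of the best regular estimator $\hat\theta_n$ and the local asymptotic normality \eqref{eq:loglikelihood-expansion}, $\sqrt n(\hat\theta_n - \theta_0)\overset{h}{\leadsto}\mathbb G_0 + \Dot{\theta}_0(h)$ under $\mathbb P_{n,h}$. The key preliminary fact is that $\hat w_n\overset{p}{\to} w_{M,m}^*$ for some minimizer $w_{M,m}^*$ of the restricted, truncated population problem (the infimum over $w\in K_m$ with $\ell_M$ inside \eqref{eq:rimr-lower-bound-support-expression}); I would prove this consistency under $\mathbb P_{n,0}$ and transfer it to $\mathbb P_{n,h}$ by contiguity. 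Since $w_{M,m}^*$ is a constant, Slutsky's lemma gives $\sqrt n(\hat\theta_n + \hat w_n/\sqrt n - \theta_0) = \sqrt n(\hat\theta_n - \theta_0) + \hat w_n\overset{h}{\leadsto}\mathbb G_0 + w_{M,m}^* + \Dot{\theta}_0(h)$. As $\kappa$ is Hadamard directionally differentiable at $\theta_0$ by \cref{lem:kappa-directionally-differentiable}, the generalized delta method yields $\sqrt n(\delta_n^{\lam} - \kappa(\theta_0))\overset{h}{\leadsto}\kappa_{\theta_0}'(\mathbb G_0 + w_{M,m}^* + \Dot{\theta}_0(h))$; subtracting $\sqrt n(\kappa(\theta_n(h)) - \kappa(\theta_0))\to\kappa_{\theta_0}'(\Dot{\theta}_0(h))$ (from pathwise differentiability and the directional derivative of $\kappa$) identifies the limit $Z_h \coloneqq \kappa_{\theta_0}'(\mathbb G_0 + w_{M,m}^* + \Dot{\theta}_0(h)) - \kappa_{\theta_0}'(\Dot{\theta}_0(h))$.

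Because $\ell_M$ is bounded and continuous and the coefficients satisfy $\tau_L^-(\theta_n(h)),\tau_U^+(\theta_n(h))\to\tau_L^-(\theta_0),\tau_U^+(\theta_0)$ by continuity of the bounds with $\theta_n(h)\to\theta_0$, the Portmanteau lemma gives $\sqrt n\rimr_{M,n}(\delta_n^{\lam},h)\to\max\{\tau_L^-(\theta_0)\expect[\ell_M(Z_h)],-\tau_U^+(\theta_0)\expect[\ell_M(Z_h)]\}$. For a finite index set $I$ the finite $\max_{h\in I}$ commutes with the limit (so $\liminf_n$ along a subsequence on which $\hat w_n$ converges equals this max), and taking $\sup_I$ replaces the supremum over $I$ by the supremum over all $h\in T(P_0)$; \cref{rem:rimr-lower-bound-support-expression} then rewrites it as a supremum over $s\in S(\mathbb G_0)$. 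The decisive observation is that $w_{M,m}^*$ minimizes precisely this quantity over $w\in K_m$, so the worst case equals $\inf_{w\in K_m}\sup_{s\in S(\mathbb G_0)}\max\{\ldots\ell_M\ldots\}$. Letting $m\to\infty$ drives $\inf_{w\in K_m}\downarrow\inf_{w\in\mathbb B}$ as $K_m\uparrow\mathbb B$, and letting $M\to\infty$ removes the truncation by dominated convergence, using the Lipschitz property of $\kappa_{\theta_0}'$ from \cref{lem:kappa-directionally-differentiable} and the integrability of $\mathbb G_0$ to dominate $\ell_M(Z_h)$ by $\abs{Z_h}$; this returns exactly the right-hand side of \eqref{eq:rimr-upper-bound-LAM-STR}.

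The main obstacle is the consistency $\hat w_n\overset{p}{\to} w_{M,m}^*$, i.e.\ that the feasible minimax problem \eqref{eq:optimal-adjustment-term} recovers the infeasible one. This is an argmin/value-consistency argument requiring uniform-in-$(w,s)$ convergence of the sample criterion to its population counterpart over the compact $K_m$ (whence compactness of $K_m$ supplies convergent subsequences of $\hat w_n$, all of whose limit points minimize the population criterion). I would build this from the three estimation assumptions: the conditional weak convergence $\hat{\mathbb G}_n\leadsto\mathbb G_0$ in the bounded-Lipschitz metric (\cref{asum:on-bootstrap}) to pass the inner conditional expectations to the law of $\mathbb G_0$; the Hausdorff consistency $d_H(\hat S_n,S_n)\to 0$ with $S_n$ approximating $S(\mathbb G_0)$ (\cref{asum:support-G0-compact-sieve}), combined with the Lipschitz bound on $\hat\kappa_n'$, to control the inner supremum over $s$; and the uniform consistency of $\hat\kappa_n'$ on compacts, which follows from \cref{asum:bounds-derivative-estimator} applied to $\hat\tau_{L,n}',\hat\tau_{U,n}'$ and the convergence $\hat m_{y,n}\to m_y'$ secured by $\epsilon_n\downarrow0$, $\sqrt n\epsilon_n\uparrow\infty$ in \eqref{eq:kappa-hat}. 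The remaining delicate points are measurability, handled through the asymptotic tightness and measurability hypothesis on $\sqrt n(\delta_n^{\lam}-\kappa(\theta_0))$ together with the joint asymptotic measurability in \cref{asum:on-bootstrap}, and the bookkeeping of the iterated $\limsup_{m}\limsup_{M}\sup_I\liminf_n$, whose order must be respected so that the un-truncation and the expansion of $K_m$ are taken only after the $n\to\infty$ limit.
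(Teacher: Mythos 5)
Your proposal is correct and follows essentially the same route as the paper's proof: consistency of $\hat{w}_n$ to a minimizer $w_{M,m}^*$ of the truncated, $K_m$-restricted population criterion (the paper's \cref{lem:pw-consistency-adjustment-terms}, built from \cref{asum:on-bootstrap,asum:support-G0-compact-sieve,asum:bounds-derivative-estimator} exactly as you outline, with contiguity transferring the convergence to $\mathbb{P}_{n,h}$), then regularity plus Slutsky plus the generalized delta method to get the weak limit under each $h$, the bounded-continuity of $\ell_M$ to pass to the limit of $\sqrt{n}\rimr_{M,n}$, identification of the resulting supremum over $T(P_0)$ with $\inf_{w \in K_m}$ of the truncated criterion via \cref{rem:rimr-lower-bound-support-expression}, and finally the iterated limits in $M$ and $m$. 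The only cosmetic difference is that you remove the truncation by domination (Lipschitz envelope of $\kappa_{\theta_0}'$ plus integrability of $\mathbb{G}_0$) where the paper invokes equicontinuity of $\{F_M\}$ and uniform convergence on $K_m$; these are interchangeable ways of justifying the same exchange of $\lim_M$ with $\inf_{w \in K_m}$.
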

The right-hand sides of \cref{eq:rimr-lower-bound} and \cref{eq:rimr-upper-bound-LAM-STR} coincide. Therefore, combining \cref{thm:rimr-lower-bound,thm:lam-with-respect-to-rimr}, $\delta_n^{\lam}$ is LAM STR.

\section{Simulation Study}
This simulation study examines the finite-sample performance of the proposed LAM STR in the context of \cref{ex:experimental-study-with-non-random-sampling}. In the setup in \cref{ex:experimental-study-with-non-random-sampling}, consider a binary outcome (i.e., $\mathcal{Y} = \{0,1\}$) and set the true parameter to
\begin{equation*}
    \theta_0 = (\mu_{1,0},\mu_{0,0},p_0) = \dparen*{ \frac{2}{3}, \frac{1}{3}, \frac{3}{4} }.
\end{equation*}
In this case, $\tau_L(\theta_0) = 0$ and $\tau_U(\theta_0) = 1/2$. Although the boundary functions $\tau_L(\theta)$ and $\tau_U(\theta)$ are differentiable at $\theta_0$, the optimal treatment rule $\kappa(\theta)$ is only directionally differentiable at $\theta_0$. Focus on a specific set of perturbed data generating processes characterized by
\begin{equation*}
    \theta_n(h) = \dparen*{ \frac{2}{3} + \frac{h}{\sqrt{n}}, ~ \frac{1}{3} + \frac{h}{2\sqrt{n}}, ~ \frac{3}{4} + \frac{h}{\sqrt{n}} }
\end{equation*}
for $h \in I \coloneqq \{-2,-1.95,-1.90,\dots,2\}$. When $h = 0$, $\theta_n(h)$ corresponds to the true parameter value $\theta_0$. As $|h|$ gets larger, $\theta_n(h)$ deviates more from the nondifferentiable point. The goal here is to compare the maximum RIMR,  
\begin{equation*}
    \sup_{h \in I} \rimr_n(\delta_n,h),
\end{equation*}
between the proposed LAM STR and plug-in STR.
\par
This study describes details for constructing the LAM STR. Given a sample $\{(S_iY_i,S_iD_i,S_i)\}_{i=1}^n$, $\theta$ is estimated by the best regular estimator $\hat{\theta}_n$ described in \cref{ex:experimental-study-with-non-random-sampling} on page \pageref{ex:example2-fourth}. This estimator satisfies $\sqrt{n}(\hat{\theta}_n - \theta_n(h)) \overset{h}{\leadsto} \mathbb{G}_0 \sim N(0,\Sigma_{\theta_0})$ for any $h$. Given a $\sqrt{n}$ consistent estimator $\hat{\Sigma}_n$ for $\Sigma_{\theta_0}$, estimate the law of $\mathbb{G}_0$ by drawing $G_l,l=1,\dots,L$ independently from $N(0,\hat{\Sigma}_n)$, where $L = 1000$. The Hadamard derivatives, $\tau_{L,\theta_0}'(b)$ and $\tau_{U,\theta_0}'(b)$, are estimated by $\hat{\tau}_{L,n}'(b) = \dparen{ \frac{\partial \tau_L}{\partial \theta}|_{\theta = \hat{\theta}_n} }^\intercal b$ and $\hat{\tau}_{U,n}'(b) = \dparen{ \frac{\partial \tau_U}{\partial \theta}|_{\theta = \hat{\theta}_n} }^\intercal b$, respectively. In the estimator for the directional derivative of $\max\{x,0\}$, set $\epsilon_n = n^{1/3}$. This study sets the truncation constant $M$ for the loss function to be $1000$. As the efficiency bound $\Sigma_{\theta_0}$ is nonsingular, the support of $\mathbb{G}_0$ corresponds to $\mathbb{R}^3$. It implies that the support can be approximated by expanding cubes. Hence, set $\hat{S}_n = S_n = [-\lambda_n,\lambda_n]^3$, where $\lambda_n = n^{1/3}$. The compact set $K_m$ over which the optimal adjustment term is searched is set to $[-2,2]^3$. Given these specifications, solve the next minimization problem to obtain the data-dependent adjustment term $\hat{w}_n$:
\begin{equation*}
    \adjustlimits{\inf}_{w \in K_m}{\sup}_{s \in \hat{S}_n} \max\dbrace*{
    \begin{aligned}
        &\tau_L^-(\hat{\theta}_n)\frac{1}{L}\sum_{l=1}^L \ell_M\dparen*{ \hat{\kappa}_n'(G_l + w + s) - \hat{\kappa}_n'(s) },\\
        -&\tau_U^+(\hat{\theta}_n)\frac{1}{L}\sum_{l=1}^L \ell_M\dparen*{ \hat{\kappa}_n'(G_l + w + s) - \hat{\kappa}_n'(s) }
    \end{aligned}        
    }.
\end{equation*}
\par
For each $h \in H$, draw $J = 5000$ independent samples of data $\mathbf{Z}_n^{(j)},j = 1,\dots,J$. Each observation $Z_i^{(j)}$ in data $\mathbf{Z}_n^{(j)} = (Z_1^{(j)},\dots,Z_{n}^{(j)})$ comprises $(S_iY_i,S_iD_i,S_i)$ drawn from the data generating process characterized by $\theta_n(h)$ and $\Pr(D=1|S=1) = 1/2$. Based on this data, obtain the best regular estimate $\hat{\theta}_n^{(j)}$ and data-dependent adjustment term $\hat{w}_n^{(j)}$. Given the collection, $\hat{\theta}_n^{(1)},\dots,\hat{\theta}_n^{(J)}$ and $\hat{w}_n^{(1)},\dots,\hat{w}_n^{(J)}$, estimate $\mathbb{E}_{n,h}[\delta_n^{\lam}]$ by $J^{-1}\sum_{j=1}^J \kappa(\hat{\theta}_n^{(j)} + \hat{w}_n^{(j)}/\sqrt{n})$. Finally, the RIMR is calculated following equation \cref{eq:rimr-expansion}. Similarly, calculate the RIMR of the plug-in STR by ignoring the data-dependent adjustment term. The size $n$ of a sample is set to 300 or 1000.
\par
\cref{fig:rimr-comparison} plots the relation between the local parameter $h$ and RIMR by the size of a sample. For both STRs, the RIMR is maximized at the dgp such that the optimal treatment rule $\kappa(\theta)$ is nondifferentiable (i.e., at $h = 0$). Importantly, the maximized RIMR of the LAM STR is lower than that of the plug-in STR. The LAM STR is designed to minimize the (asymptotic) maximum RIMR, and thus this result is as expected. Conversely when $h \neq 0$, $\kappa(\theta)$ is differentiable at $\theta_n(h)$. Nevertheless, the RIMR is relatively large in a neighborhood of zero. The LAM STR outperforms the plug-in STR in such a neighborhood. 
    
\begin{figure}[t]
     \centering
     \caption{Comparison of the Plug-in and LAM STRs}
     \begin{subfigure}[b]{0.48\textwidth}
         \centering
         \caption{$n = 300$}
         \includegraphics[width=\textwidth]{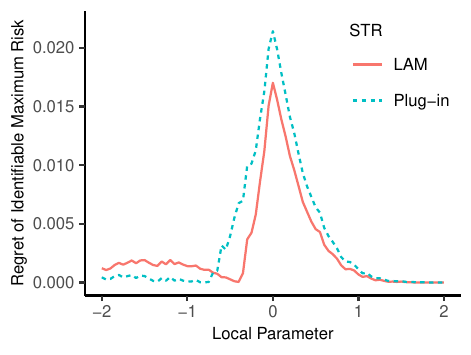}
         \label{fig:rimr-comparison-300}
     \end{subfigure}
     \hfill
     \begin{subfigure}[b]{0.48\textwidth}
         \centering
         \caption{$n = 1000$}
         \includegraphics[width=\textwidth]{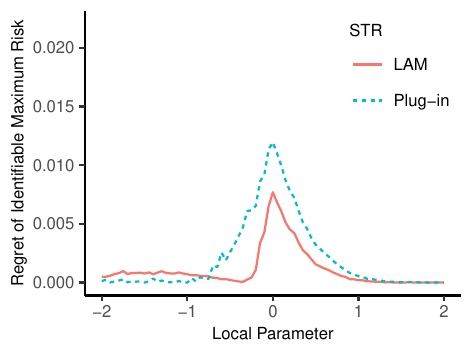}
         \label{fig:rimr-comparison-1000}
     \end{subfigure}        
    \label{fig:rimr-comparison}
\end{figure}

\section{Conclusion}\label{sec:conclusion}

This study analyzed the statistical treatment choice problem under the partial identification of the ATE. Solving the finite-sample problem exactly is difficult for general data generating processes. Instead, this study conducted the local asymptotic analysis and derived the LAM STRs. Specifically, the LAM STR takes the form of
\begin{equation*}
    \kappa\dparen*{ \hat{\theta}_n + \frac{w_{\theta_0}^*}{\sqrt{n}} },
\end{equation*}
where $\hat{\theta}_n$ is the efficient estimator for the true value $\theta_0$ of parameters, $w_{\theta_0}^*$ is the optimal adjustment term that can depend on $\theta_0$, and $\kappa$ is the function that outputs the optimal treatment rule given a value of the parameter. The adjustment term $w_{\theta_0}^*$ generally differs from zero. This implies that the LAM STR differs from the plug-in STR $\kappa(\hat{\theta}_n)$. In practice, the true parameter value $\theta_0$ is unknown; hence, so is the optimal adjustment term $w_{\theta_0}^*$. To alleviate this challenge, the study developed the data-dependent adjustment term $\hat{w}_n$, showing that the STR $\kappa(\hat{\theta}_n + \hat{w}_n/\sqrt{n})$ remained LAM.

\appendix

\appendix

\section{Proofs of Results}
\label{sec:proofs-of-results-in-texts}

\begin{proof}[Proof of \cref{lem:kappa-directionally-differentiable}]
    (i) Observe that 
    \begin{equation*}
        \kappa(\theta) = \chi \circ \phi \circ \tau_B(\theta),
    \end{equation*}
    where
    \begin{align*}
        &\tau_B:\Theta \ni \theta \mapsto (\tau_U(\theta),-\tau_L(\theta)) \in \mathbb{R}^2,\\
        &\phi: \mathbb{R}^2 \ni (x_1,x_2) \mapsto  (\max\{x_1,0\},\max\{x_2,0\}) \in \mathbb{R}^2,\\
        &\chi: \mathbb{R}^2 \ni (x_1,x_2) \mapsto \frac{x_1}{x_1 + x_2} \in \mathbb{R}.
    \end{align*}
    Given the chain rule for Hadamard directional differentiable functions, it suffices to confirm that $\tau_B$, $\phi$, and $\chi$ are Hadamard directionally differentiable.
    \par
    First, $\tau_B$ is Hadamard directionally differentiable at $\theta_0$ with derivative $\tau_{B,\theta_0}'(b) = (\tau_{U,\theta_0}'(b),-\tau_{L,\theta_0}'(b))$ for $b \in \mathbb{B}$. To this end, let $\{b_n\} \subset \mathbb{B}$ and $\{t_n\} \subset \mathbb{R}_{+}$ be sequences such that $t_n \downarrow 0$, $b_n \to b$ with $b \in \mathbb{B}$ as $n \to \infty$ and $\theta_0 + t_nb_n \in \Theta$ for all $n$. Noting that
    \begin{align*}
        &\norm*{ \frac{\tau_B(\theta_0 + t_nb_n) - \tau_B(\theta_0)}{t_n} - \tau_{B,\theta_0}'(b) }\\
        \leq {} & \abs*{ \frac{ \tau_U\dparen*{\theta_0 + t_nb_{n}} - \tau_U(\theta_0) }{t_n}  - \tau_{U,\theta_0}'(b) } + \abs*{ \frac{ \tau_L\dparen*{\theta_0 + t_nb_{n}} - \tau_L(\theta_0) }{t_n}  - \tau_{L,\theta_0}'(b) },
    \end{align*}
    \cref{assumptionenum:directional-differentiability} implies the convergence of the upper bound to zero, which implies the Hadamard directional differentiability of $\tau_B$.
    \par
    Next, the map $\phi$ is Hadamard directionally differentiable at $y = (y_1,y_2) = \tau_B(\theta_0)$ with derivative $\phi_y'(x) = (m_{y_1}'(x_1), m_{y_2}'(x_2))$. For any sequences $x_n = (x_{n,1},x_{n,2}) \in \mathbb{R}^2$ and $t_n > 0$ such that $t_n \downarrow 0$ and $x_n \to x = (x_1,x_2) \in \mathbb{R}^2$, 
    \begin{align*}
        &\norm*{ \frac{\phi(y + t_nx_n) - \phi(y)}{t_n} - \phi_y'(x)}\\
        \leq {} & \abs*{ \frac{ \max\{y_1 + t_nx_{n,1},0\} - \max\{y_1,0\} }{t_n}  - m_{y_1}'(x_1) } + \abs*{ \frac{ \max\{y_2 + t_nx_{n,2},0\} - \max\{y_2,0\} }{t_n}  - m_{y_2}'(x_2) }.
    \end{align*}
    As $\max\{x,0\}$ is Hadamard directionally differentiable at $y_1$ and $y_2$, the right-hand side of the above inequality approaches zero as $n \to \infty$. This implies the directional differentiability of $\phi$.
    \par
    Third, the map $\chi$ is Hadamard directionally differentiable at $y = (y_1,y_2) = \phi\circ \tau_B(\theta_0)$. Note that $\phi\circ\tau_B(\theta_0) \neq 0$ by $\tau_L(\theta_0) < \tau_U(\theta_0)$, where $\chi\circ\phi\circ\tau_B(\theta_0)$ is well-defined. Then, $\chi$ is obviously Hadamard differentiable at $y$ with its derivative given by
    \begin{equation*}
        \chi_y'(x) = \frac{y_2}{(y_1+y_2)^2}x_1 - \frac{y_1}{(y_1+y_2)^2}x_2 \quad \text{for every} \quad x = (x_1,x_2) \in \mathbb{R}^2.
    \end{equation*}
    \par
    Combining the observations so far and applying the chain rule \parencite[][Proposition~3.6]{Shapiro1990}, $\kappa(\theta)$ is Hadamard directionally differentiable at $\theta_0$ with derivative given by
    \begin{equation*}
        \kappa_{\theta_0}'(h) = \chi_{\phi\circ\tau_B(\theta_0)}'\circ \phi_{\tau_B(\theta_0)}' \circ \tau_{B,\theta_0}'(h)
    \end{equation*}
    for every $h \in \mathbb{B}$.
    \par
    (ii) The Lipschitz continuity of $\kappa_{\theta_0}'$ can be confirmed directly. Indeed, for any pair $(h_1,h_2) \in \mathbb{B}^2$, 
    \begin{align*}
        \abs*{\kappa_{\theta_0}'(h_1) - \kappa_{\theta_0}'(h_2)}
        &\leq \frac{\tau_L^-(\theta_0)}{\dparen[\big]{\tau_U^+(\theta_0) + \tau_L^-(\theta_0)}^2}\abs*{\tau_{U,\theta_0}'(h_1) - \tau_{U,\theta_0}'(h_2)}\\
        &\quad + \frac{\tau_U^+(\theta_0)}{\dparen[\big]{\tau_U^+(\theta_0) + \tau_L^-(\theta_0)}^2}\abs*{\tau_{L,\theta_0}'(h_1) - \tau_{L,\theta_0}'(h_2)}\\
        &\leq  \frac{\tau_L^-(\theta_0)\cdot C_{\tau_{U,\theta_0}'} + \tau_U^+(\theta_0)\cdot C_{\tau_{L,\theta_0}'}}{\dparen[\big]{\tau_U^+(\theta_0) + \tau_L^-(\theta_0)}^2}\normB{h_1 - h_2},
    \end{align*}
    where the last inequality follows from \cref{assumptionenum:lipschitz-continuity}.
    \par
    (iii) For every $v \in \mathbb{R}$, let $\bar{v} = \dparen{ \tau_U^+(\theta_0) + \tau_L^-(\theta_0) }^{-1}v$. Noting that the Hadamard directional derivative of $\max\{x,0\}$ is translation equivariant, the study obtains
    \begin{align*}
        \kappa_{\theta_0}'\dparen*{ b } + v
        = {} & \frac{ \tau_L^-(\theta_0) }{ \dparen*{ \tau_U^+(\theta_0) + \tau_L^-(\theta_0) }^2 }\dbrace*{ m_{\tau_U(\theta_0)}'\dparen*{ \tau_{U,\theta_0}'\dparen*{ b } } + \bar{v} }\\
        &- \frac{ \tau_U^+(\theta_0) }{ \dparen*{ \tau_U^+(\theta_0) + \tau_L^-(\theta_0) }^2 }\dbrace*{ m_{-\tau_L(\theta_0)}'\dparen*{ -\tau_{L,\theta_0}'\dparen*{ b } } - \bar{v} }\\
        = {} & \frac{ \tau_L^-(\theta_0) }{ \dparen*{ \tau_U^+(\theta_0) + \tau_L^-(\theta_0) }^2 }\dbrace*{ m_{\tau_U(\theta_0)}'\dparen*{ \tau_{U,\theta_0}'\dparen*{ b } + \bar{v} } }\\
        & - \frac{ \tau_U^+(\theta_0) }{ \dparen*{ \tau_U^+(\theta_0) + \tau_L^-(\theta_0) }^2 }\dbrace*{ m_{-\tau_L(\theta_0)}'\dparen*{ - \dparen*{\tau_{L,\theta_0}'\dparen*{ b } + \bar{v} } } }\\
        = {} & \frac{ \tau_L^-(\theta_0) }{ \dparen*{ \tau_U^+(\theta_0) + \tau_L^-(\theta_0) }^2 }\dbrace*{ m_{\tau_U(\theta_0)}'\dparen*{ \tau_{U,\theta_0}'\dparen*{ b + \tilde{v} } } }\\
        &- \frac{ \tau_U^+(\theta_0) }{ \dparen*{ \tau_U^+(\theta_0) + \tau_L^-(\theta_0) }^2 }\dbrace*{ m_{-\tau_L(\theta_0)}'\dparen*{ - \tau_{L,\theta_0}'\dparen*{ b + \tilde{v} } } }\\
        = {} & \kappa_{\theta_0}'\dparen*{ b + \tilde{v} },
    \end{align*}
    where the third equality follows from \cref{assumptionenum:translation-equivariance}.
\end{proof}
\begin{proof}[Proof of \cref{thm:rimr-lower-bound}]
Regard the set $\mathcal{I} = \{I \subset T(P_0): I \text{ is finite}\}$ as the directed set with its partial order defined by set inclusion. There exists a subnet $\{n_I\}_{I \in \mathcal{I}}$ of $\{n\}$ such that
\begin{equation*}
    R_M \coloneqq \adjustlimits{\sup}_{I}{\liminf}_{n \to \infty}\sup_{h \in I} \sqrt{n}\mathrm{RIMR}_{M,n}(\delta_n,h) = \adjustlimits{\lim}_{I}{\sup}_{h\in I}\sqrt{n}\mathrm{RIMR}_{M,n_I}(\delta_{n_I},h)
\end{equation*}
Indeed, one can construct such a subnet by choosing $n_I \in \mathbb{N}$ such that $n_I \geq |I|$ and
\begin{equation*}
    \abs*{ \sup_{h \in I} \rimr_{M,n_I}(\delta_{n_I},h) - \adjustlimits{\liminf}_{n \to \infty}{\sup}_{h \in I} \rimr_{M,n} (\delta_n,h) } < \frac{1}{|I|}
\end{equation*}
for any $I \in \mathcal{I}$. By \cref{lem:rimr-asymptotic-representation}, for any $h \in T(P_0)$, there is a further subnet $\{n_{J}\} \subset \{n_{I}\}_{I \in \mathcal{I}}$ such that
\begin{equation*}
    \sqrt{n_{J}}\dparen[\big]{\delta_{n_{J}} - \kappa(\theta_{n_{J}}(h+\bm{a}^\intercal \bm{h}))} \overset{h+\bm{a}^\intercal \bm{h}}{\leadsto} L_{h+\bm{a}^\intercal \bm{h}}
\end{equation*}
for every $\bm{a} \in \mathbb{R}^m$, where $\bm{h} = (h_1,\dots,h_m)$ is the first $m$ elements of a complete orthonormal basis of $\overline{T(P_0)}$. Then, one can bound $R_M$ from below as follows:
\begin{align}
    R_M
    & \geq \liminf_{J} \mathrm{RIMR}_{M,n_{J}}(\delta_{n_{J}},h + \bm{a}^\intercal \bm{h})\nonumber\\
    & \geq \max\dbrace*{
    \begin{aligned}
        & \tau_L^-(\theta_0)\liminf_{J}\expect_{n_{J},h+\bm{a}^\intercal \bm{h}*}\dbrack*{\ell_M\dparen*{\sqrt{n_{J}}\dparen*{\delta_{n_{J}} - \kappa\dparen*{\theta_{n_{J}}(h+\bm{a}^\intercal \bm{h})}}}},\\
        - &\tau_U^+(\theta_0)\limsup_{J}\expect_{n_{J},h+\bm{a}^\intercal \bm{h}*}\dbrack*{\ell_M\dparen[\Big]{\sqrt{n_{J}}\dparen*{\delta_{n_{J}} - \kappa\dparen*{\theta_{n_{J}}(h+\bm{a}^\intercal \bm{h})}}}}
    \end{aligned}
    }\nonumber\\
    &= \max\dbrace*{\tau_L^-(\theta_0)\int \ell_M\mathrm{d}L_{h+\bm{a}^\intercal \bm{h}},~-\tau_U^+(\theta_0)\int \ell_M\mathrm{d}L_{h+\bm{a}^\intercal \bm{h}}},
    \label{eq:rimr-lower-bound-ineq1}
\end{align}
where the last equality follows from the definition of the weak convergence and asymptotic measurability of $\sqrt{n_{J}}(\delta_{n_{J}} - \kappa(\theta_{n_{J}}(h + \bm{a}^\intercal \bm{h})))$ that is implied by its weak convergence to a tight Borel measure \parencite[][Lemma~1.3.8]{Vaart1996}. Note that this inequality holds for every $\bm{a} \in \mathbb{R}^m$. Thus, integrating both sides of \cref{eq:rimr-lower-bound-ineq1} with respect to $\bm{a} \sim N(0,\lambda^{-1}I_m)$  gives
\begin{align}
    R_M
    \geq & \int \max\dbrace*{\tau_L^-(\theta_0)\int \ell_M\mathrm{d}L_{h+\bm{a}^\intercal \bm{h}},~-\tau_U^+(\theta_0)\int \ell_M\mathrm{d}L_{h+\bm{a}^\intercal \bm{h}}} N(0,\lambda^{-1}I_m)(\mathrm{d}\bm{a})\nonumber\\
    \geq & \max\dbrace*{
    \begin{aligned}
        &\tau_L^-(\theta_0)\int \dparen*{\int \ell_M\mathrm{d}L_{h+\bm{a}^\intercal \bm{h}}}N(0,\lambda^{-1}I_m)(\mathrm{d}\bm{a}),\\
        -&\tau_U^+(\theta_0)\int \dparen*{\int \ell_M\mathrm{d}L_{h+\bm{a}^\intercal \bm{h}}}N(0,\lambda^{-1}I_m)(\mathrm{d}\bm{a}).
    \end{aligned}
    }
    \label{eq:rimr-lower-bound-ineq2}
\end{align}
\cref{lem:rimr-asymptotic-representation} posits that the law of $\int L_{h+\bm{a}^\intercal \bm{h}}N(0,\lambda^{-1}I_m)(\mathrm{d}\bm{a})$ is equal to the law of 
\begin{equation*}
    V_{\lambda,m} - \kappa_{\theta_0}'\dparen*{\mathbb{G}_{\lambda,m} + W_{\lambda,m} + \Dot{\theta}_0(h)} + \kappa_{\theta_0}'\dparen*{\Dot{\theta}_0(h)},    
\end{equation*}
where $\mathbb{G}_{\lambda,m} = \sum_{j=1}^m G_j\Dot{\theta}_0(h_j)/(1+\lambda)^{1/2}$ for $(G_1,\cdots,G_m) \sim N(0,I_m)$ and $(V_{\lambda,m},W_{\lambda,m}) \in \mathbb{R}\times\mathbb{B}$ is a tight random element independent of $G_{\lambda,m}$. Hence, the right-hand side of \cref{eq:rimr-lower-bound-ineq2} can be further bounded from below by 
\begin{align*}
    \MoveEqLeft
    % R_M\\
    \max\dbrace*{
        \begin{aligned}
            &\tau_L^-(\theta_0)\expect_h\dbrack*{\ell_M\dparen*{V_{\lambda,m} - \kappa_{\theta_0}'\dparen*{\mathbb{G}_{\lambda,m} + W_{\lambda,m} + \Dot{\theta}_0(h)} + \kappa_{\theta_0}'\dparen*{\Dot{\theta}_0(h)}}},\\
            -&\tau_U^+(\theta_0)\expect_h\dbrack*{\ell_M\dparen*{V_{\lambda,m} - \kappa_{\theta_0}'\dparen*{\mathbb{G}_{\lambda,m} + W_{\lambda,m} + \Dot{\theta}_0(h)} + \kappa_{\theta_0}'\dparen*{\Dot{\theta}_0(h)}}}
        \end{aligned}
    }\\
    = {} & \max\dbrace*{
        \begin{aligned}
            &\tau_L^-(\theta_0)\int \expect\dbrack*{\ell_M\dparen*{v - \kappa_{\theta_0}'\dparen*{\mathbb{G}_{\lambda,m} + w + \Dot{\theta}_0(h)} + \kappa_{\theta_0}'\dparen*{\Dot{\theta}_0(h)}}}Q_{h,\lambda,m}(\mathrm{d}(v,w)),\\
            -&\tau_U^+(\theta_0)\int \expect\dbrack*{\ell_M\dparen*{v - \kappa_{\theta_0}'\dparen*{\mathbb{G}_{\lambda,m} + w + \Dot{\theta}_0(h)} + \kappa_{\theta_0}'\dparen*{\Dot{\theta}_0(h)}}}Q_{h,\lambda,m}(\mathrm{d}(v,w)),
        \end{aligned}
    }
\end{align*}
where $Q_{h,\lambda,m}$ denotes the joint distribution of $(V_{\lambda,m},W_{\lambda,m})$ under $h \in T(P_0)$, and the last equality follows from the independence of $\mathbb{G}_{\lambda,m}$ and $(V_{\lambda,m},W_{\lambda,m})$. Note that the arguments thus far hold for each $h \in T(P_0)$. In particular, the inequalities hold for a countable dense subset $T_0 = \{t_1,t_2,\cdots\}$ of $T(P_0)$. Letting
\begin{equation*}
    \rho_{M,h,\lambda,m}(v,w) \coloneqq \expect\dbrack*{\ell_M\dparen*{v - \kappa_{\theta_0}'\dparen*{\mathbb{G}_{\lambda,m} + w + \Dot{\theta}_0(h)} + \kappa_{\theta_0}'\dparen*{\Dot{\theta}_0(h)}}},
\end{equation*}
the study obtains
\begin{align}
    \begin{split}
        R_M \geq & \max_{j \leq J}\max\dbrace*{
        \begin{aligned}
            &\tau_L^-(\theta_0) \int \rho_{M,t_j,\lambda,m}(v,w)Q_{t_j,\lambda,m}(\mathrm{d}(v,w)),\\
            -&\tau_U^+(\theta_0) \int \rho_{M,t_j,\lambda,m}(v,w)Q_{t_j,\lambda,m}(\mathrm{d}(v,w))
        \end{aligned}
        }
    \end{split}
    \label{eq:rimr-lower-bound-ineq3}
\end{align}
for all $J \in \mathbb{N}$.
\par
Next, apply the purification result in \textcite{Feinberg2006} to bound the right-hand side of \cref{eq:rimr-lower-bound-ineq3} from below by the function without integrals. To this end, first, define the state space, action space, decision rule, loss function, and risk function. For clarity, make the notation and terminology correspond as closely as possible to that of \textcite{Feinberg2006}. Let the state space $(X,\mathcal{X})$ be $([0,J],\mathcal{B}([0,J]))$, where $J \in \mathbb{N}$. The action space is specified as follows. Fix $\epsilon \in (0,1)$. Noting that $Q_{t_j,\lambda,m}$ is a tight Borel probability measure, it is possible to find a compact set $A_j \subset \mathbb{R}\times\mathbb{B}$ such that $Q_{t_j,\lambda,m}(A_j) \geq 1 - \epsilon$. Then, $A = \bigcup_{j=1}^J A_j$ is compact, which implies the completeness and separability of $A$. Define the action space as $(A,\mathcal{A})$, where $\mathcal{A}$ is the Borel $\sigma$-field obtained by taking the intersection of $A$ and each element of $\mathcal{B}(\mathbb{R}\times\mathbb{B})$.\footnote{In defining the action space, the study cannot use $(\mathbb{R}\times\mathbb{B},\mathcal{B}(\mathbb{R}\times\mathbb{B}))$, as $\mathbb{R}\times\mathbb{B}$ need not be separable. To circumvent this challenge, restrict the action space to the compact set constructed as in the proof of Theorem~1 in \textcite{Ponomarev2022}.} For each $j = 1,\dots,J$, define the Borel probability measure $\widetilde{Q}_{t_j,\lambda,m}$ on $(A,\mathcal{A})$ by 
\begin{equation*}
    \widetilde{Q}_{t_j,\lambda,m}(B) \coloneqq \frac{Q_{t_j,\lambda,m}(B \cap A)}{Q_{t_j,\lambda,m}(A)} \quad \text{for every} \quad B \in \mathcal{A}.
\end{equation*}
With this probability measure, 
\begin{align}
    \int \rho_{M,t_j,\lambda,m}\mathrm{d}Q_{t_j,\lambda,m} = Q_{t_j,\lambda,m}(A) \int \rho_{M,t_j,\lambda,m}\mathrm{d}\widetilde{Q}_{t_j,\lambda,m} + \int_{\mathbb{R}\times\mathbb{B}\setminus A} \rho_{M,t_j,\lambda,m}\mathrm{d}Q_{t_j,\lambda,m}.
    \label{eq:purification-equation1}
\end{align}
Let $\mu_k$ be the uniform distribution over the interval $[k-1,k]$ for $k = 1,\cdots,J$, and let $\Omega = \{\mu_1,\dots,\mu_J\}$ be the possible distributions over the state space $(X,\mathcal{X})$. Moreover, let $\pi$ be the decision rule defined by
\begin{equation*}
    \pi(B|x) \coloneqq \sum_{j=1}^J\widetilde{Q}_{t_j,\lambda,m}(B)\ind\{x \in [j-1,j]\}
\end{equation*}
for every $B \in \mathcal{A}$ and $x \in \mathcal{X}$. Define the loss by the vector function
\begin{equation*}
    \rho(\mu,x,a) = (\rho_1(\mu,x,a),\dots,\rho_J(\mu,x,a)) = (\rho_{M,t_1,\lambda,m}(a),\dots,\rho_{M,t_J,\lambda,m}(a))
\end{equation*}
for every $\mu \in \Omega$, $x \in X$, and $a = (v,w) \in A$. For the decision rule $\pi$ and $\mu_k \in \Omega$, define the risk vector by
\begin{equation*}
    \mathcal{R}(\mu_k,\pi) = (\mathcal{R}_1(\mu_k,\pi),\dots,\mathcal{R}_J(\mu_k,\pi)),
\end{equation*}
where 
\begin{equation*}
    \mathcal{R}_j(\mu_k,\pi) \coloneqq \int_X\int_A\rho_j(\mu_k,x,a)\pi(\mathrm{d}a|x)\mu_k(\mathrm{d}x) \quad \text{for every} \quad j = 1,\cdots,J.
\end{equation*}
Thus, $\mathcal{R}_j(\mu_j,\pi) = \int \rho_{M,t_j,\lambda,m} \mathrm{d}\widetilde{Q}_{t_j,\lambda,m}$ by construction.
\par
Now, Theorem~1 in \textcite{Feinberg2006} implies that there exists a measurable map $(v,w):X \to A$ such that
\begin{equation*}
    \mathcal{R}_j(\mu_k,\pi) = \int_X \rho_j(\mu_k,x,(v(x),w(x)))\mu_k(\mathrm{d}x) =  \int_{k-1}^k \rho_{M,t_j,\lambda,m}(v(x),w(x))\mathrm{d}x
\end{equation*}
for all $j = 1,\dots,J$ and $k = 1,\dots,J$.
Using this result and \cref{eq:purification-equation1}, it is possible to bound the right-hand side of \cref{eq:rimr-lower-bound-ineq3} from below by
\begin{align*}
    &\max_{j \leq J}\max\dbrace*{
    \begin{aligned}
        &\tau_L^-(\theta_0)Q_{t_j,\lambda,m}(A)\mathcal{R}_j(\mu_j,\pi) + \tau_L^-(\theta_0)\int_{\mathbb{R}\times\mathbb{B}\setminus A} \rho_{M,t_j,\lambda,m}\mathrm{d}Q_{t_j,\lambda,m},\\
        -&\tau_U^+(\theta_0)Q_{t_j,\lambda,m}(A)\mathcal{R}_j(\mu_j,\pi) - \tau_U^+(\theta_0)\int_{\mathbb{R}\times\mathbb{B}\setminus A} \rho_{M,t_j,\lambda,m}\mathrm{d}Q_{t_j,\lambda,m}
    \end{aligned}
    }\\
    &\geq \max_{j \leq J}\max\dbrace*{
    \begin{aligned}
        &\tau_L^-(\theta_0)Q_{t_j,\lambda,m}(A)\mathcal{R}_j(\mu_j,\pi) - M\tau_L^-(\theta_0)Q_{t_j,\lambda,m}(\mathbb{R}\times\mathbb{B}\setminus A),\\
        -&\tau_U^+(\theta_0)Q_{t_j,\lambda,m}(A)\mathcal{R}_j(\mu_j,\pi) - M\tau_U^+(\theta_0)Q_{t_j,\lambda,m}(\mathbb{R}\times\mathbb{B}\setminus A)
    \end{aligned}
    }\\
    &\geq \max_{j \leq J}\max\dbrace*{
        Q_{t_j,\lambda,m}(A)
        \max\dbrace*{
            \begin{aligned}
                &\tau_L^-(\theta_0)\mathcal{R}_j(\mu_j,\pi),\\
                -&\tau_U^+(\theta_0)\mathcal{R}_j(\mu_j,\pi)
            \end{aligned}
        }
        - M(y_U - y_L)Q_{t_j,\lambda,m}(\mathbb{R}\times\mathbb{B}\setminus A)
    }\\
    &\geq (1-\epsilon)\max_{j \leq J}\max\dbrace*{
    \begin{aligned}
        &\tau_L^-(\theta_0)\mathcal{R}_j(\mu_j,\pi),\\
        -&\tau_U^+(\theta_0)\mathcal{R}_j(\mu_j,\pi)
    \end{aligned}
    } - \epsilon M(y_U - y_L) \\
    &= (1-\epsilon)\max_{j \leq J}\max\dbrace*{
    \begin{aligned}
        &\tau_L^-(\theta_0)\int_{j-1}^j \rho_{M,t_j,\lambda,m}(v(x),w(x))\mathrm{d}x,\\
        -&\tau_U^+(\theta_0)\int_{j-1}^j \rho_{M,t_j,\lambda,m}(v(x),w(x))\mathrm{d}x
    \end{aligned}
    } - \epsilon M(y_U - y_L) \\
    &\geq (1-\epsilon)\adjustlimits{\inf}_{(v,w)\in\mathbb{R}\times\mathbb{B}}{\max}_{j \leq J}\max\dbrace*{
    \begin{aligned}
        &\tau_L^-(\theta_0)\int_{j-1}^j \rho_{M,t_j,\lambda,m}(v,w)\mathrm{d}x,\\
        -&\tau_U^+(\theta_0)\int_{j-1}^j \rho_{M,t_j,\lambda,m}(v,w)\mathrm{d}x
    \end{aligned}
    } - \epsilon M(y_U - y_L)\\
    &\geq (1-\epsilon)\adjustlimits{\inf}_{(v,w)\in\mathbb{R}\times\mathbb{B}}{\max}_{j \leq J}\max\dbrace*{
    \begin{aligned}
        &\tau_L^-(\theta_0) \rho_{M,t_j,\lambda,m}(v,w),\\
        -&\tau_U^+(\theta_0) \rho_{M,t_j,\lambda,m}(v,w)
    \end{aligned}
    } - \epsilon M(y_U - y_L),
\end{align*}
where the first inequality follows from $-M \leq \ell_M \leq M$, the second inequality follows from $y_L - y_U \leq \tau_j(\theta_0) \leq y_U - y_L$ for $j \in \{L,U\}$, the third inequality follows from $Q_{t_j,\lambda,m}(A) \geq Q_{t_j,\lambda,m}(A_j) \geq 1-\epsilon$ for all $j=1,\dots,J$, the first equality follows from the purification result, the fourth inequality follows as the range of $(v,w):X \to A$ is contained in $\mathbb{R}\times\mathbb{B}$, and the last inequality follows as the integrand does not depend on $x$. Moreover, the right-hand side of the above inequalities holds for every $\epsilon$. Letting $\epsilon \downarrow 0$, 
\begin{align}
    R_M \geq \adjustlimits{\inf}_{(v,w)\in\mathbb{R}\times\mathbb{B}}{\max}_{j \leq J}\max\dbrace*{ \tau_L^-(\theta_0) \rho_{M,t_j,\lambda,m}(v,w), -\tau_U^+(\theta_0) \rho_{M,t_j,\lambda,m}(v,w) }.
    \label{eq:rimr-lower-bound-ineq4}
\end{align}
\par
Recall that $\rho_{M,h,\lambda,m}(v,w)$ is given by 
\begin{equation*}
    \expect\dbrack*{\ell_M\dparen*{v - \kappa_{\theta_0}'\dparen*{\mathbb{G}_{\lambda,m} + w + \Dot{\theta}_0(h)} + \kappa_{\theta_0}'\dparen*{\Dot{\theta}_0(h)}}},
\end{equation*}
where $\mathbb{G}_{\lambda,m} = \sum_{j=1}^m G_j\Dot{\theta}_0(h_j)/(1+\lambda)^{1/2}$ with $(G_1,\cdots,G_m) \sim N(0,I_m)$. As $\lambda \downarrow 0$, $\mathbb{G}_{\lambda,m} \leadsto \mathbb{G}_m = \sum_{j=1}^m G_j\Dot{\theta}_0(h_j)$, and, for any $b^* \in \mathbb{B}^*$, the variable $b^*\mathbb{G}_m = \sum_{j=1}^m G_j b^*\Dot{\theta}_0(h_j)$ is normally distributed with $\expect [b^*\mathbb{G}_m] = 0$ and $\expect [(b^*\mathbb{G}_m)^2] = \sum_{j=1}^m \dangle{ \Dot{\theta}_0^*b^*, h_j }^2$. As $\{h_1,h_2,\cdots\}$ is a complete orthonormal basis, the Parseval's identity implies $\expect [b^*\mathbb{G}_m^2] \to \norm{\Dot{\theta}_0^*b^*}_{2,P_0}^2$ as $m \to \infty$, which followed by $b^*\mathbb{G}_m \leadsto N(0,\norm{\Dot{\theta}_0^*b^*}_{2,P_0}^2)$. Moreover, as demonstrated in the proof of Theorem~3.11.2 in \textcite{Vaart1996}, $\mathbb{G}_m,m \in \mathbb{N}$ is uniformly tight, where it has a subsequence $\{m_k\}_k$ such that $\mathbb{G}_{m_k} \leadsto \mathbb{G}_0$ as $k \to \infty$. As the subsequence $b^*\mathbb{G}_{m_k}$ also weakly converges to the same weak limit, $\mathbb{G}_0$ satisfies $b^*\mathbb{G}_0 \sim N(0,\norm{\Dot{\theta}_0^*b^*}_{2,P_0}^2)$ for every $b^* \in \mathbb{B}^*$. Moreover, the law of $\mathbb{G}_0$ is concentrated on the closure of $\Dot{\theta}_0(T(P_0))$.
\par
Define the map $f_{M}(\cdot,v,w,s):\mathbb{B}\to\mathbb{R}$ by $f_{M}(g,v,w,s) \coloneqq \ell_M(v - \kappa_{\theta_0}'(g + w + s) + \kappa_{\theta_0}'(s))$ for every $(v,w,s) \in \mathbb{R}\times\mathbb{B}\times\mathbb{B}$. \cref{lem:lipschitzness-of-bounded-loss} shows that $f_M$ is bounded and Lipschitz continuous with its Lipschitz constant not depending on the choice of $(v,w,s)$. Hence, the collection of functions $\dbrace*{f_M(g,v,w,s) : (v,w,s) \in \mathbb{R}\times\mathbb{B}\times\mathbb{B}}$ is bounded and uniformly equicontinuous. Then, Theorem~1.12.1 in \textcite{Vaart1996} implies that 
\begin{equation*}
    \sup_{(v,w,s) \in \mathbb{R}\times\mathbb{B}\times\mathbb{B}} \abs*{\expect\dbrack*{f_M(\mathbb{G}_{\lambda,m},v,w,s)} - \expect\dbrack*{f_M(\mathbb{G}_m,v,w,s)}} \to 0 \quad \text{as} \quad \lambda \downarrow 0.
\end{equation*}
Note that \cref{eq:rimr-lower-bound-ineq4} holds for arbitrary $\lambda$ and $m$. The preceding uniform convergence allows for interchanging limit and infimum; hence, letting $\lambda \downarrow 0$ on both sides of \cref{eq:rimr-lower-bound-ineq4} gives
\begin{align*}
    R_M
    &\geq \adjustlimits{\lim}_{\lambda \downarrow 0}{\inf}_{(v,w)\in\mathbb{R}\times\mathbb{B}}\max_{j \leq J} \max\dbrace*{
    \begin{aligned}
        &\tau_L^-(\theta_0)\expect\dbrack*{f_M\dparen*{\mathbb{G}_{\lambda,m},v,w,\Dot{\theta}_0(t_j)}},\\
        -&\tau_U^+(\theta_0)\expect\dbrack*{f_M\dparen*{\mathbb{G}_{\lambda,m},v,w,\Dot{\theta}_0(t_j)}}
    \end{aligned}
    }\nonumber\\
    & = \adjustlimits{\inf}_{(v,w)\in\mathbb{R}\times\mathbb{B}}{\max}_{j \leq J} \max\dbrace*{
    \begin{aligned}
        &\tau_L^-(\theta_0)\lim_{\lambda \downarrow 0}\expect\dbrack*{f_M\dparen*{\mathbb{G}_{\lambda,m},v,w,\Dot{\theta}_0(t_j)}},\\
        -&\tau_U^+(\theta_0)\lim_{\lambda \downarrow 0}\expect\dbrack*{f_M\dparen*{\mathbb{G}_{\lambda,m},v,w,\Dot{\theta}_0(t_j)}}
    \end{aligned}
    }\nonumber\\
    & = \adjustlimits{\inf}_{(v,w)\in\mathbb{R}\times\mathbb{B}}{\max}_{j \leq J} \max\dbrace*{
    \begin{aligned}
        &\tau_L^-(\theta_0)\expect\dbrack*{f_M\dparen*{\mathbb{G}_m,v,w,\Dot{\theta}_0(t_j)}},\\
        -&\tau_U^+(\theta_0)\expect\dbrack*{f_M\dparen*{\mathbb{G}_m,v,w,\Dot{\theta}_0(t_j)}}
    \end{aligned}
    }.
\end{align*}
A similar argument further shows that 
\begin{equation*}
    \sup_{(v,w,s) \in \mathbb{R}\times\mathbb{B}\times\mathbb{B}} \abs*{\expect\dbrack*{f_M(\mathbb{G}_{m_k},v,w,s)} - \expect\dbrack*{f_M(\mathbb{G}_0,v,w,s)}} \to 0 \quad \text{as} \quad k \to \infty,
\end{equation*}
and 
\begin{align}
    R_M
    \geq \adjustlimits{\inf}_{(v,w)\in\mathbb{R}\times\mathbb{B}}{\max}_{j \leq J} \max\dbrace*{
    \begin{aligned}
        &\tau_L^-(\theta_0)\expect\dbrack*{f_M\dparen*{\mathbb{G}_0,v,w,\Dot{\theta}_0(t_j)}},\\
        -&\tau_U^+(\theta_0)\expect\dbrack*{f_M\dparen*{\mathbb{G}_0,v,w,\Dot{\theta}_0(t_j)}}
    \end{aligned}
    }.
    \label{eq:rimr-lower-bound-ineq5}
\end{align}
\par
Write $x = (v,w)$ and $\mathcal{X} = \mathbb{R}\times\mathbb{B}$ for ease of notation. Further, let  
\begin{align*}
    F_{M,J}(x) &\coloneqq \max_{j \leq J} \max\dbrace*{
    \begin{aligned}
        &\tau_L^-(\theta_0)\expect\dbrack*{f_M\dparen*{\mathbb{G}_0,v,w,\Dot{\theta}_0(t_j)}},\\
        -&\tau_U^+(\theta_0)\expect\dbrack*{f_M\dparen*{\mathbb{G}_0,v,w,\Dot{\theta}_0(t_j)}}
    \end{aligned}},\\
    F_{M}(x) &\coloneqq \sup_{j \in \mathbb{N}} \max\dbrace*{
    \begin{aligned}
        &\tau_L^-(\theta_0)\expect\dbrack*{f_M\dparen*{\mathbb{G}_0,v,w,\Dot{\theta}_0(t_j)}},\\
        -&\tau_U^+(\theta_0)\expect\dbrack*{f_M\dparen*{\mathbb{G}_0,v,w,\Dot{\theta}_0(t_j)}}
    \end{aligned}},\\
    F(x) &\coloneqq \sup_{j \in \mathbb{N}} \max\dbrace*{
    \begin{aligned}
        &\tau_L^-(\theta_0)\expect\dbrack*{f\dparen*{\mathbb{G}_0,v,w,\Dot{\theta}_0(t_j)}},\\
        -&\tau_U^+(\theta_0)\expect\dbrack*{f\dparen*{\mathbb{G}_0,v,w,\Dot{\theta}_0(t_j)}}
    \end{aligned}}.
\end{align*}
Since $\mathcal{X}$ is completely regular, the Stone-\v{C}ech compactification gives the compact Hausdorff space $\beta \mathcal{X}$ such that $\mathcal{X}$ is a dense subspace of $\beta\mathcal{X}$, and every bounded continuous function $f:\mathcal{X}\to\mathbb{R}$ can be extended uniquely to a continuous function $f^\beta:\beta\mathcal{X}\to\mathbb{R}$. Thus, for each $J \in \mathbb{N}$, $F_{M,J}$ can be extended uniquely to $F_{M,J}^\beta$ \parencite[see, e.g.,][Theorem~38.2]{Munkres2013}. As implied by \cref{lem:lipschitzness-of-bounded-loss}, the sequence $\{F_{M,J}^\beta\}_{J \in \mathbb{N}}$ is equicontinuous and pointwise convergent on $\mathcal{X}$. These properties then imply that $\{F_{M,J}^\beta\}_{J \in \mathbb{N}}$ uniformly converges to a continuous function $\beta\mathcal{X}$. Specifically, the limit $\tilde{F}_M = \lim_{J \to \infty} F_{M,J}^\beta$ coincides with $F_M^\beta$, which is the unique extension of $F_M$ over $\beta\mathcal{X}$ because $\tilde{F}_M(x) = F_M(x)$ for all $x \in \mathcal{X}$ and $\tilde{F}_M$ is continuous over $\beta\mathcal{X}$. Then, letting $J \to \infty$ on both sides of \cref{eq:rimr-lower-bound-ineq5} yields
\begin{equation*}
    R_M \geq \adjustlimits{\lim}_{J \to \infty}{\inf}_{x \in \mathcal{X}} F_{M,J}(x) = \adjustlimits{\lim}_{J\to\infty}{\inf}_{x \in \beta\mathcal{X}} F_{M,J}^\beta(x) = \inf_{x \in \beta\mathcal{X}} F_M^\beta(x) = \inf_{x \in \mathcal{X}} F_M(x),
\end{equation*}
where the first and third equalities follow from the denseness of $\mathcal{X}$ in $\beta\mathcal{X}$ and the definition of the extension. Next, consider the sequence $\{F_M\}_{M \in \mathbb{N}}$. A similar argument as above implies that the sequence $\{F_M^\beta\}_{M \in \mathbb{N}}$ uniformly converges to $F^\beta$ over $\beta\mathcal{X}$. Taking limit inferior with respect to $M$ on the most left- and right-hand sides of the preceding display further gives
\begin{equation*}
    \liminf_{M\to\infty} R_M \geq \adjustlimits{\lim}_{M \to \infty}{\inf}_{x \in \mathcal{X}} F_{M}(x) = \adjustlimits{\lim}_{M\to\infty}{\inf}_{x \in \beta\mathcal{X}} F_{M}^\beta(x) = \inf_{x \in \beta\mathcal{X}} F^\beta(x) = \inf_{x \in \mathcal{X}} F(x).
\end{equation*}
The denseness of $T_0$ in $T(P_0)$ also implies
\begin{equation*}
    F(x) = \sup_{h \in T(P_0)} \max\dbrace*{
    \begin{aligned}
        &\tau_L^-(\theta_0)\expect\dbrack*{f\dparen*{\mathbb{G}_0,v,w,\Dot{\theta}_0(h)}},\\
        -&\tau_U^+(\theta_0)\expect\dbrack*{f\dparen*{\mathbb{G}_0,v,w,\Dot{\theta}_0(h)}}
    \end{aligned}}.
\end{equation*}
Combining the observations thus far yields
\begin{align}
    &\liminf_{M \to \infty} R_M\nonumber\\
    &\geq \adjustlimits\inf_{(v,w) \in \mathbb{R}\times\mathbb{B}}\sup_{h \in T(P_0)}\max\dbrace*{
    \begin{aligned}
        &\tau_L^-(\theta_0)\expect\dbrack*{v - \kappa_{\theta_0}'\dparen*{\mathbb{G}_0 + w + \Dot{\theta}_0(h)} + \kappa_{\theta_0}\dparen*{ \Dot{\theta}_0(h)}},\\
        -&\tau_U^+(\theta_0)\expect\dbrack*{v - \kappa_{\theta_0}'\dparen*{\mathbb{G}_0 + w + \Dot{\theta}_0(h)} + \kappa_{\theta_0}\dparen*{ \Dot{\theta}_0(h)}}.
    \end{aligned}
    }
    \label{eq:rimr-lower-bound-ineq6}
\end{align}
\par
Finally, the study confirms that the right-hand sides of \cref{eq:rimr-lower-bound,eq:rimr-lower-bound-ineq6} are the same. For every $(v,w) \in \mathbb{R}\times\mathbb{B}$, there exists $(\bar{v},w) \in \mathbb{R}\times\mathbb{B}$ such that
\begin{equation*}
    \expect\dbrack*{v - \kappa_{\theta_0}'\dparen*{\mathbb{G}_0 + w + \Dot{\theta}_0(h)} + \kappa_{\theta_0}\dparen*{ \Dot{\theta}_0(h)}}
    =
    \expect\dbrack*{ \kappa_{\theta_0}'\dparen*{\mathbb{G}_0 + w + \Dot{\theta}_0(h) } - \kappa_{\theta_0}'\dparen*{ \Dot{\theta}_0(h) } + \Bar{v} }.
\end{equation*}
for all $h \in T(P_0)$. \cref{lemmaenum:translation-equivariance} further implies the existence of $(\tilde{v},w) \in \mathbb{B}\times\mathbb{B}$ with 
\begin{equation*}
    \expect\dbrack*{ \kappa_{\theta_0}'\dparen*{\mathbb{G}_0 + w + \Dot{\theta}_0(h) } - \kappa_{\theta_0}'\dparen*{ \Dot{\theta}_0(h) } + \Bar{v} }
    = 
    \expect\dbrack*{ \kappa_{\theta_0}'\dparen*{\mathbb{G}_0 + w + \tilde{v} + \Dot{\theta}_0(h) } - \kappa_{\theta_0}'\dparen*{ \Dot{\theta}_0(h) } }
\end{equation*}
for all $h \in T(P_0)$. Hence, the infimum with respect to $(v,w) \in \mathbb{R}\times\mathbb{B}$ in \cref{eq:rimr-lower-bound-ineq6} and the infimum with respect to $w \in \mathbb{B}$ in \cref{eq:rimr-lower-bound} are equivalent.
\end{proof}
\begin{proof}[Proof of \cref{thm:lam-with-respect-to-rimr}]
Let 
\begin{align*}
    \hat{F}_{M,n}(w) &= 
    \sup_{s \in \hat{S}_n} \max\dbrace*{
    \begin{aligned}
        &\tau_L^-(\hat{\theta}_n)\expect\dbrack*{ \ell_M\dparen*{ \hat{\kappa}_n'\dparen*{ \hat{\mathbb{G}}_n + w + s } - \hat{\kappa}_n'(s) } \middle| \mathbf{Z}_n},\\
        -&\tau_U^+(\hat{\theta}_n)\expect\dbrack*{ \ell_M\dparen*{ \hat{\kappa}_n'\dparen*{ \hat{\mathbb{G}}_n + w + s } - \hat{\kappa}_n'(s) } \middle| \mathbf{Z}_n }
    \end{aligned}
    }\\
    F_{M}(w) &= 
    \sup_{s \in S(\mathbb{G}_0)} \max\dbrace*{
    \begin{aligned}
        &\tau_L^-(\theta_0)\expect\dbrack*{ \ell_M\dparen*{ \kappa_{\theta_0}'\dparen*{ \mathbb{G}_0 + w + s } - \kappa_{\theta_0}'(s) } },\\
        -&\tau_U^+(\theta_0)\expect\dbrack*{ \ell_M\dparen*{ \kappa_{\theta_0}'\dparen*{ \mathbb{G}_0 + w + s } - \kappa_{\theta_0}'(s) } }.
    \end{aligned}
    }.
\end{align*}
Then, $\hat{w}_n$ given in \cref{eq:optimal-adjustment-term} is a minimizer of $\hat{F}_{M,n}(w)$ over $K_m \subset \mathbb{B}$ by definition. \cref{lem:pw-consistency-adjustment-terms} implies that there exists a minimizer $w_{M,m}^*$ of $F_M(w)$ such that $\hat{w}_n$ converges in probability to $w_{M,m}^*$ for every $h \in T(P_0)$. Combining this consistency with the best regularity of $\hat{\theta}_n$ and \cref{asum:theta-differentiable}, 
\begin{align*}
    \sqrt{n}\dparen*{ \hat{\theta}_n + \frac{\hat{w}_n}{\sqrt{n}} - \theta_0 } 
    &= \sqrt{n}\dparen*{ \hat{\theta}_n - \theta_n(h) } + \hat{w}_n + \sqrt{n}\dparen*{ \theta_n(h) - \theta_0  }\\
    &\overset{h}{\leadsto} \mathbb{G}_0 + w_{M,m}^* + \Dot{\theta}_0(h)
\end{align*}
for every $h$. Then, the generalized delta method for Hadamard directionally differentiable functionals \parencite[e.g.,][Theorem~2.1]{Fang2019} gives
\begin{align*}
    \sqrt{n}\dparen*{ \delta_n^{\text{LAM}} - \kappa\dparen*{\theta_n(h)} }
    &= \sqrt{n}\dparen*{ \kappa\dparen*{ \hat{\theta}_n + \frac{\hat{w}_n}{\sqrt{n}} } - \kappa\dparen*{\theta_n(h)}}\\
    &= \sqrt{n}\dparen*{ \kappa\dparen*{ \hat{\theta}_n + \frac{\hat{w}_n}{\sqrt{n}} } - \kappa(\theta_0) } - \sqrt{n}\dparen*{ \kappa(\theta_n(h)) - \kappa(\theta_0) }\\
    &\overset{h}{\leadsto} \kappa_{\theta_0}'\dparen*{ \mathbb{G}_0 + w_{M,m}^* + \Dot{\theta}_0(h) } - \kappa_{\theta_0}'\dparen*{ \Dot{\theta}_0(h) }.
\end{align*}
Since $\ell_M$ is a continuous and bounded function, the definition of the weak convergence and the asymptotic measurability of $\sqrt{n}\dparen{\delta_n^{\text{LAM}} - \kappa(\theta_n(h))}$ yields
\begin{align*}
    \MoveEqLeft
    \adjustlimits{\lim}_{n \to \infty}{\sup}_{h \in I} \sqrt{n}\rimr_{M,n}(\delta_n^{\text{LAM}},h)\\
    = {} & \sup_{h \in I}\max\dbrace*{
    \begin{aligned}
        &\lim_{n \to \infty} \tau_L^-(\theta_n(h)) \lim_{n \to \infty} \expect_{n,h*}\dbrack*{ \ell_M\dparen*{ \sqrt{n}\dparen*{ \delta_n^{\text{LAM}} - \kappa(\theta_n(h)) } } },\\
        -&\lim_{n \to \infty}\tau_U^+(\theta_n(h)) \lim_{n \to \infty} \expect_{n,h*}\dbrack*{ \ell_M\dparen*{ \sqrt{n}\dparen*{ \delta_n^{\text{LAM}} - \kappa(\theta_n(h)) } } }
    \end{aligned}
    }\\
    = {} & \sup_{h \in I}\max\dbrace*{
    \begin{aligned}
        &\tau_L^-(\theta_0)\expect\dbrack*{ \ell_M\dparen*{ \kappa_{\theta_0}'\dparen*{ \mathbb{G}_0 + w_{M,m}^* + \Dot{\theta}_0(h) } - \kappa_{\theta_0}'\dparen*{ \Dot{\theta}_0(h) }} },\\
        -&\tau_U^+(\theta_0)\expect\dbrack*{ \ell_M\dparen*{ \kappa_{\theta_0}'\dparen*{ \mathbb{G}_0 + w_{M,m}^* + \Dot{\theta}_0(h) } - \kappa_{\theta_0}'\dparen*{ \Dot{\theta}_0(h) }} }
    \end{aligned}
    }
\end{align*}
for every finite subset $I$ of $T(P_0)$. Then, by taking the supremum with respect to $I \subset T(P_0)$ for both sides of the above equality, 
\begin{align*}
    \MoveEqLeft
    \adjustlimits{\sup}_{I}{\liminf}_{n \to \infty}\sup_{h \in I} \sqrt{n}\rimr_{M,n}(\delta_n^{\text{LAM}},h)\\
    = {} & \sup_{h \in T(P_0)} \max\dbrace*{
    \begin{aligned}
        &\tau_L^-(\theta_0)\expect\dbrack*{ \ell_M\dparen*{ \kappa_{\theta_0}'\dparen*{ \mathbb{G}_0 + w_{M,m}^* + \Dot{\theta}_0(h) } - \kappa_{\theta_0}'\dparen*{ \Dot{\theta}_0(h) }} },\\
        -&\tau_U^+(\theta_0)\expect\dbrack*{ \ell_M\dparen*{ \kappa_{\theta_0}'\dparen*{ \mathbb{G}_0 + w_{M,m}^* + \Dot{\theta}_0(h) } - \kappa_{\theta_0}'\dparen*{ \Dot{\theta}_0(h) }} }
    \end{aligned}
    }\\
    = {} & \inf_{(v,w) \in K_m} F_M(w).
\end{align*}
Let 
\begin{align*}
    F(w) = \sup_{s \in S(\mathbb{G}_0)} \max\dbrace*{
    \begin{aligned}
        &\tau_L^-(\theta_0)\expect\dbrack*{ \ell\dparen*{ \kappa_{\theta_0}'\dparen*{ \mathbb{G}_0 + w + s } - \kappa_{\theta_0}'(s)} },\\
        -&\tau_U^+(\theta_0)\expect\dbrack*{ \ell\dparen*{ \kappa_{\theta_0}'\dparen*{ \mathbb{G}_0 + w + s } - \kappa_{\theta_0}'(s)} }
    \end{aligned}
    }.
\end{align*}
By the Lipschitz continuity of $\ell_M$ given in \cref{lem:lipschitzness-of-bounded-loss}, $\{F_M\}_{M=1}^\infty$ is an equicontinuous sequence on $K_m$. Moreover, for each $w \in K_m$, $F_M(w)$ pointwisely converges to $F(w)$. These observations imply the uniform convergence of $F_M$ to $F$ over $K_m$, which allows for exchanging the limit and infimum to obtain
\begin{equation*}
    \adjustlimits{\lim}_{M \to \infty}{\inf}_{w\in K_m} F_M(w) = \inf_{w\in K_m} F(w).
\end{equation*}
Then, letting $m$ approach infinity on both sides of the preceding equality concludes the proof.
\end{proof}
\section{Auxiliary Results}\label{sec:auxiliary-results}
\begin{lemma}\label{lem:rimr-asymptotic-representation}
    Suppose that the conditions in \cref{thm:rimr-lower-bound} hold. Let $\bm{h} = \{h_1,\cdots,h_m\}$ be the first $m$ elements of a complete orthonormal basis of $\overline{T(P_0)}$. For any $h \in T(P_0)$ and for arbitrary subnet $\{n_{i}\}_{i \in I}$ of $\{n\}$, there exists a further subnet $\{n_{j}\}_{j \in J}$ such that 
    \begin{equation*}
        \sqrt{n_{j}}\dparen*{ \delta_{n_j} - \kappa(\theta_{n_j}\dparen*{ h + \bm{a}^\intercal \bm{h} }) } \overset{h + \bm{a}^\intercal \bm{h}}{\leadsto} L_{h + \bm{a}^\intercal \bm{h}} \quad \text{for every} \quad \bm{a} = (a_1,\cdots,a_m) \in \mathbb{R}^m,
    \end{equation*}
    where $L_{h + \bm{a}^\intercal \bm{h}}$ is a tight Borel probability measure on $\mathbb{R}$. Moreover, 
    \begin{equation*}
        \int L_{h + \bm{a}^\intercal \bm{h}}(B)N(0,\lambda^{-1}I_m)(\mathrm{d}\bm{a}) = 
        P_h\dparen*{V_{\lambda,m} - \kappa_{\theta_0}'\dparen*{ G_{\lambda,m} + W_{\lambda,m} + \Dot{\theta}_0(h) } + \kappa_{\theta_0}'\dparen*{ \Dot{\theta}_0(h) } \in B},
    \end{equation*}
    where $G_{\lambda,m} = \sum_{j=1}^m A_j\Dot{\theta}_0(h_j)/(1+\lambda)^{1/2}$ for $(A_1,\cdots,A_m) \sim N(0,I_m)$, and $(V_{\lambda,m},W_{\lambda,m}) \in \mathbb{R}\times \mathbb{B}$ is a tight random element independent of $G_{\lambda,m}$.
\end{lemma}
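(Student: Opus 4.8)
The plan is to reduce the statement to the behavior of the single statistic $T_n \coloneqq \sqrt{n}\dparen*{\delta_n - \kappa(\theta_0)}$, which does not depend on $\bm a$, and then to invoke the asymptotic representation theorem. The starting point is the decomposition
\[
\sqrt{n}\dparen*{\delta_n - \kappa\dparen*{\theta_n(h+\bm a^\intercal\bm h)}} = T_n - \sqrt{n}\dparen*{\kappa\dparen*{\theta_n(h+\bm a^\intercal\bm h)} - \kappa(\theta_0)}.
\]
Because $T(P_0)$ is a linear space (\cref{asum:tangent-set}), $h+\bm a^\intercal\bm h \in T(P_0)$, so \cref{asum:theta-differentiable} gives $\sqrt{n}\dparen*{\theta_n(h+\bm a^\intercal\bm h) - \theta_0} \to \Dot{\theta}_0(h+\bm a^\intercal\bm h)$ in $\mathbb{B}$. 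Feeding this into the Hadamard directional differentiability of $\kappa$ from \cref{lem:kappa-directionally-differentiable}, with $t_n = 1/\sqrt{n}$, shows that the second term converges to the \emph{constant} $\kappa_{\theta_0}'\dparen*{\Dot{\theta}_0(h+\bm a^\intercal\bm h)}$. Thus the only stochastic part is $T_n$, and the limit $L_{h+\bm a^\intercal\bm h}$ will be the law of $T - \kappa_{\theta_0}'\dparen*{\Dot{\theta}_0(h+\bm a^\intercal\bm h)}$ for an appropriate limit variable $T$.

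Next I would produce this $T$ through \cref{lem:asymptotic-representation-theorem}. By hypothesis $T_n$ is asymptotically tight and asymptotically measurable under $\mathbb{P}_{n,0}$; the local asymptotic normality in \cref{eq:loglikelihood-expansion} makes $\mathbb{P}_{n,h'}$ and $\mathbb{P}_{n,0}$ mutually contiguous for every $h' \in T(P_0)$ (Le Cam's first lemma), so both properties transfer to each $\mathbb{P}_{n,h'}$. For each fixed $h'$ the net of laws of $T_{n_i}$ under $\mathbb{P}_{n_i,h'}$ is then relatively compact, the product over $h' \in T(P_0)$ is compact by Tychonoff, and I can extract a further subnet $\{n_j\}$ along which $T_{n_j} \overset{h'}{\leadsto} Q_{h'}$ for every $h'$ simultaneously. \cref{lem:asymptotic-representation-theorem}, applied along this subnet, then yields a randomized statistic $T = T(\Xi,U)$ in the limit experiment $\mathcal{E}$ with $Q_{h'}$ equal to the law of $T$ under $\mathbb{P}_{h'}$. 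Combining with the deterministic shift of the first step via Slutsky delivers $\sqrt{n_j}\dparen*{\delta_{n_j} - \kappa(\theta_{n_j}(h'))} \overset{h'}{\leadsto} L_{h'}$, where $L_{h'}$ is the tight Borel law of $T - \kappa_{\theta_0}'\dparen*{\Dot{\theta}_0(h')}$ under $\mathbb{P}_{h'}$.

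The remaining ``moreover'' assertion is a Gaussian-conjugacy computation inside $\mathcal{E}$. Under $\mathbb{P}_{h+\bm a^\intercal\bm h}$ the coordinates satisfy $\Xi_k \sim N\dparen*{\langle h,h_k\rangle + a_k\ind\dbrace*{k\le m},1}$, so integrating $\bm a \sim N(0,\lambda^{-1}I_m)$ amounts to placing an independent Gaussian prior on the means of $\Xi_1,\dots,\Xi_m$. Treating $(\bm a,\Xi_1,\dots,\Xi_m)$ as jointly Gaussian and splitting each coordinate into its posterior mean and independent posterior residual $R_k$ gives $a_k = \tfrac{1}{1+\lambda}\dparen*{\Xi_k - \langle h,h_k\rangle} + R_k$ with $R_k \sim N\dparen*{0,\tfrac{1}{1+\lambda}}$ independent of $\Xi$. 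Hence $\sum_{j=1}^m a_j\Dot{\theta}_0(h_j) = W_{\lambda,m} + G_{\lambda,m}$, where $W_{\lambda,m} \coloneqq \tfrac{1}{1+\lambda}\sum_{j=1}^m\dparen*{\Xi_j - \langle h,h_j\rangle}\Dot{\theta}_0(h_j)$ is a function of $\Xi$ and $G_{\lambda,m} \coloneqq \sum_{j=1}^m R_j\Dot{\theta}_0(h_j)$ has exactly the law stated in the lemma (each $R_j$ matching $N(0,(1+\lambda)^{-1})$) and is independent of $(\Xi,U)$. Setting $V_{\lambda,m} \coloneqq T(\Xi,U) - \kappa_{\theta_0}'\dparen*{\Dot{\theta}_0(h)}$, the mixed law of $T - \kappa_{\theta_0}'\dparen*{\Dot{\theta}_0(h) + \sum_j a_j\Dot{\theta}_0(h_j)}$ equals the law of $V_{\lambda,m} - \kappa_{\theta_0}'\dparen*{G_{\lambda,m} + W_{\lambda,m} + \Dot{\theta}_0(h)} + \kappa_{\theta_0}'\dparen*{\Dot{\theta}_0(h)}$, with $(V_{\lambda,m},W_{\lambda,m})$ tight and independent of $G_{\lambda,m}$, as claimed.

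I expect the main obstacle to be the second step: making the simultaneous subnet extraction and the application of \cref{lem:asymptotic-representation-theorem} rigorous in the outer-probability framework. The delicate points there are transferring asymptotic tightness and measurability to all $\mathbb{P}_{n,h'}$ through contiguity, verifying the support/separability hypothesis of the representation theorem, and working with nets rather than sequences throughout. By contrast, the directional-differentiability shift in the first step and the Gaussian posterior decomposition in the third are comparatively routine.
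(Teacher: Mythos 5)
Your proposal is correct, but it takes a genuinely different route from the paper's proof. The paper never invokes the asymptotic representation theorem (\cref{lem:asymptotic-representation-theorem}): it fixes $h$, extracts a subnet along which the finite-dimensional vector $(\sqrt{n}(\delta_n-\kappa(\theta_0)),\Delta_{n,h},\Delta_{n,h_1},\dots,\Delta_{n,h_m})$ converges jointly under $\mathbb{P}_{n,0}$, and then applies Le Cam's third lemma twice --- once to pass from $\mathbb{P}_{n,0}$ to $\mathbb{P}_{n,h}$, and once to define $L_{h+\bm{a}^\intercal\bm{h}}$ directly through the change-of-measure formula with the factor $\exp(\bm{a}^\intercal\bm{\Delta}-\tfrac{1}{2}\bm{a}^\intercal\bm{a})$; the mixture over $\bm{a}\sim N(0,\lambda^{-1}I_m)$ is then computed by an explicit change of variables that leaves a reweighting density $c_{\lambda,m}(\bm{\Delta})$, and the independent pair $(V_{\lambda,m},W_{\lambda,m})$ has to be manufactured by an abstract existence result (Lemma~A.4 of \textcite{Ponomarev2022}). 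You instead transfer asymptotic tightness and measurability to every $\mathbb{P}_{n,h'}$ by contiguity, run a simultaneous (Tychonoff/universal-net) extraction over all of $T(P_0)$, and invoke \cref{lem:asymptotic-representation-theorem} to obtain a randomized statistic $T(\Xi,U)$ in the limit experiment. What this buys is precisely the step the paper works hardest for: with $T(\Xi,U)$ in hand, the ``moreover'' identity is a Gaussian prior--posterior decomposition, $W_{\lambda,m}$ being the posterior-mean part (a function of $\Xi$) and $G_{\lambda,m}$ the posterior residual with per-coordinate variance $(1+\lambda)^{-1}$, so the required independence is immediate rather than imported from an auxiliary purification lemma. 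The price is the heavier extraction, and you have correctly identified where the real work lies: the contiguity transfer of asymptotic tightness needs the quantitative characterization of contiguity together with measurable covers, and the Tychonoff step must be run on outer expectations indexed by pairs $(h',f)$ with $f \in C_b(\mathbb{R})$, since the $T_n$ need not be Borel measurable --- the paper's finite-dimensional Prohorov extraction plus Le Cam~III sidesteps both. Two harmless remarks: the representation theorem (like several results the paper itself cites) is stated for sequences and must be read in the net setting; and since the lemma places no restriction on the law of $(V_{\lambda,m},W_{\lambda,m})$, it does not matter that in your construction $\Xi$ carries the mixture law $N(\langle h,h_k\rangle,1+\lambda^{-1}\ind\{k\le m\})$ rather than $\mathbb{P}_h$ --- the paper's pair likewise follows a reweighted law, not the plain law under $h$.
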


\begin{proof}[Proof of \cref{lem:rimr-asymptotic-representation}]
    Since $\sqrt{n}\dparen{\delta_n - \kappa(\theta_0)}$ is asymptotically tight and asymptotically measurable under $\mathbb{P}_{n,0}$, its subnet $\sqrt{n_{i}}\dparen{\delta_{n_{i}} - \kappa(\theta_0)}$ is also asymptotically tight and asymptotically measurable under $\mathbb{P}_{n_{i},0}$. Moreover, as \cref{eq:loglikelihood-expansion} implies that $\Delta_{n,h} = n^{-1/2}\sum_{i=1}^n h(Z_i) \leadsto \Delta_h \sim N(0,\norm{h}_{2,P_0}^2)$, its subnet $\Delta_{n_{i},h}$ also weakly converges to the same limit law under $\mathbb{P}_{n_{i},0}$, whence $\Delta_{n_{i},h}$ is asymptotically tight and asymptotically measurable \parencite[][Lemma~1.3.8]{Vaart1996}. Combining these observations results in the asymptotic tightness and asymptotic measurability of $(\sqrt{n_{i}}\dparen{\delta_{n_{i}} - \kappa(\theta_0)}, \Delta_{n_{i},h})$ under $\mathbb{P}_{n_{i},0}$ \parencite[][Lemmas~1.4.3~and~1.4.4]{Vaart1996}. Then, Prohorov's theorem ensures the existence of the further subnet $\{n_{i_2}\}_{i_2 \in I_2}$ such that $(\sqrt{n_{i_2}}\dparen{\delta_{n_{i_2}} - \kappa(\theta_0)}, \Delta_{n_{i_2},h})$ jointly converges in distribution to a tight Borel law in $\mathbb{R}\times\mathbb{R}$ under $\mathbb{P}_{n_{i_2},0}$. Moreover, the linear expansion in \cref{eq:loglikelihood-expansion} and the Slutsky's lemma also implies that $(\sqrt{n_{i_2}}\dparen{\delta_{n_{i_2}} - \kappa(\theta_0)}, \mathrm{d}\mathbb{P}_{n_{i_2},h}/\mathrm{d}\mathbb{P}_{n_{i_2},0})$ converges weakly to a tight Borel law in $\mathbb{R}\times\mathbb{R}$ under $\mathbb{P}_{n_{i_2},0}$. Le Cam's third lemma concludes that $\sqrt{n_{i_2}}\dparen{\delta_{n_{i_2}} - \kappa(\theta_0)}$ converges weakly to a tight Borel law in $\mathbb{R}$ under $\mathbb{P}_{n_{i_2},h}$ for every $h \in T(P_0)$, from which follows that the asymptotic tightness and asymptotic measurability of $\sqrt{n_{i_2}}\dparen{\delta_{n_{i_2}} - \kappa(\theta_0)}$ under $\mathbb{P}_{n_{i_2},h}$ for every $h$ \parencite[][Lemma~1.3.8]{Vaart1996}.
    \par
    Fix $h \in T(P_0)$ arbitrarily. By \cref{asum:theta-differentiable}, $\theta_n(h) = \theta_0 + \Dot{\theta}_0(h)/\sqrt{n} + o(1/\sqrt{n})$. Moreover, by \cref{lemmaenum:directional-differentiability}, $\sqrt{n}\dparen{\kappa(\theta_n(h)) - \kappa(\theta_0)} \to C \coloneqq \kappa_{\theta_0}'(\Dot{\theta}_0(h))$ as $n \to \infty$. Noting that 
    \begin{equation*}
        \sqrt{n_{i_2}}\dparen*{ \delta_{n_{i_2}} - \kappa\dparen*{\theta_{n_{i_2}}(h)}} = \sqrt{n_{i_2}}\dparen*{ \delta_{n_{i_2}} - \kappa(\theta_0) } - \sqrt{n_{i_2}} \dparen*{ \kappa\dparen*{ \theta_{n_{i_2}}(h) } - \kappa(\theta_0) },
    \end{equation*}
    $\sqrt{n_{i_2}}\dparen{\delta_{n_{i_2}} - \kappa(\theta_{n_{i_2}}(h))}$ is also asymptotically tight and asymptotically measurable under $\mathbb{P}_{n_{i_2},h}$ \parencite[][Lemma~A.6]{Ponomarev2022}. Further, Lemma~A.3 in \textcite{Ponomarev2022} gives for every $\bm{a} \in \mathbb{R}^m$,
    \begin{equation*}
        \diff{\mathbb{P}_{n,h+\bm{a}^\intercal\bm{h}}}{\mathbb{P}_{n,h}} \overset{h}{\leadsto} \exp\left(\bm{a}^\intercal \bm{\Delta} - \frac{1}{2}\bm{a}^\intercal \bm{a}\right),
    \end{equation*}
    where $\mathbf{\Delta} = (\Delta_{h_1},\cdots,\Delta_{h_m}) \sim N(0,I_m)$. This immediately implies that $\mathrm{d}\mathbb{P}_{n_{i_2},h+\bm{a}^\intercal\bm{h}}/\mathrm{d}\mathbb{P}_{n_{i_2},h}$ is also asymptotically tight and asymptotically measurable under $\mathbb{P}_{n_{i_2},h}$ \parencite[][Lemma~1.3.8]{Vaart1996}. Combining these observations with Lemmas~1.4.3 and 1.4.4 in \textcite{Vaart1996} and Prohorov's theorem ensures the existence of a further subnet $\{n_{i_3}\} \subset \{n_{i_2}\}$ such that 
    \begin{equation*}
        \displaystyle\begin{pmatrix}
            \sqrt{n_{i_3}}\dparen*{ \delta_{n_{i_3}} - \kappa\dparen*{ \theta_{n_{i_3}}(h) } }\\
            \mathrm{d}\mathbb{P}_{n_{i_3},h+\bm{a}^\intercal\bm{h}}/\mathrm{d}\mathbb{P}_{n_{i_3},h}
        \end{pmatrix}
        \overset{h}{\leadsto}
        \begin{pmatrix}
            V\\
            \exp\dparen*{ \bm{a}^\intercal \mathbf{\Delta} - \frac{1}{2}\bm{a}^\intercal \bm{a} }
        \end{pmatrix}
    \end{equation*}
    for any $\bm{a} \in \mathbb{R}^m$. Note that, in the above display, $V$ has a tight Borel law that depends on $h$.\par
    Next, observe that 
    \begin{align*}
        \MoveEqLeft
        \sqrt{n_{i_3}}\dparen*{ \delta_{n_{i_3}} - \kappa\dparen*{ \theta_{n_{i_3}}\dparen{ h+\bm{a}^\intercal\bm{h} } } }\\
        = {} & \sqrt{n_{i_3}}\dparen*{ \delta_{n_{i_3}} - \kappa\dparen*{ \theta_{n_{i_3}}(h) } } - \sqrt{n_{i_3}}\dparen*{ \kappa\dparen*{ \theta_{n_{i_3}}(h+\bm{a}^\intercal\bm{h}) } - \kappa(\theta_0) } + \sqrt{n_{i_3}} \dparen*{ \kappa\dparen*{ \theta_{n_{i_3}}(h) } - \kappa(\theta_0) }.
    \end{align*}
    For the second term of the right-hand side of the above equation, 
    \begin{equation*}
        \sqrt{n_{i_3}} \dparen*{ \kappa\dparen*{ \theta_{n_{i_3}}(h+\bm{a}^\intercal\bm{h}) } - \kappa(\theta_0) } \to \kappa_{\theta_0}'\dparen*{ \Dot{\theta}_0(h) + \bm{a}^\intercal \bm{\Dot{\theta}_0(\bm{h})} }.
    \end{equation*}
    where $\bm{\Dot{\theta}_0(h)} = (\Dot{\theta}_0(h_1),\dots,\Dot{\theta}_0(h_m))^\intercal$. Slutsky's lemma implies that 
    \begin{equation*}
        \begin{pmatrix}
            \sqrt{n_{i_3}}\dparen*{ \delta_{n_{i_3}} - \kappa\dparen*{ \theta_{n_{i_3}}\dparen*{ h+\bm{a}^\intercal\bm{h} } } }\\
            \mathrm{d}\mathbb{P}_{n_{i_3},h+\bm{a}^\intercal\bm{h}}/\mathrm{d}\mathbb{P}_{n_{i_3},h}
        \end{pmatrix}
        \overset{h}{\leadsto}
        \begin{pmatrix}
            V - \kappa_{\theta_0}'\dparen*{ \Dot{\theta}_0(h)  + \bm{a}^\intercal \bm{\Dot{\theta}_0(\bm{h})} } + C\\
            \exp\dparen*{ \bm{a}^\intercal \mathbf{\Delta} - \frac{1}{2}\bm{a}^\intercal \bm{a} }
        \end{pmatrix}.
    \end{equation*}
    Then, Le Cam's third lemma gives the first statement of the lemma, in which $L_{h + \bm{a}^\intercal \bm{h}}$ is defined by 
    \begin{align*}
        \MoveEqLeft
        L_{h+\bm{a}^\intercal\bm{h}}(B)
        \coloneqq \expect_h\dbrack*{\ind_B\dparen*{V - \kappa_{\theta_0}'\dparen*{\Dot{\theta}_0(h) + \bm{a}^\intercal \bm{\Dot{\theta}_0(h)} } + C}\exp\dparen*{\bm{a}^\intercal \mathbf{\Delta} - \frac{1}{2}\bm{a}^\intercal \bm{a}}}
    \end{align*}
    for each $B \in \mathcal{B}(\mathbb{R})$, where $1_B(x) = 1\{x \in B\}$.
    \par
    Next is the proof of the second statement. Integrating both sides of the preceding display with respect to $\bm{a} \sim N(0,\lambda^{-1}I_m)$ and applying the Fubini's theorem yields
    \begin{align*}
        &\int L_{h+\bm{a}^\intercal\bm{h}}(B)N(0,\lambda^{-1}I_m)(\text{d}\bm{a})\\
        &= \int \expect_h\Biggl[\ind_B\dparen*{V - \kappa_{\theta_0}'\dparen*{\Dot{\theta}_0(h) + \bm{a}^\intercal \bm{\Dot{\theta}_0(h)} } + C}
        \exp\dparen*{\bm{a}^\intercal \mathbf{\Delta} - \frac{1}{2}\bm{a}^\intercal \bm{a}}\frac{\lambda^{m/2}}{\sqrt{(2\pi)^m}}\exp\left(-\frac{\lambda}{2}\bm{a}^\intercal \bm{a}\right)\Biggr]\text{d}\bm{a}.
    \end{align*}
    Let $\bm{t}\coloneqq (1 + \lambda)^{1/2}\bm{a} - (1+\lambda)^{-1/2}\mathbf{\Delta}$. With this variable, 
    \begin{align*}
        &\exp\dparen*{\bm{a}^\intercal \mathbf{\Delta} - \frac{1}{2}\bm{a}^\intercal \bm{a}}\frac{\lambda^{m/2}}{\sqrt{(2\pi)^m}}\exp\left(-\frac{\lambda}{2}\bm{a}^\intercal \bm{a}\right)\\
        &= \lambda^{m/2}\exp\dparen*{ \frac{(1+\lambda)^{-1}}{2}\mathbf{\Delta}^\intercal \mathbf{\Delta} }\frac{1}{\sqrt{(2\pi)^m}}\exp\dparen*{ -\frac{\bm{t}^\intercal \bm{t}}{2} }.
    \end{align*}
    By the change of variables, $\int L_{h+\bm{a}^\intercal\bm{h}}(B)N(0,\lambda^{-1}I_m)(\text{d}\bm{a})$ equals
    \begin{align*}
        &\int \expect_h\Biggl[\ind_B\dparen*{V - \kappa_{\theta_0}'(h)\dparen*{\Dot{\theta}_0(h) + \frac{\bm{\Delta}^\intercal \bm{\Dot{\theta}_0(h)}}{1+\lambda} + \frac{\bm{t}^\intercal \bm{\Dot{\theta}_0(h)}}{(1+\lambda)^{1/2}} } +C} c_{\lambda,m}(\mathbf{\Delta})\Biggr]N(0,I_m)(\text{d}\bm{t})\\
        &= \expect_h\Biggl[\ind_B\dparen*{V - \kappa_{\theta_0}'(h)\dparen*{\Dot{\theta}_0(h) + \frac{\bm{\Delta}^\intercal \bm{\Dot{\theta}_0(h)}}{1+\lambda} + G_{\lambda,m} } +C} c_{\lambda,m}(\mathbf{\Delta})\Biggr],
    \end{align*}
    where $G_{\lambda,m} \coloneqq \bm{A}^\intercal \bm{\Dot{\theta}_0(h)}/(1+\lambda)^{1/2}$ for $\bm{A} \sim N(0,I_m)$ and
    \begin{equation*}
        c_{\lambda,m}(\bm{\Delta}) \coloneqq \dparen*{\frac{\lambda}{1+\lambda}}^{m/2}\exp\dparen*{ \frac{(1+\lambda)^{-1}}{2}\mathbf{\Delta}^\intercal \mathbf{\Delta} }.
    \end{equation*}
    Note that $(V,\bm{\Delta})$ and $G_{\lambda,m}$ are independent by construction. Observe that the set function $Q$ given by
    \begin{equation*}
        Q(A) = \expect_h\dbrack*{ \ind_A\dparen*{V,\frac{\bm{\Delta}^\intercal \bm{\Dot{\theta}_0(h)}}{(1+\lambda)^{1/2}}} c_{\lambda,m}(\mathbf{\Delta}) } \quad \text{for every} \quad A \in \mathcal{B}(\mathbb{R}\times\mathbb{B})
    \end{equation*}
    is a Borel probability measure. Indeed, $c_{\lambda,m}$ is non-negative, and it is easy to confirm that $Q(\mathbb{R}\times\mathbb{B}) = \expect_{h}[c_{\lambda,m}(\mathbf{\Delta})] = 1$. It follows that $0 \leq Q(A) \leq 1$ for every $A$. The countable additivity of $Q$ also follows from the monotone convergence theorem. Setting $X = (X_1,X_2) = (\bm{\Delta}^\intercal \bm{\Dot{\theta}_0(h)}/(1+\lambda)^{1/2},\bm{\Delta})$, $Y = V$, $Z = G_{\lambda,m}$,
    \begin{align*}
        f(X,Y,Z) = Y - \kappa_{\theta_0}'(h)\dparen*{\Dot{\theta}_0(h) + X_1 + Z } +C, \quad \text{and}\quad
        g(X) = c_{\lambda,m}(X_2).
    \end{align*}
    Apply Lemma~A.4 in \textcite{Ponomarev2022} to ensure the existence of a random element $(V_{\lambda,m},W_{\lambda,m}) \in \mathbb{R}\times\mathbb{B}$ independent of $G_{\lambda,m}$ such that 
    \begin{equation*}
        \int L_{h+\bm{a}^\intercal\bm{h}}(B)N(0,\lambda^{-1}I_m)(\text{d}\bm{a}) = \expect_h\dbrack*{\ind_B\dparen*{V_{\lambda,m} - \kappa_{\theta_0}'(h)\dparen*{\Dot{\theta}_0(h) + W_{\lambda,m} + G_{\lambda,m} } + C}}.
    \end{equation*}
    This concludes the proof.
\end{proof}

\begin{lemma}\label{lem:lipschitzness-of-bounded-loss}
    Regard $\mathbb{R}\times\mathbb{B}\times\mathbb{B}\times\mathbb{B}$ as the metric space with product metric,
    \begin{equation*}
        \abs*{ v - v' } + \norm{ w - w'}_{\mathbb{B}} + \norm{ s - s'}_{\mathbb{B}} + \norm{g - g'}_{\mathbb{B}}.
    \end{equation*}
    Then, the function $f_M$ defined in the proof of \cref{thm:rimr-lower-bound} is and Lipschitz continuous with Lipshitz constant not depending on $M$.
\end{lemma}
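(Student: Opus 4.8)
The plan is to establish the two assertions --- boundedness and Lipschitz continuity with an $M$-free constant --- separately, the second being the only substantive one. Recall from the proof of \cref{thm:rimr-lower-bound} that $f_M(g,v,w,s) = \ell_M\dparen*{v - \kappa_{\theta_0}'(g + w + s) + \kappa_{\theta_0}'(s)}$, where $\ell_M(x) = \max\dbrace*{-M,\min\dbrace*{M,x}}$. Boundedness is immediate: since $\ell_M$ takes values in $[-M,M]$, one has $\abs*{f_M(g,v,w,s)} \leq M$ for every argument, so $f_M$ is bounded, which is all that the uniform-convergence arguments invoking $f_M$ require.

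For Lipschitz continuity, the key observation I would exploit is that the truncation $\ell_M$ is $1$-Lipschitz on $\mathbb{R}$ with a constant that does \emph{not} depend on $M$: being the composition of $x \mapsto \min\dbrace*{M,x}$ and $x \mapsto \max\dbrace*{-M,x}$, each of which is a metric projection onto a half-line and hence non-expansive, $\ell_M$ satisfies $\abs*{\ell_M(x) - \ell_M(x')} \leq \abs*{x - x'}$ for all $x,x' \in \mathbb{R}$. Consequently it suffices to control the inner map $\Phi(g,v,w,s) \coloneqq v - \kappa_{\theta_0}'(g+w+s) + \kappa_{\theta_0}'(s)$, and for this I would invoke the Lipschitz continuity of the Hadamard directional derivative $\kappa_{\theta_0}'$ established in \cref{lemmaenum:lipschitz-continuity}, whose Lipschitz constant I denote by $C$.

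Applying the triangle inequality together with \cref{lemmaenum:lipschitz-continuity} then yields
\begin{align*}
    \abs*{\Phi(g,v,w,s) - \Phi(g',v',w',s')}
    &\leq \abs*{v - v'} + \abs*{\kappa_{\theta_0}'(g+w+s) - \kappa_{\theta_0}'(g'+w'+s')} + \abs*{\kappa_{\theta_0}'(s) - \kappa_{\theta_0}'(s')}\\
    &\leq \abs*{v - v'} + C\normB*{g - g'} + C\normB*{w - w'} + 2C\normB*{s - s'},
\end{align*}
where the doubling of the coefficient in the $s$-direction reflects that $s$ enters $\Phi$ both inside $g+w+s$ and on its own. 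Bounding the right-hand side by $\max\dbrace*{1,2C}$ times the product metric and composing with the $1$-Lipschitz map $\ell_M$ gives $\abs*{f_M(g,v,w,s) - f_M(g',v',w',s')} \leq \max\dbrace*{1,2C}$ times the product distance, with the constant $\max\dbrace*{1,2C}$ manifestly free of $M$. I do not anticipate any genuine obstacle: the entire content is that truncation is non-expansive uniformly in the truncation level, so the $M$-dependence of $f_M$ never enters the Lipschitz bound; the only point requiring mild care is the bookkeeping of the repeated $s$-argument, which merely contributes the factor $2C$ above.
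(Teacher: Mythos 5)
Your proof is correct and follows essentially the same route as the paper's: both exploit that $\ell_M$ is $1$-Lipschitz uniformly in $M$, then apply the Lipschitz continuity of $\kappa_{\theta_0}'$ from \cref{lemmaenum:lipschitz-continuity} and the triangle inequality to arrive at the same constant $\max\dbrace{1,2C_{\kappa_{\theta_0}'}}$, including the factor $2$ from the repeated $s$-argument. No gaps.
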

\begin{proof}[Proof of \cref{lem:lipschitzness-of-bounded-loss}]
    The boundedness of $f_M$ is obvious by definition. The Lipschitz continuity follows from the direct calculations:
    \begin{align*}
        \MoveEqLeft
        \abs*{ f_M(g,v,w,s) - f_M(g,v',w',s') }\\
        \leq {} & \abs*{ \dparen*{ v - \kappa_{\theta_0}'\dparen*{ g + w + s } + \kappa_{\theta_0}'(s) } - \dparen*{ v' - \kappa_{\theta_0}'\dparen*{ g' + w' + s' } + \kappa_{\theta_0}'(s') } }\\
        \leq {} & bounded \abs*{ v - v' } + C_{\kappa_{\theta_0}'}\norm{g + w + s - g' - w' - s'}_{\mathbb{B}} + C_{\kappa_{\theta_0}'}\norm{s - s'}_{\mathbb{B}}\\
        \leq {} & \abs*{ v - v' } + C_{\kappa_{\theta_0}'}\norm{g - g'}_{\mathbb{B}} + C_{\kappa_{\theta_0}'}\norm{w - w'}_{\mathbb{B}} + 2C_{\kappa_{\theta_0}'}\norm{s - s'}_{\mathbb{B}}\\
        \leq {} & \max\dbrace{1, 2C_{\kappa_{\theta_0}'}}\dparen*{ \abs*{ v - v' } + \norm{ w - w'}_{\mathbb{B}} + \norm{ s - s'}_{\mathbb{B}} + \norm{g - g'}_{\mathbb{B}} },
    \end{align*}
    where the second inequality follows from \cref{lem:kappa-directionally-differentiable}.
\end{proof}
\begin{lemma}\label{lem:kappa-hat-property}
    Suppose \cref{asum:bounds-of-identificaiton-region,asum:bounds-derivative-estimator} hold, and $\hat{\kappa}_n':\mathbb{B} \to \mathbb{R}$ is defined in \cref{eq:kappa-hat}. Then, for every $\delta > 0$, $\sup_{s \in K_n^\delta} \abs*{ \hat{\kappa}_n'(s) - \kappa_{\theta_0}'(s) } = o_p(1)$ under every $h \in T(P_0)$, where $\{K_n\}$ is an arbitrary expanding sequence of compact subsets of $\mathbb{B}$ such that it contains zero vector for all $n$ and that $\sup_{s \in K_n} \normB{s} = o(\sqrt{n})$. Moreover, there exists an asymptotically tight random $C_{\hat{\kappa}_n'} \in \mathbb{R}$ such that $\abs{\hat{\kappa}_n'(s_1) - \hat{\kappa}_n'(s_2)} \leq C_{\hat{\kappa}_n'}\norm{s_1-s_2}_{\mathbb{B}}$ holds outer almost surely for every $(s_1,s_2) \in \mathbb{B}\times\mathbb{B}$.
\end{lemma}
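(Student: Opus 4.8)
The plan is to prove the two assertions separately: the Lipschitz bound is purely algebraic, whereas the uniform consistency hinges on a matching-event argument. I first record a preliminary rate fact. Under every $h \in T(P_0)$, best regularity of $\hat{\theta}_n$ together with \cref{asum:theta-differentiable} give $\sqrt{n}\dparen{\hat{\theta}_n - \theta_0} \overset{h}{\leadsto} \mathbb{G}_0 + \Dot{\theta}_0(h)$, so $\hat{\theta}_n \to_p \theta_0$. Applying the generalized delta method \parencite[Theorem~2.1]{Fang2019} under \cref{assumptionenum:directional-differentiability} then yields $\sqrt{n}\dparen{\tau_j(\hat{\theta}_n) - \tau_j(\theta_0)} \overset{h}{\leadsto} \tau_{j,\theta_0}'\dparen{\mathbb{G}_0 + \Dot{\theta}_0(h)}$, hence $\tau_j(\hat{\theta}_n) - \tau_j(\theta_0) = O_p(1/\sqrt{n})$ for $j \in \{L,U\}$; since $x \mapsto \max\{x,0\}$ and $x \mapsto \max\{-x,0\}$ are $1$-Lipschitz, the same $O_p(1/\sqrt{n})$ rate passes to $\tau_U^+(\hat{\theta}_n) - \tau_U^+(\theta_0)$ and $\tau_L^-(\hat{\theta}_n) - \tau_L^-(\theta_0)$. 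Finally, $\tau_L(\theta_0) < \tau_U(\theta_0)$ forces $\tau_U^+(\theta_0) + \tau_L^-(\theta_0) > 0$, so the denominator $\dparen[\big]{\tau_U^+(\hat{\theta}_n) + \tau_L^-(\hat{\theta}_n)}^2$ is bounded away from zero with probability approaching one, and its reciprocal is asymptotically tight.

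For the Lipschitz bound I would use that for any fixed $y \in \mathbb{R}$ and any $n$ the map $x \mapsto \hat{m}_{y,n}(x)$ equals one of $x$, $\max\{x,0\}$, or $0$, each $1$-Lipschitz; hence $\abs{\hat{m}_{y,n}(x_1) - \hat{m}_{y,n}(x_2)} \leq \abs{x_1 - x_2}$ for every $y$. Taking $y = \tau_U(\hat{\theta}_n)$ and composing with $\hat{\tau}_{U,n}'$, whose Lipschitz constant $C_{\hat{\tau}_{U,n}'}$ is asymptotically tight by \cref{asum:bounds-derivative-estimator}(ii), gives $\abs{\hat{m}_{\tau_U(\hat{\theta}_n),n}'\dparen{\hat{\tau}_{U,n}'(s_1)} - \hat{m}_{\tau_U(\hat{\theta}_n),n}'\dparen{\hat{\tau}_{U,n}'(s_2)}} \leq C_{\hat{\tau}_{U,n}'}\normB{s_1 - s_2}$ outer almost surely, and symmetrically for the $L$-term. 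Dividing by the denominator and collecting terms produces the claimed bound with
\begin{equation*}
    C_{\hat{\kappa}_n'} \coloneqq \frac{\tau_L^-(\hat{\theta}_n)\,C_{\hat{\tau}_{U,n}'} + \tau_U^+(\hat{\theta}_n)\,C_{\hat{\tau}_{L,n}'}}{\dparen[\big]{\tau_U^+(\hat{\theta}_n) + \tau_L^-(\hat{\theta}_n)}^2},
\end{equation*}
a product and quotient of asymptotically tight quantities with denominator bounded away from zero, hence asymptotically tight.

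For the uniform consistency, the key step is to show that, with probability approaching one under $\mathbb{P}_{n,h}$, the estimated max-derivatives coincide with the population ones as functions, i.e. $\hat{m}_{\tau_U(\hat{\theta}_n),n}' = m_{\tau_U(\theta_0)}'$ and $\hat{m}_{-\tau_L(\hat{\theta}_n),n}' = m_{-\tau_L(\theta_0)}'$. I would split on the sign of $\tau_U(\theta_0)$ (and of $\tau_L(\theta_0)$): if $\tau_U(\theta_0) \neq 0$, consistency of $\tau_U(\hat{\theta}_n)$ and $\epsilon_n \downarrow 0$ place $\tau_U(\hat{\theta}_n)$ on the correct side of $\pm\epsilon_n$ eventually; if $\tau_U(\theta_0) = 0$, the preliminary rate gives $\abs{\tau_U(\hat{\theta}_n)} = O_p(1/\sqrt{n}) = o_p(\epsilon_n)$ because $\sqrt{n}\epsilon_n \uparrow \infty$, so $\abs{\tau_U(\hat{\theta}_n)} \leq \epsilon_n$ with probability approaching one. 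In each case the indicator patterns of $\hat{m}$ and $m'$ agree, so on this matching event the numerator of $\hat{\kappa}_n'(s)$ reduces to $\tau_L^-(\hat{\theta}_n)\,m_{\tau_U(\theta_0)}'\dparen{\hat{\tau}_{U,n}'(s)} - \tau_U^+(\hat{\theta}_n)\,m_{-\tau_L(\theta_0)}'\dparen{-\hat{\tau}_{L,n}'(s)}$. On this event I would decompose $\hat{\kappa}_n'(s) - \kappa_{\theta_0}'(s)$ into a numerator-difference term over $\dparen[\big]{\tau_U^+(\hat{\theta}_n) + \tau_L^-(\hat{\theta}_n)}^2$ and a reciprocal-denominator term, and bound both uniformly over $s \in K_n^\delta$. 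The ``derivative-replacement'' pieces such as $m_{\tau_U(\theta_0)}'\dparen{\hat{\tau}_{U,n}'(s)} - m_{\tau_U(\theta_0)}'\dparen{\tau_{U,\theta_0}'(s)}$ are controlled, via $1$-Lipschitzness of $m_y'$, by $\sup_{s \in K_n^\delta}\abs{\hat{\tau}_{U,n}'(s) - \tau_{U,\theta_0}'(s)} = o_p(1)$ from \cref{asum:bounds-derivative-estimator}(i) applied to the compact sequence $K_n^\delta$ (whose sup-norm is still $o(\sqrt{n})$). The remaining coefficient- and denominator-difference pieces are products of the $O_p(1/\sqrt{n})$ rates above with factors of size $O_p\dparen{\sup_{s \in K_n^\delta}\normB{s}} = O_p\dparen{o(\sqrt{n})}$, bounded using $1$-Lipschitzness of $m_y'$, positive homogeneity of the derivatives (so $\tau_{j,\theta_0}'(0) = \kappa_{\theta_0}'(0) = 0$), and \cref{lemmaenum:lipschitz-continuity}; thus each is $o_p(1)$ uniformly. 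Since the matching event has probability tending to one, $\sup_{s \in K_n^\delta}\abs{\hat{\kappa}_n'(s) - \kappa_{\theta_0}'(s)} = o_p(1)$ under every $h$.

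The main obstacle is this last uniform bound. Because $K_n^\delta$ is permitted to expand with $\sup_{s \in K_n^\delta}\normB{s} = o(\sqrt{n})$, the factors $m_{\tau_U(\theta_0)}'\dparen{\hat{\tau}_{U,n}'(s)}$ and $\kappa_{\theta_0}'(s)$ grow linearly in $\normB{s}$ and can be nearly of order $\sqrt{n}$, so multiplying them by the $o_p(1)$ errors in the coefficients or the denominator would not vanish. The delicate point is to pair each growing factor precisely with the sharp $O_p(1/\sqrt{n})$ rate from the generalized delta method, so that products are of order $O_p(1)\cdot o(\sqrt{n})/\sqrt{n} = o_p(1)$; carrying this rate bookkeeping, rather than relying on mere consistency of $\hat{\theta}_n$, is what makes the uniform statement hold over the growing domain.
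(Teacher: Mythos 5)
Your proposal is correct and follows essentially the same route as the paper's proof: the Lipschitz constant you construct is exactly the paper's $\eta_L(\hat{\theta}_n)C_{\hat{\tau}_{U,n}'}+\eta_U(\hat{\theta}_n)C_{\hat{\tau}_{L,n}'}$, and the consistency argument rests on the same three ingredients — the matching event for $\hat{m}_{\cdot,n}$ versus $m_{\cdot}'$ driven by $\epsilon_n \downarrow 0$, $\sqrt{n}\epsilon_n \uparrow \infty$, Assumption~6(i) for the derivative-estimation error, and the pairing of $O_p(n^{-1/2})$ coefficient errors with the $o(\sqrt{n})$ growth of the directional derivatives over $K_n^\delta$. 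The only cosmetic difference is bookkeeping: the paper packages the coefficients into $\eta_L,\eta_U$ and decomposes into three suprema (handling the matching event only in the third), whereas you invoke the matching event globally first and then track numerator and reciprocal-denominator errors separately, which is equivalent.
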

\begin{proof}[Proof of \cref{lem:kappa-hat-property}]
    For ease of notation, let 
    \begin{equation*}
        \eta_L(\theta) \coloneqq \frac{\tau_L^-(\theta)}{\dparen[\big]{\tau_U^+(\theta) + \tau_L^-(\theta)}^2},
        \quad \text{and} \quad
        \eta_U(\theta) \coloneqq \frac{\tau_U^+(\theta)}{\dparen[\big]{\tau_U^+(\theta) + \tau_L^-(\theta)}^2}.
    \end{equation*}
    First, prove the first claim. Fix $\delta > 0$ and observe that
    \begin{align}
        \begin{aligned}
            \sup_{s \in K_n^\delta}\abs*{\hat{\kappa}_n'(s) - \kappa_{\theta_0}'(s)} 
            & \leq  
            \sup_{s \in K_n^\delta}\abs*{ \eta_L\dparen{\hat{\theta}_n} \cdot \hat{m}_{\tau_U(\hat{\theta}_n),n}'\dparen[\big]{\hat{\tau}_{U,n}'(s)} - \eta_L\dparen{\theta_0} \cdot m_{\tau_U(\theta_0)}'(\tau_{U,\theta_0}'(s))}\\
            & \quad + \sup_{s \in K_n^\delta}\abs*{ \eta_U\dparen{\hat{\theta}_n} \cdot \hat{m}_{\tau_L(\hat{\theta}_n),n}'\dparen[\big]{\hat{\tau}_{L,n}'(s)} - \eta_U\dparen{\theta_0} \cdot m_{\tau_L(\theta_0)}'(\tau_{L,\theta_0}'(s))}.
        \end{aligned}
        \label{eq:pw-est-err-kappa-deriv}
    \end{align}
    Here, the first term of the right-hand side of the preceding inequality converges to zero in probability; the same property for the second term can be shown similarly. First, observe that
    \begin{align}
        \MoveEqLeft
        \sup_{s \in K_n^\delta}\abs*{\eta_L\dparen{\hat{\theta}_n}\cdot \hat{m}_{\tau_U(\hat{\theta}_n),n}'\dparen[\big]{\hat{\tau}_{U,n}'(s)} - \eta_L\dparen{\theta_0}\cdot m_{\tau_U(\theta_0)}'(\tau_{U,\theta_0}'(s))}\nonumber\\
        \begin{split}
            \leq {} & 
            \sup_{s \in K_n^\delta}\abs*{ \eta_L\dparen{\hat{\theta}_n} - \eta_L\dparen{\theta_0} } \cdot \abs*{ m_{\tau_U(\theta_0)}'\dparen[\big]{\tau_{U,\theta_0}'(s)}}\\
            &+ \sup_{s \in K_n^\delta}\eta_L\dparen{\hat{\theta}_n} \cdot \abs*{ \hat{m}_{\tau_U(\hat{\theta}_n),n}'\dparen*{\hat{\tau}_{U,n}'(s)} - \hat{m}_{\tau_U(\hat{\theta}_n),n}'\dparen*{ \tau_{U,\theta_0}'(s) } }\\
            & + \sup_{s \in K_n^\delta}\eta_L\dparen{\hat{\theta}_n} \cdot \abs*{ \hat{m}_{\tau_U(\hat{\theta}_n),n}'\dparen*{ \tau_{U,\theta_0}'(s) } - m_{\tau_U(\theta_0)}'\dparen*{\tau_{U,\theta_0}'(s)} }
        \end{split}\nonumber\\
        \begin{split}
            \leq {} & 
            \abs*{ \eta_L\dparen{\hat{\theta}_n} - \eta_L\dparen{\theta_0} } \cdot \sup_{s \in K_n^\delta}\abs*{ \tau_{U,\theta_0}'(s) }
            + \eta_L\dparen{\hat{\theta}_n} \cdot \sup_{s \in K_n^\delta}\abs*{ \hat{\tau}_{U,n}'(s) - \tau_{U,\theta_0}'(s) }\\
            & + \eta_L\dparen{\hat{\theta}_n} \cdot \sup_{s \in K_n^\delta}\abs*{ \hat{m}_{\tau_U(\hat{\theta}_n),n}'\dparen*{ \tau_{U,\theta_0}'(s) } - m_{\tau_U(\theta_0)}'\dparen*{\tau_{U,\theta_0}'(s)} },
        \end{split}
        \label{eq:pw-est-err-kappa-deriv-1stUB}
    \end{align}
    where the last inequality follows from the Lipschitz continuity of $\hat{m}_{x,n}$ and $\max\{x,0\} \leq \abs{x}$. Thus, it suffices to see that each term in \cref{eq:pw-est-err-kappa-deriv-1stUB} converges in probability to zero.
    For the first term of \cref{eq:pw-est-err-kappa-deriv-1stUB}, \cref{asum:bounds-of-identificaiton-region} ensures that $\eta_L\dparen{\theta}$ is Hadamard directionally differentiable at $\theta_0$. Then, the delta method for Hadamard directionally differentiable maps gives 
    \begin{equation*}
        \sqrt{n}\dparen{\eta_L\dparen{\hat{\theta}_n} - \eta_L\dparen{\theta_0}} \overset{0}{\leadsto} \eta_{L,\theta_0}'(\mathbb{G}_0) \quad \text{in} \quad \mathbb{R},
    \end{equation*}
    where $\eta_{L,\theta_0}'$ denotes the Hadamard directional derivative of $\eta_L$ at $\theta_0$. With the positive homogeneity of Hadamard directional derivatives, \cref{assumptionenum:lipschitz-continuity} implies for any $s \in K_n^{\delta}$
    \begin{equation*}
        \abs*{\tau_{U,\theta_0}'(s)}
        = \abs*{ \tau_{U,\theta_0}'(s) - \tau_{U,\theta_0}'(0) }
        \leq C_{\tau_{U,\theta_0}'}\normB{s} 
        \leq C_{\tau_{U,\theta_0}'}(\delta + \normB{\tilde{s}})
        \leq C_{\tau_{U,\theta_0}'}(\delta + \sup_{s \in K_n}\normB{s}),
    \end{equation*}
    where $\tilde{s}$ is an element of $K_n$ such that $\normB{s - \tilde{s}} < \delta$.
    Combining these observations with $\sup_{s \in K_n}\normB{s} = o(\sqrt{n})$ yields
    \begin{equation*}
        \abs*{\eta_L\dparen{\hat{\theta}_n} - \eta_L\dparen{\theta_0}}\cdot \sup_{s \in K_n^\delta} \abs*{\tau_{U,\theta_0}'(s)} = \abs*{\sqrt{n}\dparen*{\eta_L\dparen{\hat{\theta}_n} - \eta_L\dparen{\theta_0}}}\cdot \frac{C_{\tau_{U,\theta_0}'}(\delta + \sup_{s \in K_n}\normB{s})}{\sqrt{n}} = o_p(1).
    \end{equation*}
    The second term in \cref{eq:pw-est-err-kappa-deriv-1stUB} converges in probability to zero by \cref{asum:bounds-derivative-estimator}.
    Finally, the third term in \cref{eq:pw-est-err-kappa-deriv-1stUB} converges in probability to zero in the following way. Consider the case where $\tau_U(\theta_0) = 0$. If $\abs{\tau_U\dparen{\hat{\theta}_n}} \leq \epsilon_n$, then
    \begin{equation}
        \eta_L(\hat{\theta}_n) \cdot \abs*{ \hat{m}_{\tau_U(\hat{\theta}_n),n}'\dparen*{ \tau_{U,\theta_0}'(s) } - m_{\tau_U(\theta_0)}'\dparen*{\tau_{U,\theta_0}'(s)} } = 0 \quad \text{for every} \quad s \in K_n^{\delta}.
        \label{eq:lemma-est-err-0}
    \end{equation}
    Taking the contraposition gives, for any $\epsilon > 0$, 
    \begin{align*}
        \MoveEqLeft
        \limsup_{n\to\infty}P_{n,0}\dparen*{ \eta_L\dparen{\hat{\theta}_n} \cdot \sup_{s \in K_n^{\delta}} \abs*{ \hat{m}_{\tau_U(\hat{\theta}_n),n}'\dparen*{ \tau_{U,\theta_0}'(s) } - m_{\tau_U(\theta_0)}'\dparen*{\tau_{U,\theta_0}'(s)} } > \epsilon }\\
        &\leq \limsup_{n\to\infty}P_{n,0}\dparen*{ \abs{\tau_U\dparen{\hat{\theta}_n}} > \epsilon_n }\\
        &\leq \limsup_{n\to\infty}P_{n,0}\dparen*{ \sqrt{n}\abs{\tau_U(\hat{\theta}_n) - \tau_U(\theta_0)} > \sqrt{n}\epsilon_n } = 0,
    \end{align*}
    where the last equality follows from $\sqrt{n}\abs{\tau_U(\hat{\theta}_n) - \tau_U(\theta_0)} \overset{0}{\leadsto} \abs{\tau_{U,\theta_0}'(\mathbb{G}_0)}$ and $\sqrt{n}\epsilon_n \uparrow \infty$. Next, consider the case $\tau_U(\theta_0) \neq 0$. If $\abs{\tau_U(\hat{\theta}_n)} > \epsilon_n$, then \cref{eq:lemma-est-err-0} holds. Again, taking contraposition gives, for any $\epsilon > 0$,
    \begin{align*}
        \MoveEqLeft
        \limsup_{n\to\infty}P_{n,0}\dparen*{\eta_L\dparen{\hat{\theta}_n} \cdot \sup_{s \in K_n^{\delta}} \abs*{ \hat{m}_{\tau_U(\hat{\theta}_n),n}'\dparen*{ \tau_{U,\theta_0}'(s) } - m_{\tau_U(\theta_0)}'\dparen*{\tau_{U,\theta_0}'(s)} } > \epsilon }\\
        &\leq \limsup_{n\to\infty}P_{n,0}\dparen*{ \abs{\tau_U\dparen{\hat{\theta}_n}} \leq \epsilon_n } = 0,
    \end{align*}
    where the last equality follows from $\abs{\tau_U(\hat{\theta}_n)}$ converges in probability to $\abs{\tau_U(\theta_0)}$ and $\epsilon_n \downarrow 0$. Combining these observations implies the desired result.
    \par
    Next is the proof of the second claim. Observe that
    \begin{align*}
        \abs*{\hat{\kappa}_n'(s_1) - \hat{\kappa}_n'(s_2)}
            & \leq \eta_L\dparen{\hat{\theta}_n} \cdot \abs*{\hat{m}_{\tau_U(\hat{\theta}_n),n}'(\hat{\tau}_{U,n}'(s_1)) - \hat{m}_{\tau_U(\hat{\theta}_n),n}'(\hat{\tau}_{U,n}'(s_2))}\\
            & \quad  + \eta_U\dparen{\hat{\theta}_n} \cdot \abs*{\hat{m}_{\tau_L(\hat{\theta}_n),n}'(\hat{\tau}_{L,n}'(s_1)) - \hat{m}_{\tau_L(\hat{\theta}_n),n}'(\hat{\tau}_{L,n}'(s_2))}\\
        & \leq \eta_L\dparen{\hat{\theta}_n} \cdot \abs*{\hat{\tau}_{U,n}'(s_1) - \hat{\tau}_{U,n}'(s_2)}
        + \eta_U\dparen{\hat{\theta}_n} \cdot \abs*{\hat{\tau}_{L,n}'(s_1) - \hat{\tau}_{L,n}'(s_2)}\\
        & \leq \dparen*{ \eta_L\dparen{\hat{\theta}_n} \cdot C_{\hat{\tau}_{U,n}'}
        + \eta_U\dparen{\hat{\theta}_n} \cdot C_{\hat{\tau}_{L,n}'}}\cdot\norm{s_1-s_2}_{\mathbb{B}}
    \end{align*}
    holds outer almost surely, where the last inequality follows from \cref{asum:bounds-derivative-estimator}.(ii). Thus, it suffices to show that the coefficient of $\norm{s_1-s_2}_{\mathbb{B}}$ is asymptotically tight. Fix $\epsilon > 0$. By assumption, there exist $M_1,M_2 > 0$ such that
    \begin{equation*}
        \limsup_{n \to \infty} P_{n,0}\dparen*{C_{\hat{\tau}_U',n} \geq M_1 + \tilde{\delta}} < \frac{\epsilon}{4}
        \quad\text{and}\quad
        \limsup_{n \to \infty} P_{n,0}\dparen*{C_{\hat{\tau}_L',n} \geq M_2 + \tilde{\delta}} < \frac{\epsilon}{4}
    \end{equation*}
    for every $\tilde{\delta} > 0$. Similarly, $\eta_j\dparen{\hat{\theta}_n} \overset{P_{n,0}}{\to} \eta_j(\theta_0),j \in \{L,U\}$ implies the existence of $M_3,M_4>0$ satisfying
    \begin{equation*}
        \limsup_{n \to \infty} P_{n,0}\dparen*{\eta_L\dparen{\hat{\theta}_n} \geq M_3 + \tilde{\delta}} < \frac{\epsilon}{4}
        \quad\text{and}\quad
        \limsup_{n \to \infty} P_{n,0}\dparen*{\eta_U\dparen{\hat{\theta}_n} \geq M_4 + \tilde{\delta}} < \frac{\epsilon}{4}
    \end{equation*}
    for every $\tilde{\delta} > 0$. For each $\delta > 0$, choose $\tilde{\delta} > 0$ such that
    \begin{equation*}
        2\tilde{\delta}^2 + \dparen{M_1 + M_2 + M_3 + M_4}\tilde{\delta} < \delta.
    \end{equation*}
    If $\eta_L\dparen{\hat{\theta}_n} < M_3 + \tilde{\delta}$, $C_{\hat{\tau}_U',n} < M_1 + \tilde{\delta}$, $\eta_U\dparen{\hat{\theta}_n} < M_4 + \tilde{\delta}$, and $C_{\hat{\tau}_L',n} < M_2 + \tilde{\delta}$, then $\eta_L\dparen{\hat{\theta}_n} \cdot C_{\hat{\tau}_{U,n}'} + \eta_U\dparen{\hat{\theta}_n} \cdot C_{\hat{\tau}_{L,n}'} < M_1M_3 + M_2M_4 + \delta$ by the choice of $\tilde{\delta}$. Therefore,
    \begin{align*}
        \MoveEqLeft
        \limsup_{n \to \infty} P_{n,0}\dparen*{ \eta_L\dparen{\hat{\theta}_n} \cdot C_{\hat{\tau}_{U,n}'} + \eta_U\dparen{\hat{\theta}_n} \cdot C_{\hat{\tau}_{L,n}'} \geq M_1M_3 + M_2M_4 + \delta}\\
        \begin{split}
            \leq {} & \limsup_{n \to \infty} P_{n,0}\dparen*{C_{\hat{\tau}_U',n} \geq M_1 + \tilde{\delta}}
            + \limsup_{n \to \infty} P_{n,0}\dparen*{\eta_L\dparen{\hat{\theta}_n} \geq M_3 + \tilde{\delta}}\\
            & + \limsup_{n \to \infty} P_{n,0}\dparen*{C_{\hat{\tau}_L',n} \geq M_2 + \tilde{\delta}}
            + \limsup_{n \to \infty} P_{n,0}\dparen*{\eta_U\dparen{\hat{\theta}_n} \geq M_4 + \tilde{\delta}}
        \end{split}\\
        < {} & \epsilon,
    \end{align*}
    which concludes the proof.
\end{proof}

\begin{lemma}\label{lem:pw-consistency-adjustment-terms}
    Suppose \cref{asum:bounds-of-identificaiton-region,asum:bounds-derivative-estimator,asum:on-bootstrap,asum:support-G0-compact-sieve,asum:tangent-set,asum:theta-differentiable} hold. Fix $M \in \mathbb{N}$ and a compact set $K \subset \mathbb{B}$ arbitrarily. Let $\hat{\Psi}_{M,n}$ and $\Psi_{M,n}$ denote the set of minimizers of $\hat{F}_{M,n}(w)$ and $F_M(w)$ over $K$. Then, for each $\hat{w}_{M,n} \in \hat{\Psi}_{M,n}$, 
    \begin{equation*}
        \inf_{w^* \in \Psi_{M}} \normB{\hat{w}_{M,n} - w^*} = o_p(1)
    \end{equation*}
    under every $h \in T(P_0)$.
\end{lemma}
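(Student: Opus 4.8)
The plan is to establish a textbook argument-of-the-minimum consistency result, resting on two ingredients: (a) uniform convergence in probability of the sample criterion $\hat{F}_{M,n}$ to the population criterion $F_M$ over the compact set $K$, and (b) continuity of $F_M$ together with the compactness of $K$. Here $\hat{F}_{M,n}$ and $F_M$ are the functions introduced at the start of the proof of \cref{thm:lam-with-respect-to-rimr}. Because the local asymptotic normality in \cref{eq:loglikelihood-expansion} renders $\mathbb{P}_{n,h}$ and $\mathbb{P}_{n,0}$ mutually contiguous, every $o_p(1)$ statement obtained under $\mathbb{P}_{n,0}$ transfers to $\mathbb{P}_{n,h}$, so I may work under $\mathbb{P}_{n,0}$, which is where \cref{lem:kappa-hat-property} and \cref{asum:on-bootstrap} operate. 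Since $\hat{\theta}_n$ is best regular and hence $\sqrt{n}$-consistent, the continuity of $\tau_L,\tau_U$ gives $\tau_L^-(\hat{\theta}_n)\to\tau_L^-(\theta_0)$ and $\tau_U^+(\hat{\theta}_n)\to\tau_U^+(\theta_0)$ in probability, and both coefficients are bounded by $y_U-y_L$; it therefore suffices to control the inner supremum-of-maximum uniformly in $w\in K$, after which minimizer consistency follows by a routine subsequence argument.

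For the uniform convergence I would first reduce to the inner objects
\[
\hat{g}_{M,n}(w,s) = \expect\!\left[\ell_M\!\left(\hat{\kappa}_n'(\hat{\mathbb{G}}_n + w + s) - \hat{\kappa}_n'(s)\right)\,\middle|\,\mathbf{Z}_n\right], \qquad g_M(w,s) = \expect\!\left[\ell_M\!\left(\kappa_{\theta_0}'(\mathbb{G}_0 + w + s) - \kappa_{\theta_0}'(s)\right)\right],
\]
and bound $\hat{g}_{M,n}-g_M$ uniformly in $(w,s)$ via the triangle inequality. In the first piece I replace $\hat{\kappa}_n'$ by $\kappa_{\theta_0}'$; as $\ell_M$ is $1$-Lipschitz this is bounded by $\expect[\,\abs*{\hat{\kappa}_n'(\hat{\mathbb{G}}_n + w + s) - \kappa_{\theta_0}'(\hat{\mathbb{G}}_n + w + s)} + \abs*{\hat{\kappa}_n'(s) - \kappa_{\theta_0}'(s)}\mid\mathbf{Z}_n]$. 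Because \cref{asum:on-bootstrap} yields $\hat{\mathbb{G}}_n\leadsto\mathbb{G}_0$ conditionally and $\mathbb{G}_0$ is tight, the argument $\hat{\mathbb{G}}_n + w + s$ concentrates (for $w,s$ in compacta) on an expanding compact set on which \cref{lem:kappa-hat-property} supplies $\sup\abs*{\hat{\kappa}_n'-\kappa_{\theta_0}'}=o_p(1)$, while the global, asymptotically tight Lipschitz bound for $\hat{\kappa}_n'$ from the same lemma dispatches the residual mass; this is exactly where the rate condition $\sup_{s\in S_n}\normB{s}=o(\sqrt{n})$ in \cref{asum:support-G0-compact-sieve} is needed. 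In the second piece I hold $\kappa_{\theta_0}'$ fixed and replace $\hat{\mathbb{G}}_n$ by $\mathbb{G}_0$: each map $g\mapsto\ell_M(\kappa_{\theta_0}'(g+w+s)-\kappa_{\theta_0}'(s))$ is bounded by $M$ and Lipschitz with constant $C_{\kappa_{\theta_0}'}$ by \cref{lemmaenum:lipschitz-continuity}, hence belongs to $\max\{M,C_{\kappa_{\theta_0}'}\}\cdot\mathrm{BL}_1(\mathbb{B})$, so \cref{asum:on-bootstrap}(i) bounds this piece by $\max\{M,C_{\kappa_{\theta_0}'}\}\cdot o_p(1)$ uniformly in $(w,s)$.

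It then remains to pass from the supremum over $\hat{S}_n$ in $\hat{F}_{M,n}$ to that over $S(\mathbb{G}_0)$ in $F_M$. I would use that $s\mapsto g_M(w,s)$ and $s\mapsto\hat{g}_{M,n}(w,s)$ are Lipschitz in $s$ uniformly in $w$ — inherited from the Lipschitz continuity of $\kappa_{\theta_0}'$ and $\hat{\kappa}_n'$ in \cref{lem:kappa-directionally-differentiable} and \cref{lem:kappa-hat-property} — so that the supremum is stable under set perturbations in Hausdorff distance. Combining $d_H(\hat{S}_n,S_n)=o_p(1)$ with the inner approximation of $S(\mathbb{G}_0)$ by $S_n$ from \cref{asum:support-G0-compact-sieve}(i) gives $\sup_{s\in\hat{S}_n}(\cdot)\to\sup_{s\in S(\mathbb{G}_0)}(\cdot)$, and together with the two pieces above and the convergence of the coefficients this yields $\sup_{w\in K}\abs*{\hat{F}_{M,n}(w)-F_M(w)}=o_p(1)$. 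Since $F_M$ is finite (bounded by $(y_U-y_L)M$ as $\ell_M\in[-M,M]$) and continuous (its modulus of continuity being controlled by $C_{\kappa_{\theta_0}'}$ and $y_U-y_L$), and $K$ is compact, $\Psi_M$ is nonempty and closed. If the conclusion failed, then along a subsequence the event $\{\inf_{w^*\in\Psi_M}\normB{\hat{w}_{M,n}-w^*}>\epsilon\}$ would have probability bounded away from zero; on the compact set $\{w\in K:\inf_{w^*}\normB{w-w^*}\ge\epsilon\}$, which is disjoint from $\Psi_M$, $F_M$ exceeds its minimum by some $\eta>0$, so the optimality $\hat{F}_{M,n}(\hat{w}_{M,n})\le\hat{F}_{M,n}(w^*)$ combined with the uniform convergence forces $F_M(\hat{w}_{M,n})\le\min_K F_M+o_p(1)$, a contradiction. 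Hence $\inf_{w^*\in\Psi_M}\normB{\hat{w}_{M,n}-w^*}=o_p(1)$ under every $h$.

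The main obstacle is the uniform convergence step, and within it the simultaneous handling of the random conditional law $\hat{\mathbb{G}}_n$ and the random, expanding index set $\hat{S}_n$. The derivative estimator $\hat{\kappa}_n'$ of the nondifferentiable functional is only consistent uniformly on compacta growing no faster than $o(\sqrt{n})$, so the argument must keep the arguments $\hat{\mathbb{G}}_n+w+s$ inside such compacta with high probability while using the global Lipschitz bound of $\hat{\kappa}_n'$ to control the tails — precisely the coupling of \cref{lem:kappa-hat-property}, \cref{asum:on-bootstrap}, and \cref{asum:support-G0-compact-sieve} that makes the estimator of the directional derivative behave well.
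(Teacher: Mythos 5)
Your proposal is correct and follows essentially the same route as the paper's proof: the same two-ingredient structure (uniform convergence of the criterion over $K$ plus well-separation of the minimum, then the standard argmin-consistency chain and a contiguity/Le Cam transfer to every $h$), and the same decomposition of $\sup_{w\in K}\abs{\hat{F}_{M,n}(w)-F_M(w)}$ into the Hausdorff step $\hat{S}_n\to S_n$, the replacement of $\hat{\kappa}_n'$ by $\kappa_{\theta_0}'$ on $o(\sqrt{n})$-compacta via \cref{lem:kappa-hat-property}, the bounded-Lipschitz step $\hat{\mathbb{G}}_n\to\mathbb{G}_0$ via \cref{asum:on-bootstrap}, and the inner approximation $S_n\to S(\mathbb{G}_0)$ from \cref{asum:support-G0-compact-sieve}. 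The only cosmetic difference is in the last step, where the paper upgrades pointwise convergence to uniformity over $K$ by monotonicity and Dini's theorem while you rely implicitly on the equi-Lipschitz property in $w$; both are routine.
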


\begin{proof}[Proof of \cref{lem:pw-consistency-adjustment-terms}]
    Prepare some notations to simplify the notation. Let
    \begin{align*}
        \hat{l}_{M,n}(g,w,s) = \ell_M\dparen*{\hat{\kappa}_n'(g + w + s) - \hat{\kappa}_n'(s)}
        \text{ and }
        l_M(g,w,s) = \ell_M\dparen*{\kappa_{\theta_0}'(g + w + s) - \kappa_{\theta_0}'(s)}.
    \end{align*}
    Then, 
    \begin{align*}
        \hat{F}_{M,n}(w) 
        &= \sup_{s \in \hat{S}_n}\max\dbrace*{ \tau_L^-\dparen{\hat{\theta}_n} \expect \dbrack*{\hat{l}_{M,n}(\hat{\mathbb{G}}_n,w,s)\middle| \mathbf{Z}_n}, -\tau_U^+\dparen{\hat{\theta}_n} \expect\dbrack*{\hat{l}_{M,n}(\hat{\mathbb{G}}_n,w,s)\middle| \mathbf{Z}_n} },\\
        F_M(w) 
        &= \sup_{s \in S(\mathbb{G}_0)} \max\dbrace*{\tau_L^-\dparen{\theta_0} \expect \dbrack*{l(\mathbb{G}_0,w,s)}, -\tau_U^+\dparen{\theta_0} \expect\dbrack*{l(\mathbb{G}_0,w,s)} }.
    \end{align*}
    \par
    For a while, assume uniform consistency, 
    \begin{equation}
        \sup_{w \in K} \abs*{ \hat{F}_{M,n}(w) - F_M(w) } = o_p(1),\label{eq:uniform-consistency}
    \end{equation}
    under $h = 0$ and the identification condition,
    \begin{equation}
        \inf_{w \in K\setminus \Psi_M^\epsilon} F_M(w) > \inf_{w \in K} F_M(w).\label{eq:identificaiton-condition}
    \end{equation}
    for any $\epsilon > 0$. The satisfaction of these assumptions will be proved later. Fix $\epsilon > 0$ arbitrarily. For each $w \in K\setminus \Psi_M^\epsilon$, the identification condition \cref{eq:identificaiton-condition} implies that there exists $\delta >0$ such that 
    \begin{equation*}
        F_M(w) - F_M(w^*) > \delta,
    \end{equation*}
    where $w^* \in \Psi_M$. Then, it follows that 
    \begin{align*}
        &P_{n,0}\dparen*{ \inf_{w^*\in \Psi_M}\normB{\hat{w}_{M,n} - w^*} > \epsilon }\\
        &\leq  P_{n,0}\dparen*{ \hat{w}_{M,n} \in K\setminus \Psi_M^\epsilon }\\
        &\leq P_{n,0}\dparen*{ F_M(\hat{w}_{M,n}) - F_M(w^*) > \delta }\\
        &= P_{n,0}\dparen*{ F_M(\hat{w}_{M,n}) - \hat{F}_{M,n}(\hat{w}_{M,n}) + \hat{F}_{M,n}(\hat{w}_{M,n}) - F_M(w^*) > \delta }\\
        &\leq P_{n,0}\dparen*{ F_M(\hat{w}_{M,n}) - \hat{F}_{M,n}(\hat{w}_{M,n}) + \hat{F}_{M,n}(w^*) - F_M(w^*) > \delta }\\
        &\leq P_{n,0}\dparen*{ 2\sup_{w \in K} \abs*{ \hat{F}_{M,n}(w) - F_{M}(w) } > \delta },
    \end{align*}
    where the third inequality follows since $\hat{w}_{M,n}$ is the minimizer of $\hat{F}_{M,n}(w)$. The right-hand side of the last inequality converges to zero under $h=0$ by the uniform consistency \cref{eq:uniform-consistency}. Then, the desired result follows from Le Cam's first lemma.
    \par
    The remaining task is to prove the uniform consistency \cref{eq:uniform-consistency} and identification condition \cref{eq:identificaiton-condition}. To prove the former, decompose the difference as follows:
    \begin{equation} 
        \begin{aligned}
            \hat{F}_{M,n}(w) - F_M(w) 
            = {} &
            \dparen*{\hat{F}_{M,n}(w) - \hat{F}_{M,n}^{(1)}(w)} 
            + \dparen*{\hat{F}_{M,n}^{(1)}(w) - \hat{F}_{M,n}^{(2)}(w)}\\
            &+ \dparen*{\hat{F}_{M,n}^{(2)}(w) - \hat{F}_{M,n}^{(3)}(w)}
            + \dparen*{\hat{F}_{M,n}^{(3)}(w) - F_M(w)},
        \end{aligned}
        \label{eq:criterion-difference-decompose}
    \end{equation}
    where 
    \begin{align*}
        \hat{F}_{M,n}^{(1)}(w)
        &= \sup_{s \in S_n} \max\dbrace*{ \tau_L^-\dparen{\hat{\theta}_n} \expect \dbrack*{\hat{l}_{M,n}(\hat{\mathbb{G}}_n,w,s)| \mathbf{Z}_n}, -\tau_U^+\dparen{\hat{\theta}_n} \expect\dbrack*{\hat{l}_{M,n}(\hat{\mathbb{G}}_n,w,s)| \mathbf{Z}_n} },\\
        \hat{F}_{M,n}^{(2)}(w)
        &= \sup_{s \in S_n} \max\dbrace*{ \tau_L^-\dparen{\theta_0} \expect \dbrack*{l_M(\hat{\mathbb{G}}_n,w,s)| \mathbf{Z}_n}, -\tau_U^+\dparen{\theta_0} \expect\dbrack*{l_M(\hat{\mathbb{G}}_n,w,s)| \mathbf{Z}_n} },\\
        \hat{F}_{M,n}^{(3)}(w) 
        &= \sup_{s \in S_n} \max\dbrace*{ \tau_L^-\dparen{\theta_0} \expect \dbrack*{l_M(\mathbb{G}_0,w,s)}, -\tau_U^+\dparen{\theta_0} \expect\dbrack*{l_M(\mathbb{G}_0,w,s)} }.
    \end{align*}
    Then, it suffices to show that each difference in \cref{eq:criterion-difference-decompose} uniformly converges in probability to zero.
    \par
    For the first difference in \cref{eq:criterion-difference-decompose}, observe that
    \begin{align*}
        \MoveEqLeft
        \left|
        \max\dbrace*{
            \begin{aligned}
                &\tau_L^-\dparen{\hat{\theta}_n}\expect \dbrack*{\hat{l}_{M,n}(\hat{\mathbb{G}}_n,w,s_1)| \mathbf{Z}_n},\\
                -&\tau_U^+\dparen{\hat{\theta}_n} \expect\dbrack*{\hat{l}_{M,n}(\hat{\mathbb{G}}_n,w,s_1)| \mathbf{Z}_n}
            \end{aligned}
        }
        -
        \max\dbrace*{
            \begin{aligned}
                &\tau_L^-\dparen{\hat{\theta}_n} \expect \dbrack*{\hat{l}_{M,n}(\hat{\mathbb{G}}
                _n,w,s_2)| \mathbf{Z}_n},\\
                -&\tau_U^+\dparen{\hat{\theta}_n}\expect\dbrack*{\hat{l}_{M,n}(\hat{\mathbb{G}}_n,w,s_2)| \mathbf{Z}_n}
            \end{aligned}
        }
        \right|\\
        \leq {} & \dparen*{ \tau_U^+\dparen{\hat{\theta}_n} + \tau_L^-\dparen{\hat{\theta}_n} } \expect \dbrack*{\abs*{\hat{l}_{M,n}(\hat{\mathbb{G}}_n,w,s_1) - \hat{l}_{M,n}(\hat{\mathbb{G}}_n,w,s_2)} | \mathbf{Z}_n}\\
        \leq {} & \dparen*{ \tau_U^+\dparen{\hat{\theta}_n} + \tau_L^-\dparen{\hat{\theta}_n} } \mathbb{E}\dbrack*{ \abs*{ \hat{\kappa}_n'(\hat{\mathbb{G}}_n+w+s_1) - \hat{\kappa}_n'(\hat{\mathbb{G}}_n + w + s_2) } + \abs*{ \hat{\kappa}_n'(s_1) - \hat{\kappa}_n'(s_2) } \middle| \mathbf{Z}_n }\\
        \leq {} & 2\dparen*{ \tau_U^+\dparen{\hat{\theta}_n} + \tau_L^-\dparen{\hat{\theta}_n} }C_{\hat{\kappa}_n'}\norm{s_1 - s_2}_{\mathbb{B}},
    \end{align*}
    where the last inequality follows from \cref{lem:kappa-hat-property}, provided \cref{asum:bounds-of-identificaiton-region,asum:bounds-derivative-estimator}. Moreover, the same lemma ensures the asymptotic tightness of $C_{\hat{\kappa}_n'}$; hence, $2\dparen{ \tau_L^-\dparen{\hat{\theta}_n} + \tau_U^+\dparen{\hat{\theta}_n} }C_{\hat{\kappa}_n'}$ is also asymptotically tight. Therefore, with \cref{asum:support-G0-compact-sieve}, Lemma~A.10 in \citet{Ponomarev2022} concludes 
    \begin{equation*}
        \sup_{(w)\in K} \abs*{ \hat{F}_{M,n}(w) - \hat{F}_{M,n}^{(1)}(w) } = o_p(1).
    \end{equation*}
    \par
    Next, consider the second difference in \cref{eq:criterion-difference-decompose} and observe that 
    \begin{align*}
        \MoveEqLeft
        \sup_{w \in K} \abs*{ \hat{F}_{M,n}^{(2)}(w) - \hat{F}_{M,n}^{(3)}(w) }\\
        \begin{split}
            \leq {} & \sup_{w \in K} \sup_{s \in S_n}\abs*{ \tau_L^-\dparen{\hat{\theta}_n} \cdot \expect\dbrack*{ \hat{l}_{M,n}(\hat{\mathbb{G}}_n,w,s) | \mathbf{Z}_n} - \tau_L^-(\theta_0)\cdot\expect\dbrack*{l_M(\hat{\mathbb{G}}_n,w,s) | \mathbf{Z}_n } }\\
            &+ \sup_{w \in K} \sup_{s \in S_n} \abs*{ \tau_U^+\dparen{\hat{\theta}_n}\cdot\expect\dbrack*{ \hat{l}_{M,n}(\hat{\mathbb{G}}_n,w,s) | \mathbf{Z}_n } - \tau_U^+(\theta_0)\cdot\expect\dbrack*{l_M(\hat{\mathbb{G}}_n,w,s) | \mathbf{Z}_n } }
        \end{split}\\
        \begin{split}
            \leq {} & \abs*{\tau_L^-\dparen{\hat{\theta}_n} - \tau_L^-(\theta_0)} \cdot \sup_{w \in K} \sup_{s \in S_n} \abs*{\expect\dbrack*{ \hat{l}_{M,n}(\hat{\mathbb{G}}_n,w,s)| \mathbf{Z}_n }}\\
            & + \tau_L^-(\theta_0) \cdot \sup_{w \in K} \sup_{s \in S_n} \expect\dbrack*{ \abs*{ \hat{l}_{M,n}(\hat{\mathbb{G}}_n,w,s) - l_M(\hat{\mathbb{G}}_n,w,s) } | \mathbf{Z}_n }\\
            & + \abs*{ \tau_U^+(\hat{\theta}_n) - \tau_U^+(\theta_0) } \cdot \sup_{w \in K} \sup_{s \in S_n} \abs*{ \expect\dbrack*{ \hat{l}_{M,n}(\hat{\mathbb{G}}_n,w,s) | \mathbf{Z}_n } }\\
            & + \tau_U^+(\theta_0) \cdot \sup_{w \in K} \sup_{s \in S_n} \expect\dbrack*{ \abs*{ \hat{l}_{M,n}(\hat{\mathbb{G}}_n,w,s) - l_M(\hat{\mathbb{G}}_n,w,s) } | \mathbf{Z}_n }
        \end{split}\\
        \begin{split}
            \leq {} & \dparen*{ \tau_U^+(\theta_0) + \tau_L^-(\theta_0) } \cdot \sup_{w \in K} \sup_{s \in S_n} \expect\dbrack*{ \abs*{ \hat{l}_{M,n}(\hat{\mathbb{G}}_n,w,s) - l_M(\hat{\mathbb{G}}_n,w,s) } | \mathbf{Z}_n }\\
            & + M\cdot \dparen*{ \abs*{\tau_L^-\dparen{\hat{\theta}_n} - \tau_L^-(\theta_0)} + \abs*{ \tau_U^+(\hat{\theta}_n) - \tau_U^+(\theta_0) } },
        \end{split}
    \end{align*}
    where the last inequality follows, as $\abs{ \hat{l}_{M,n}(\hat{\mathbb{G}}_n,w,s) } \leq M$. Under \cref{asum:bounds-of-identificaiton-region}, $\abs{\tau_L^-\dparen{\hat{\theta}_n} - \tau_L^-(\theta_0)} = o_p(1)$ and $\abs{ \tau_U^+(\hat{\theta}_n) - \tau_U^+(\theta_0) } = o_p(1)$.
    \par
    In what follows, the study confirms that 
    \begin{equation*}
        \adjustlimits\sup_{w \in K}\sup_{s \in S_n} \expect\dbrack*{ \abs*{ \hat{l}_{M,n}(\hat{\mathbb{G}}_n,w,s) - l_M(\hat{\mathbb{G}}_n,w,s) } | \mathbf{Z}_n } = o_p(1).
    \end{equation*}
    Note, first, that, under \cref{asum:on-bootstrap}, Lemma~S.3.1 in \citet{Fang2019} ensures $\hat{\mathbb{G}}_n \overset{0}{\leadsto} \mathbb{G}_0$ unconditionally. As $\mathbb{G}_0$ is tight, Lemma~1.3.8 in \citet{Vaart1996} implies the asymptotic tightness of $\hat{\mathbb{G}}_n$: that is, for any $\epsilon > 0$ and $\eta > 0$, there exists a compact set $S \subset \mathbb{B}$ such that 
    \begin{equation*}
        \limsup_{n \to \infty} P_{n,0}(\hat{\mathbb{G}}_n \notin S^\delta) \leq \epsilon\eta \quad\text{for every}\quad\delta >0.
    \end{equation*}
    Then, Markov's inequality yields
    \begin{equation*}
        \limsup_{n\to\infty} P_{n,0}\dparen*{P_{n,0}\dparen*{\hat{\mathbb{G}}_n \in S^\delta | \mathbf{Z}_n} > \epsilon } \leq \limsup_{n\to\infty} \frac{P_{n,0}(\hat{\mathbb{G}}_n \notin S^\delta)}{\epsilon} \leq \eta,
    \end{equation*}
    wihch implies $P_{n,0}\dparen{\hat{\mathbb{G}}_n \notin S^\delta | \mathbf{Z}_n } = o_p(1)$. From this observation, 
    \begin{align*}
        \MoveEqLeft
        \adjustlimits\sup_{w \in K}\sup_{s \in S_n}\expect\dbrack*{ \abs*{ \hat{l}_{M,n}(\hat{\mathbb{G}}_n,w,s) - l_M(\hat{\mathbb{G}}_n,w,s) } | \mathbf{Z}_n }\\
        \leq {} & 2M\cdot P_{n,0}\dparen{\hat{\mathbb{G}}_n \notin S^\delta | \mathbf{Z}_n } + \sup_{w \in K}\sup_{g \in S^\delta}\sup_{s \in S_n} \abs*{ \hat{l}_{M,n}(g,w,s) - l_M(g,w,s) }\\
        \leq {} &  \adjustlimits\sup_{w \in K}\sup_{g \in S^\delta}\sup_{s \in S_n} \abs*{ \hat{\kappa}_n'(g + w + s) - \kappa_{\theta_0}'(g + w + s)} + \sup_{s \in S_n^\delta}\abs*{ \hat{\kappa}_n'(s) - \kappa_{\theta_0}'(s) } + o_p(1)\\
        \leq {} & 
        \sup_{\tilde{s} \in K_{n,\delta}} \abs*{ \hat{\kappa}_n'(\tilde{s}) - \kappa_{\theta_0}'(\tilde{s})}
        + \sup_{s \in S_n^\delta}\abs*{ \hat{\kappa}_n'(s) - \kappa_{\theta_0}'(s) } + o_p(1),
    \end{align*}
    where $K_{n,\delta} = \dbrace{ g + w + s : w \in K, g \in S^\delta, s \in S_n}$.
    \par
    Let $K_n = \dbrace{g + w + s : w \in K, g \in S, s\in S_n}$. Three properties of $K_n$ follows. First, $K_n$ is a compact set, as $K \times S\times S_n \subset \mathbb{B}\times\mathbb{B}\times\mathbb{B}$ is compact by Tychonoff's theorem and is the image of $K \times S\times S_n$ by the continuous function $(w,g,s) \mapsto w + g + s$. Second, $K_{n,\delta} \subset K_n^\delta$. To observe this, let $b \in K_{n,\delta}$. There exists $w \in K$, $g \in S^\delta$, and $s \in S_n$ such that $b = w + g + s$. As $g \in S^\delta$, there further exists $\tilde{g} \in S$ with $\normB{g - \tilde{g}} < \delta$. Then, $b \in K^\delta$ since $\tilde{b} = w + \tilde{g} + s \in K_n$ satisfies $\norm{b - \tilde{b}}_{\mathbb{B}} < \delta$. Third, $\sup_{b \in K_n} \normB{b} = o(\sqrt{n})$. This is implied by the following inequality:
    \begin{equation*}
        \frac{\sup_{b \in K_n} \normB{b}}{\sqrt{n}} \leq \frac{\sup_{w \in K,g\in S}\norm{g + w}_{\mathbb{B}}}{\sqrt{n}} + \frac{\sup_{s \in S_n} \normB{s}}{\sqrt{n}},
    \end{equation*}
    where the first term of the right-hand side converges to zero, as $\sup_{w \in K,g\in S}\norm{g + w}_{\mathbb{B}}$ is finite by the compactness of $K$ and $S$, and the second term converges to zero by \cref{asum:support-G0-compact-sieve}. Note there is also $\sup_{b \in K_n^\delta} \normB{b} = o(\sqrt{n})$ for every $\delta > 0$.
    \par
    Combining the observations thus far yields
    \begin{align*}
        \MoveEqLeft
        \adjustlimits\sup_{w \in K}\sup_{s \in S_n}\expect\dbrack*{ \abs*{ \hat{l}_{M,n}(\hat{\mathbb{G}}_n,w,s) - l_M(\hat{\mathbb{G}}_n,w,s) } | \mathbf{Z}_n }\\
        &\leq \sup_{\tilde{s} \in K_{n}^\delta} \abs*{ \hat{\kappa}_n'(\tilde{s}) - \kappa_{\theta_0}'(\tilde{s})}
        + \sup_{s \in S_n^\delta}\abs*{ \hat{\kappa}_n'(s) - \kappa_{\theta_0}'(s) } + o_p(1).
    \end{align*}
    The convergence in probability of the right-hand side of the preceding inequality is guaranteed by \cref{lem:kappa-hat-property}.
    \par
    For the third difference in \cref{eq:criterion-difference-decompose}, observe that 
    \begin{align*}
        \MoveEqLeft
        \sup_{w \in K} \abs*{ \hat{F}_{M,n}^{(2)}(w) - \hat{F}_{M,n}^{(3)}(w) }\\
        \leq {} & \dparen*{ \tau_L^-(\theta_0) + \tau_U^+(\theta_0) } \cdot \adjustlimits\sup_{w \in K} \sup_{s \in S_n} \abs*{ \expect\dbrack*{l_M(\hat{\mathbb{G}}_n,w,s) | \mathbf{Z}_n } - \expect\dbrack*{l_M(\mathbb{G}_0,w,s)} }.
    \end{align*}
    \cref{lem:kappa-directionally-differentiable} implies that $l_M$ is uniformly bounded by $M$ and uniformly Lipschitz continuous with constant $C_{\kappa_{\theta_0}'}$. Thus, if $c = \max\dbrace{M,C_{\kappa_{\theta_0}'}}$, the right-hand side of the last inequality is bounded from above by
    \begin{equation*}
        \dparen*{\tau_L^-(\theta_0) + \tau_U^+(\theta_0)}\cdot \sup_{f \in \mathrm{BL}_c(\mathbb{B})} \abs*{ \expect\dbrack*{f(\hat{\mathbb{G}}_n)| \mathbf{Z}_n} - \expect\dbrack*{f(\mathbb{G}_0)} },
    \end{equation*}
    which converges in probability to zero by \cref{asum:on-bootstrap}.
    \par
    Finally, consider the forth difference in \cref{eq:criterion-difference-decompose} and observe that $\hat{F}_{M,n}^{(3)}(w)$ monotonically converges to $F_M(w)$ for every $w \in K$. To show this, note, first, that, for every $\epsilon > 0$, there exists $s \in S(\mathbb{G}_0)$ such that 
    \begin{equation}
        F(w) - \frac{\epsilon}{2} < \max\dbrace*{\tau_L^-\dparen{\theta_0}\cdot \expect \dbrack*{l_M(\mathbb{G}_0,w,s)}, -\tau_U^+\dparen{\theta_0}\cdot \expect\dbrack*{l_M(\mathbb{G}_0,w,s)}}.
        \label{eq:uni-con-4th-diff-1st-ineq}
    \end{equation}
    Let $\tilde{\epsilon} = \epsilon\dparen{4C_{\kappa_{\theta_0}'}\dparen{\tau_L^-(\theta_0)+\tau_U^+(\theta_0)}}^{-1}$. By \cref{asum:support-G0-compact-sieve}, there exists $n_0 \in \mathbb{N}$ such that, for all $n \geq n_0$, there exists $s_n \in S_n$, where $\norm{s - s_n}_{\mathbb{B}} < \tilde{\epsilon}$. Then, for $n \geq n_0$, 
    \begin{align*}
        \MoveEqLeft
        \left|
            \max\dbrace*{
            \begin{aligned}
                &\tau_L^-\dparen{\theta_0} \expect \dbrack*{l_M(\mathbb{G}_0,w,s)},\\
                -&\tau_U^+\dparen{\theta_0} \expect\dbrack*{l_M(\mathbb{G}_0,w,s)}
            \end{aligned}
            } -
            \max\dbrace*{
            \begin{aligned}
                &\tau_L^-\dparen{\theta_0} \expect \dbrack*{l_M(\mathbb{G}_0,w,s_n)},\\
                -&\tau_U^+\dparen{\theta_0} \expect\dbrack*{l_M(\mathbb{G}_0,w,s_n)}
            \end{aligned}
            }
        \right|\\
        & \leq \dparen*{ \tau_L^-(\theta_0) + \tau_U^+(\theta_0) } \expect\dbrack*{ \abs*{ l_M(\mathbb{G}_0,w,s) - l_M(\mathbb{G}_0,w,s_n) } }\\
        &\leq 2C_{\kappa_{\theta_0}'}\dparen*{ \tau_L^-(\theta_0) + \tau_U^+(\theta_0) } \norm{s - s_n}_{\mathbb{B}}\\
        &< \frac{\epsilon}{2}.
    \end{align*}
    Combining this inequality with \cref{eq:uni-con-4th-diff-1st-ineq} implies that, for all $n \geq n_0$, 
    \begin{align*}
        F_M(w) - \epsilon 
        < \max\dbrace*{
        \begin{aligned}
            &\tau_L^-\dparen{\theta_0} \expect \dbrack*{l_M(\mathbb{G}_0,w,s_n)},\\
            -&\tau_U^+\dparen{\theta_0} \expect\dbrack*{l_M(\mathbb{G}_0,w,s_n)}
        \end{aligned}
        }
        \leq \hat{F}_{M,n}^{(3)}(w).
    \end{align*}
    Noting that $\hat{F}_{M,n}^{(3)}(w) \leq F_M(w)$ holds by definition, $\lim_{n \to \infty} \hat{F}_{M,n}^{(3)}(w) = F_M(w)$ for each $w \in K$. Finally, as $F_M(w)$ is (Lipschitz) continuous in $w$, Dini's theorem ensures 
    \begin{equation*}
        \sup_{w \in K} \abs*{ \hat{F}_{M,n}^{(3)}(w) - F_M(w) } = o(1).
    \end{equation*}
    Combining observations thus far ensures the satisfaction of \cref{eq:uniform-consistency}.
    \par
    Next is the proof of \cref{eq:identificaiton-condition}. Fix $\epsilon > 0$ arbitrarily and suppose that $\inf_{w \in K\setminus \Psi_M^\epsilon} F_M(w) = \inf_{w\in K} F_M(w)$. Then, it is possible to construct a sequence $\{w_n\} \subset K\setminus \Psi_M^\epsilon$ such that 
    \begin{equation*}
        \lim_{n \to \infty} F_M(w_n) = \inf_{w\in K} F_M(w).
    \end{equation*}
    As the closure of $K\setminus \Psi_M^\epsilon$ is compact, one can pick a subsequence $\{w_{n_j}\}$ such that $w_{n_j} \to w^* \in \overline{K\setminus \Psi_M^\epsilon}$ as $j \to \infty$. Then, the continuity of $F_M(w)$ in $w$ gives 
    \begin{equation*}
        F_M(w^*) = \lim_{j \to \infty} F_M(w_{n_j}) = \inf_{w \in K} F_M(w),
    \end{equation*}
    which implies $w^* \in \Psi_M$. However, $w^* \in \overline{K\setminus \Psi_M^\epsilon}$ implies one can construct a sequence $\{\tilde{w}_m\} \subset K\setminus \Psi_M^\epsilon$ such that $\tilde{w}_m \to w^*$ as $m \to \infty$. This implies
    \begin{equation*}
        \inf_{w \in \Psi_M}\normB{w^* - w} = \lim_{m \to \infty} \inf_{w \in \Psi_M} \normB{\tilde{w}_m - w} \geq \epsilon > 0,
    \end{equation*}
    which contradicts to $w^* \in \Psi_M$.  
\end{proof}

\section{Details of Examples}

\subsection{\texorpdfstring{\cref{ex:identification-region-with-empirical-evidence-alone}}{Example~1}}
\label{sec:example1-detail}
(On page \pageref{ex:example1-fourth}) The study first introduced the class $\mathcal{P}$ of possible data-generating processes. Suppose $\mathcal{Y}$ is a compact set of $\mathbb{R}$ equipped with the usual metric. Moreover, regard $\mathcal{D}=\{0,1\}$ as the metric space with the discrete metric. Then, $\mathcal{Z}_0 = \mathcal{Y}\times\mathcal{Y}\times\mathcal{D}$ to which the potential observation $(Y(1),Y(0),D)$ belongs is a complete and separable metric space for the corresponding product metric. Similarly, $\mathcal{Z} = \mathcal{Y}\times\mathcal{D}$ to which the observation $(Y,D)$ belongs is also a complete and separable metric space for the corresponding product metric. As the class of possible distributions of $(Y(1),Y(0),D)$, let $\mathcal{Q} = \dbrace{Q:Q \ll \nu}$, where $\nu$ is a finite measure on $(\mathcal{Z}_0,\mathcal{B}(\mathcal{Z}_0))$. Define the observation function as $\gamma(Y(1),Y(0),D) = \dparen{Y(1)D + Y(0)(1-D),D}$. Then, define the class $\mathcal{P}$ of possible dgps by $\mathcal{P} = \dbrace*{ Q\circ\gamma^{-1} : Q \in \mathcal{Q} }$.
\par
The class $\mathcal{P}$ equals $\{P:P \ll \mu\}$, where $\mu = \nu\circ \gamma^{-1}$. The inclusion, $\mathcal{P} \subset \{P:P\ll \mu\}$, readily follows since $Q\circ\gamma^{-1} \ll \mu$. Indeed, since $Q \ll \nu$, $\mu(A) = \nu(\gamma^{-1}(A)) = 0$ implies $Q\circ\gamma^{-1}(A) = Q(\gamma^{-1}(A)) = 0$. The study can confirm the converse as follows. Let $P \ll \mu$. There exists a density function $p:\mathcal{Z}\to[0,\infty)$ such that $P(B) = \int_B p(z)\mu(\mathrm{d}z)$ for all $B \in \mathcal{B}(\mathcal{Z})$. Define a non-negative measurable function $q:\mathcal{Z}_0\to[0,\infty)$ by $q(z_0) = p(\gamma(z_0))$ for all $z_0 \in \mathcal{Z}_0$, and define $Q(A) = \int_A q(z_0)\nu(\mathrm{d}z_0)$ for all $A \in \mathcal{B}(\mathcal{Z}_0)$. By construction, $Q \ll \nu$. Further, for $B \in \mathcal{B}(\mathcal{Z})$, it holds that
\begin{equation*}
    P(B) = \int_B p(z)\mu(\mathrm{d}z) = \int_{\gamma^{-1}(B)} p(\gamma(z_0))\nu(\mathrm{d}z_0) = \int_{\gamma^{-1}(B)} q(z_0)\nu(\mathrm{d}z_0) = Q(\gamma^{-1}(B)).
\end{equation*}
Thus, $P$ is an element of $\mathcal{P}$.
\par
Next, specify the tangent set $T(P_0)$ as in Example~25.16 in \textcite{Vaart1998}. Let $p_0$ be the density function of the true dgp $P_0 \in \mathcal{P}$. For each $h \in L_2^0(P_0) = \dbrace{h \in L_2(P_0): \int hdP_0 = 0}$, define the path $t \mapsto P_t$ by $P_t(B) = \int_B p_t(z)\mu(\mathrm{d}z)$, where $p_t:\mathcal{Z}\to[0,\infty)$ is the density function given by 
\begin{equation*}
    p_t(z) = \frac{\psi(th(z))p_0(z)}{\expect_{P_0}[\psi(th)]}.
\end{equation*}
Here, $\psi:\mathbb{R}\to(0,\infty)$ is a bounded and continuously differentiable function such that (i) the derivative $\psi'$ is bounded, (ii) $\psi'/\psi$ is bounded, and (iii) $\psi(0) = \psi'(0) = 1$. For instance, $\psi(x) = 2(1+e^{-2x})^{-1}$ meets these conditions. Note that $P_t$ is absolutely continuous with respect to $\mu$, where $P_t \in \mathcal{P}$. Moreover, the path $t \mapsto P_t$ satisfies \cref{eq:DQM} with score function $h$, which implies $L_2^0(P_0) \subset T(P_0)$. Conversely, since any score function satisfying \cref{eq:DQM} is necessarily an element of $L_2^0(P_0)$, $T(P_0) = L_2^0(P_0)$ \parencite[see, e.g.,][Lemma~25.14]{Vaart1998}.
\par
The intermediate parameter is pathwise differentiable at $P_0$ relative to the tangent set. To observe this, note that $\theta(P_t)$ can be expressed as in 
\begin{equation*}
    \theta(P_t) = \dparen*{ \frac{\expect_{P_t}[YD]}{P_t(D=1)},~\frac{\expect_{P_t}[Y(1-D)]}{P_t(D=0)},~P_t(D=1) }^\intercal.
\end{equation*}
Noting that 
\begin{equation*}
    \frac{\psi(th(z))}{\expect_{P_0}[\psi(th)]} - 1 = th(z) + o(t), 
\end{equation*}
then 
\begin{equation*}
    t^{-1}\dparen*{P_t(D=1) - P_0(D=1)} \to \expect_{P_0}[Dh(Z)].
\end{equation*}
Based on this convergence, 
\begin{align*}
    \MoveEqLeft
    t^{-1}\dparen*{ \frac{\expect_{P_t}[YD]}{P_t(D=1)} - \frac{\expect_{P_0}[YD]}{P_0(D=1)} }\\
    = {} & \frac{t^{-1}\dparen*{ \expect_{P_t}[YD] - \expect_{P_0}[YD] }}{P_0(D=1)} - \frac{\expect_{P_t}[YD]\cdot t^{-1}\dparen*{ P_t(D=1) - P_0(D=1) }}{P_t(D=1)P_0(D=1)}\\
    \to {} & \expect_{P_0}\dbrack*{ \frac{(Y - \mu_{1,0})D}{p_0}h(Z) }.
\end{align*}
Similarly,
\begin{align*}
    \MoveEqLeft
    t^{-1}\dparen*{ \frac{\expect_{P_t}[Y(1-D)]}{P_t(D=0)} - \frac{\expect_{P_0}[Y(1-D)]}{P_0(D=0)} } \to \expect_{P_0}\dbrack*{ \frac{(Y - \mu_{0,0})(1-D)}{1-p_0}h(Z) }.
\end{align*}
Therefore, 
\begin{equation*}
    \Dot{\theta}_0(h) = \int
    \dparen*{\frac{(y - \mu_{1,0})d}{p_0},\frac{(y - \mu_{0,0})(1-d)}{1-p_0},d}^\intercal h(z) P_0(\mathrm{d}z).
\end{equation*}
One can easily observe that 
\begin{equation*}
    \tilde{\theta}_0(Z) = \dparen*{ \frac{(Y - \mu_{1,0})D}{p_0}, \frac{(Y - \mu_{0,0})(1-D)}{1-p_0}, D - p_0 }^\intercal 
\end{equation*}
satisfies $\tilde{\theta}_0 \in (T(P_0))^3$ and $\Dot{\theta}_0(h) = \dangle{\tilde{\theta}_0,h}$, and, hence, it is the efficient influence function.
\par
The best regularity of 
\begin{equation*}
    \hat{\theta}_n
    = \dparen*{ \frac{\sum_{i=1}^n Y_iD_i}{\sum_{i=1}^n D_i},~\frac{\sum_{i=1}^nY_i(1-D_i)}{\sum_{i=1}^n (1-D_i)},~\frac{\sum_{i=1}^n D_i}{n} }^\intercal 
\end{equation*}
readily follows from Lemma~25.23 in \textcite{Vaart1998}, as 
\begin{align*}
    \MoveEqLeft
    \sqrt{n}(\hat{\theta}_n - \theta(P_0)) \\
    = {} & \dparen*{ \frac{n^{-1/2}\sum_{i=1}^n (Y_i-\mu_{1,0})D_i}{\sum_{i=1}^n D_i/n},\frac{n^{-1/2}\sum_{i=1}^n(Y_i-\mu_{0,0})(1-D_i)}{\sum_{i=1}^n (1-D_i)/n},n^{-1/2}\sum_{i=1}^n (D_i - p_0) }^\intercal\\
    = {} & \frac{1}{\sqrt{n}}\sum_{i=1}^n \tilde{\theta}_0(Z_i) + o_p(1),
\end{align*}
where the second equality follows $\sum_{i=1}^nD_i / n \overset{p}{\to} p_0$

\subsection{\texorpdfstring{\cref{ex:experimental-study-with-non-random-sampling}}{Example~2}}
\label{sec:example2-detail}
(On page \pageref{ex:example2-third}) The study began with the introduction of the possible data-generating processes $\mathcal{P}$. Define $\mathcal{Y}$ and $\mathcal{D}$ following \cref{sec:example1-detail}. For simplicity, assume $\mathcal{Y}$ contains zero. Moreover, think of $\mathcal{S}=\{0,1\}$ as the metric space with the discrete metric. Then, $\mathcal{Z}_0 = \mathcal{Y}\times\mathcal{Y}\times\mathcal{D}\times\mathcal{S}$ to which the potential observation $(Y(1),Y(0),D,S)$ belongs is a complete and separable metric space for the corresponding product metric. Similarly, $\mathcal{Z} = \mathcal{Y} \times \mathcal{D}\times \mathcal{S}$ to which the observation $(SY,SD,S)$ belongs is also a complete separable metric space for the corresponding product metric. It is convenient to think as if an experiment had also been conducted on the population with $S=0$. Thus, as the class of possible distributions of $(Y(1),Y(0),D,S)$, let $\mathcal{Q} = \dbrace{Q: Q \ll \nu,~ (Y(1),Y(0))\perp D|S }$. Here, $\nu$ is the product measure on $\mathcal{Z}_0$ given by $\nu = \nu_{\mathcal{Y}}\otimes\nu_{\mathcal{Y}}\otimes\nu_{\mathcal{D}}\otimes\nu_{\mathcal{S}}$, where $\nu_{\mathcal{Y}}$ is a finite measure on $\mathcal{Y}$, and $\nu_{\mathcal{D}}$ and $\nu_{\mathcal{S}}$ are the counting measures on $\{0,1\}$. Define the observation function as $\gamma(Y(1),Y(0),D,S) = \dparen{\tilde{Y},\tilde{D},S}$, where $\tilde{Y} = SY$, $\tilde{D} = SD$, and $Y = Y(1)D + Y(0)(1-D)$. Then, define the class $\mathcal{P}$ of possible dgps by $\mathcal{P} = \dbrace*{ Q\circ\gamma^{-1} : Q \in \mathcal{Q} }$.
\par
The class $\mathcal{P}$ equals $\{P:P\ll \mu\}$, where $\mu = \nu \circ \gamma^{-1}$. The inclusion, $\mathcal{P} \subset \{P: P \ll \mu\}$, can be shown in the same way as \cref{sec:example1-detail}. To show the converse inclusion, let $P \ll \mu$. There exists a density function $p:\mathcal{Z}\to[0,\infty)$ such that $P(B) = \int_B p(z)\mu(\mathrm{d}z)$ for all $B \in \mathcal{B}(\mathcal{Z})$. By Corollary 10.4.17 in \textcite{Bogachev2007}, there exist regular conditional measures $\mu(\cdot,\tilde{d},s)$ on $\mathcal{B}(\mathcal{Y})$ and $\mu(\cdot,s)$ on $\mathcal{B}(\mathcal{D})$, and 
\begin{equation*}
    \int_{\mathcal{Z}} p(z)\mu(\mathrm{d}z) 
    = \int_{\mathcal{S}} \int_{\mathcal{D}} \int_{\mathcal{Y}} p(z)\mu(\mathrm{d}\tilde{y},\tilde{d},s)\mu(\mathrm{d}\tilde{d},s)\mu_{S}(s),
\end{equation*}
where $\mu_{S}$ denotes the image of $\mu$ under the projection $(\tilde{y},\tilde{d},s) \mapsto s$. Then, letting
\begin{align*}
    &p_{\tilde{Y}}(\tilde{y}|\tilde{D} = \tilde{d},S = s) = \frac{p(z)}{p_{(\tilde{D},S)}(\tilde{d},s)}, \quad p_{(\tilde{D},S)}(\tilde{d},s) = \int_{\mathcal{Y}} p(z) \mu(\mathrm{d}\tilde{y},\tilde{d},s),\\
    &p_{\tilde{D}}(\tilde{d}|S = s) = \frac{p_{(\tilde{D},S)}(\tilde{d},s)}{p_S(s)},\quad\text{and}\quad p_{S}(s) = \int_{\mathcal{D}} p_{(\tilde{D},S)}(\tilde{d},s)\mu(\mathrm{d}\tilde{d},s),
\end{align*}
the density $p(z)$ can be factorized as in $p(z) = p_{\tilde{Y}}(\tilde{y}|\tilde{D} = \tilde{d}, S = s)p_{\tilde{D}}(\tilde{d}|S = s)p_S(s)$. Define 
\begin{equation*}
    q((y_1,y_0,d,s)) = p_{\tilde{Y}}(y_1|\tilde{D} = 1,S=1)p_{\tilde{Y}}(y_0|\tilde{D} = 0,S=1)p_{\tilde{D}}(d|S = 1)p_S(s),
\end{equation*}
and $Q(A) = \int_A q(z_0)\nu(\mathrm{d}z_0)$ for all $A \in \mathcal{B}(\mathcal{Z}_0)$. By construction, $Q \ll \nu$. Further, $(Y(1),Y(0))\perp D | S$ under $Q$ as follows:
\begin{align*}
    \MoveEqLeft
    Q\dparen{(Y(1),Y(0)) \in A_1, D \in A_2|S = s}\\
    = {} & \dparen*{ \int_{\mathcal{Y}\times\mathcal{Y}\times\mathcal{D}} q(z_0) \nu_{\mathcal{Y}}\otimes\nu_{\mathcal{Y}}\otimes\nu_{\mathcal{D}}(\mathrm{d}(y_1,y_0,d))}^{-1}\\
    & \times\int_{\mathcal{Z}_0} \Bigl(1\{(Y(1),Y(0)) \in A_1, D \in A_2\} p_{\tilde{Y}}(y_1|\tilde{D} = 1,S=1)p_{\tilde{Y}}(y_0|\tilde{D} = 0,S=1)\\
    & \qquad \times p_{\tilde{D}}(d|S = 1)\Bigr) \nu_{\mathcal{Y}}\otimes\nu_{\mathcal{Y}}\otimes\nu_{\mathcal{D}}(\mathrm{d}(y_1,y_0,d))\\
    = {} & \dparen*{ \int_{\mathcal{Y}\times\mathcal{Y}\times\mathcal{D}} q(z_0) \nu_{\mathcal{Y}}\otimes\nu_{\mathcal{Y}}\otimes\nu_{\mathcal{D}}(\mathrm{d}(y_1,y_0,d))}^{-1}\\
    & \int_{\mathcal{Z}_0} 1\{(Y(1),Y(0)) \in A_1\}p_{\tilde{Y}}(y_1|\tilde{D} = 1,S=1)p_{\tilde{Y}}(y_0|\tilde{D} = 0,S=1) \nu_{\mathcal{Y}}\otimes\nu_{\mathcal{Y}}(\mathrm{d}(y_1,y_0))\\
    & \times \int_{\mathcal{Z}_0} 1\{D \in A_2\} p_{\tilde{D}}(d|S = 1) \nu_{\mathcal{D}}(\mathrm{d}d)\\
    = {} & Q((Y(1),Y(0)) \in A_1|S = s)Q(D \in A_2|S=s).
\end{align*}
Combine these observations, to obtain $Q \in \mathcal{P}$.
\par
The tangent set $T(P_0)$ is equal to $L_2^0(P_0)$, which can be shown in the same way as \cref{sec:example1-detail}. Moreover, the intermediate parameter,
\begin{equation*}
    \theta(P_t) = \dparen*{ \frac{\expect_{P_t}[YDS]}{P_t(D = 1, S = 1)},~ \frac{\expect_{P_t}[Y(1-D)S]}{P_t(D = 0, S = 1)}, ~ P_t(S = 1) }^\intercal,
\end{equation*}
is also pathwise differentiable relative to this tangent set. In particular, a similar argument as that of \cref{sec:example1-detail} concludes that the efficient influence function is
\begin{align*}
    \tilde{\theta}_0(Z)
    = 
    \dparen*{ 
        \frac{(Y - \mu_{1,0})DS}{P_0(D=1,S=1)},
        \frac{(Y - \mu_{0,0})(1-D)S}{P_0(D=0,S=1)},  
        S - p_0
    }^\intercal.
\end{align*}
Again, Lemma~25.23 in \textcite{Vaart1998} gives the best regularity of 
\begin{equation*}
    \hat{\theta}_n = (\hat{\mu}_{1,n},\hat{\mu}_{0,n},\hat{p}_n) = \dparen*{ \frac{\sum_{i=1}^n Y_iD_iS_i}{\sum_{i=1}^nD_iS_i},~\frac{\sum_{i=1}^nY_i(1-D_i)S_i}{\sum_{i=1}^n (1-D_i)S_i},\frac{\sum_{i=1}^n S_i}{n} }^\intercal.
\end{equation*} 
\par
(On page \pageref{ex:example2-fifth}) To obtain the optimal adjustment term $w_{\theta_0}^*$ that minimizes the asymptotic maximum RIMR for the case $\tau_L(\theta_0) = 0 < \tau_U(\theta_0)$, it suffices to solve 
\begin{equation}
    \adjustlimits{\inf}_{w \in \mathbb{R}^3}{\sup}_{s \in \mathbb{R}^3} \max\dbrace*{ 0, F(\mathfrak{w},\mathfrak{s}) },
    \label{eq:example2-rimr-lower-bound}
\end{equation}
where $F(\mathfrak{w},\mathfrak{s}) = \min\dbrace*{ \mathfrak{s},0 } - \expect\dbrack*{ \min\dbrace*{ \tau_{L,\theta_0}'(\mathbb{G}_0) + \mathfrak{w} + \mathfrak{s}, 0 } }$, $\mathfrak{w} = \tau_{L,\theta_0}'(w)$, and $\mathfrak{s} = \tau_{L,\theta_0}'(s)$. As 
\begin{equation*}
    \tau_{L,\theta_0}'(\mathbb{G}_0) + \mathfrak{w} + \mathfrak{s} \sim N(\mathfrak{w}+\mathfrak{s},\sigma^2), \quad \sigma^2 = \dparen*{\frac{\partial \tau_L}{\partial \theta}\Bigr|_{\theta = \theta_0}}^\intercal\Sigma_{\theta_0}\dparen*{\frac{\partial \tau_L}{\partial \theta}\Bigr|_{\theta = \theta_0}},
\end{equation*}
the direct calculation yields
\begin{align*}
    \expect\dbrack*{ \min\dbrace*{ \tau_{L,\theta_0}'(\mathbb{G}_0) + \mathfrak{w} + \mathfrak{s}, 0 } }
    = -\frac{\sigma}{\sqrt{2\pi}}\exp\dparen*{ -\frac{(\mathfrak{w}+\mathfrak{s})^2}{2\sigma^2} } + (\mathfrak{w}+\mathfrak{s})\dparen*{ 1 - \Phi\dparen*{\frac{\mathfrak{w}+\mathfrak{s}}{\sigma}} }.
\end{align*}
Fix $w \in \mathbb{R}^3$. Then, the inner supremum in \cref{eq:example2-rimr-lower-bound} is equal to $\max\dbrace{ 0,\sup_{\mathfrak{s} \in \mathbb{R}}F(\mathfrak{w},\mathfrak{s}) }$, where
\begin{equation*}
    F(\mathfrak{w},\mathfrak{s}) = \min\dbrace*{\mathfrak{s},0} + \frac{\sigma}{\sqrt{2\pi}}\exp\dparen*{ -\frac{(\mathfrak{w}+\mathfrak{s})^2}{2\sigma^2} } - (\mathfrak{w}+\mathfrak{s})\dparen*{ 1 - \Phi\dparen*{\frac{\mathfrak{w}+\mathfrak{s}}{\sigma}}}.
\end{equation*}
When $\mathfrak{s} \in (0,\infty)$, $F(\mathfrak{w},\mathfrak{s})$ is decreasing in $\mathfrak{s}$, as $\frac{\partial F(\mathfrak{w},\mathfrak{s})}{\partial \mathfrak{s}} = \Phi\dparen*{ \frac{\mathfrak{w} + \mathfrak{s}}{\sigma}} - 1$. Conversely, when $\mathfrak{s} \in (-\infty,0)$, $F(\mathfrak{w},\mathfrak{s})$ is increasing, as $\frac{\partial F(\mathfrak{w},\mathfrak{s})}{\partial \mathfrak{s}} = \Phi\dparen*{ \frac{\mathfrak{w} + \mathfrak{s}}{\sigma}}$. Therefore, the supremum is attained at $\mathfrak{s} = 0$, with value 
\begin{equation*}
    F(\mathfrak{w},0) = \frac{\sigma}{\sqrt{2\pi}}\exp\dparen*{ -\frac{\mathfrak{w}^2}{2\sigma^2} } - \mathfrak{w}\dparen*{ 1 - \Phi\dparen*{\frac{\mathfrak{w}}{\sigma}}},
\end{equation*}
and \cref{eq:example2-rimr-lower-bound} equals $\inf_{\mathfrak{w} \in \mathbb{R}} \max\dbrace{ 0, F(\mathfrak{w},0) }$. Observe that 
\begin{equation*}
    \lim_{\mathfrak{w} \to -\infty} F(\mathfrak{w},0) = +\infty, \quad \lim_{\mathfrak{w} \to \infty} F(\mathfrak{w},0) = 0, \quad \text{and} \quad \frac{\partial F(\mathfrak{w},0)}{\partial \mathfrak{w}} = \Phi\dparen*{ \frac{\mathfrak{w}}{\sigma} } - 1.
\end{equation*}
By letting $\mathfrak{w} \to \infty$, the infimum is zero.
\par
The optimal adjustment term $w_{\theta_0}^\dagger$ that minimizes the asymptotic maximum MSE can be obtained by solving 
\begin{equation}
    \adjustlimits\inf_{\mathfrak{w} \in \mathbb{R}}\sup_{\mathfrak{s} \in \mathbb{R}} G(\mathfrak{w},\mathfrak{s})
    \label{eq:example2-mse-lower-bound}
\end{equation}
where $G(\mathfrak{w},\mathfrak{s}) =  \mathbb{E}\dbrack*{ \dparen*{ \min\dbrace*{ \tau_{L,\theta_0}(\mathbb{G}_0) + \mathfrak{w} + \mathfrak{s}, 0 } - \min\dbrace*{ \mathfrak{s}, 0 } }^2 }$. Fix $\mathfrak{w} \in \mathbb{R}$. When $\mathfrak{s} \geq 0$, 
\begin{equation*}
    G(\mathfrak{w},\mathfrak{s}) = \dparen*{ (\mathfrak{w} + \mathfrak{s})^2 + \sigma^2 }\Phi\dparen*{ -\frac{\mathfrak{w}+\mathfrak{s}}{\sigma} } - \frac{(\mathfrak{w}+\mathfrak{s})\sigma}{\sqrt{2\pi}}\exp\dparen*{ - \frac{(\mathfrak{w}+\mathfrak{s})^2}{2\sigma^2} }.
\end{equation*}
As $\frac{\partial^2 G(\mathfrak{w},\mathfrak{s})}{\partial \mathfrak{s}^2} = 2\Phi\dparen*{ -\frac{\mathfrak{w}+\mathfrak{s}}{\sigma} }$, $G(\mathfrak{w},\mathfrak{s})$ is convex in $\mathfrak{s} \geq 0$. Hence, $\sup_{\mathfrak{s} \geq 0} G(\mathfrak{w},\mathfrak{s})$ is attained at $\mathfrak{s} = 0$ or $\mathfrak{s} \to \infty$. Observing that $\lim_{\mathfrak{s} \to \infty} G(\mathfrak{w},\mathfrak{s}) = 0$ and $G(\mathfrak{w},\mathfrak{s}) \geq 0$, 
\begin{equation*}
    \sup_{\mathfrak{s} \geq 0} G(\mathfrak{w},\mathfrak{s}) = G(\mathfrak{w},0) = (\mathfrak{w}^2+\sigma^2)\Phi\dparen*{-\frac{\mathfrak{w}}{\sigma}} - \frac{\mathfrak{w}\sigma}{\sqrt{2\pi}}\exp\dparen*{ -\frac{\mathfrak{w}^2}{2\sigma^2} }.
\end{equation*}
When $\mathfrak{s} < 0$,
\begin{equation*}
    G(\mathfrak{w},\mathfrak{s}) = \frac{1}{\sqrt{2\pi\sigma^2}}\int_{-\infty}^{-\mathfrak{s}} x^2\exp\dparen*{ -\frac{(x-\mathfrak{w})^2}{2\sigma^2} }\mathrm{d}x + \mathfrak{s}^2\dparen*{ 1 - \Phi\dparen*{ -\frac{\mathfrak{w}+\mathfrak{s}}{\sigma} } }.
\end{equation*}
Since $\frac{\partial G(\mathfrak{w},\mathfrak{s})}{\partial \mathfrak{s}} = 2\mathfrak{s}\dparen*{ 1 - \Phi\dparen*{ - \frac{\mathfrak{w}+\mathfrak{s}}{\sigma}} }$, $G(\mathfrak{w},\mathfrak{s})$ is decreasing in $\mathfrak{s} < 0$. By letting $\mathfrak{s} \to -\infty$, 
\begin{equation*}
    \sup_{\mathfrak{s} < 0} G(\mathfrak{w},\mathfrak{s}) = \mathfrak{w}^2 + \sigma^2.
\end{equation*}
To observe which of $\sup_{\mathfrak{s} < 0} G(\mathfrak{w},\mathfrak{s})$ or $\sup_{\mathfrak{s} \geq 0} G(\mathfrak{w},\mathfrak{s})$ is larger, let
\begin{align*}
    \Delta G(\mathfrak{w}) 
    &= \sup_{\mathfrak{s} < 0} G(\mathfrak{w},\mathfrak{s}) - \sup_{\mathfrak{s} \geq 0} G(\mathfrak{w},\mathfrak{s})\\
    &= (\mathfrak{w}^2+\sigma^2)\dparen*{1 - \Phi\dparen*{-\frac{\mathfrak{w}}{\sigma}}} + \frac{\mathfrak{w}\sigma}{\sqrt{2\pi}}\exp\dparen*{ -\frac{\mathfrak{w}^2}{2\sigma^2} }.
\end{align*}
As
\begin{equation*}
    \lim_{\mathfrak{w} \to -\infty} \Delta G(\mathfrak{w}) = 0, \quad 
    \lim_{\mathfrak{w} \to -\infty} \frac{\partial \Delta G(\mathfrak{w})}{\partial \mathfrak{w}} = 0, \quad\text{and}\quad
    \frac{\partial^2 \Delta G(\mathfrak{w})}{\partial \mathfrak{w}^2} > 0,
\end{equation*}
$\Delta G(\mathfrak{w})$ is positive for any $\mathfrak{w}$. Hence, $\sup_{\mathfrak{s} \in \mathbb{R}} G(\mathfrak{w},\mathfrak{s}) = \mathfrak{w}^2 + \sigma^2$ for any $\mathfrak{w}$. Therefore, \cref{eq:example2-mse-lower-bound} is solved by $w_{\theta_0}^\dagger = 0$ with value equal to $\sigma^2$.

\printbibliography

@Misc{Adjaho2022,
  author        = {Adjaho, Christopher and Christensen, Timothy},
  date          = {2022-05-11},
  title         = {Externally Valid Treatment Choice},
  doi           = {10.48550/arXiv.2205.05561},
  eprint        = {2205.05561},
  howpublished  = {arXiv},
  organization  = {arXiv},
  archiveprefix = {arXiv},
  primaryclass  = {econ.EM},
  year          = {2022},
}

@Article{Athey2021,
  author   = {Athey, Susan and Wager, Stefan},
  journal  = {Econometrica},
  title    = {{Policy Learning With Observational Data}},
  year     = {2021},
  number   = {1},
  pages    = {133--161},
  volume   = {89},
  doi      = {10.3982/ECTA15732},
  file     = {:Athey-Wager-2021-Policy-learning-with-observational-data/Athey-Wager-2021-policy-learning-with-observational-data-Paper.pdf:PDF},
  keywords = {Double robustness, empirical welfare maximization, minimax regret, semiparametric efficiency},
  url      = {https://onlinelibrary.wiley.com/doi/abs/10.3982/ECTA15732},
}

@Article{Balke1997,
  author       = {Balke, Alexander and Pearl, Judea},
  date         = {1997},
  journaltitle = {Journal of the American Statistical Association},
  title        = {{Bounds on Treatment Effects from Studies with Imperfect Compliance}},
  doi          = {10.1080/01621459.1997.10474074},
  eprint       = {https://doi.org/10.1080/01621459.1997.10474074},
  number       = {439},
  pages        = {1171--1176},
  url          = {https://doi.org/10.1080/01621459.1997.10474074},
  volume       = {92},
  file         = {:Balke-Pearl-1997-Bounds-on-treatment-effects-from-studies-with-imperfect-compliance/Bounds on Treatment Effects from Studies with Imperfect Compliance.pdf:PDF},
  publisher    = {Taylor & Francis},
}

@Book{Bogachev2007,
  author    = {Bogachev, Vladimir I.},
  date      = {2007},
  title     = {{Measure Theory}},
  doi       = {10.1007/978-3-540-34514-5},
  edition   = {1},
  isbn      = {978-3-540-34514-5},
  publisher = {Springer Berlin Heidelberg},
  url       = {https://doi.org/10.1007/978-3-540-34514-5},
  address   = {Berlin Heidelberg},
  file      = {:Bogachev-2007-Measure-theory/Bogachev2007_Book_MeasureTheory.pdf:PDF},
  year      = {2007},
}

@Misc{Christensen2022,
  author        = {Christensen, Timothy and Moon, Hyungsik Roger and Schorfheide, Frank},
  date          = {2022-04-25},
  title         = {Optimal Discrete Decisions When Payoffs Are Partially Identified},
  doi           = {10.48550/arXiv.2204.11748},
  eprint        = {2204.11748},
  eprintclass   = {econ.EM},
  eprinttype    = {arXiv},
  howpublished  = {arXiv},
  organization  = {arXiv},
  archiveprefix = {arXiv},
  primaryclass  = {econ.EM},
  year          = {2022},
}

@Article{Cole2010,
  author     = {Cole, Stephen R. and Stuart, Elizabeth A.},
  journal    = {American Journal of Epidemiology},
  title      = {{Generalizing Evidence From Randomized Clinical Trials to Target Populations: The ACTG 320 Trial}},
  year       = {2010},
  issn       = {0002-9262},
  month      = jun,
  number     = {1},
  pages      = {107--115},
  volume     = {172},
  doi        = {10.1093/aje/kwq084},
  file       = {:Cole-Stuart-2010-Generalizing-evidence-from-randomized-clinical-trials-to-target-populations/kwq084.pdf:PDF},
  readstatus = {read},
  url        = {https://doi.org/10.1093/aje/kwq084},
}

@Misc{DAdamo2022,
  author        = {D'Adamo, Riccardo},
  date          = {2022-12-30},
  title         = {Orthogonal Policy Learning under Ambiguity},
  doi           = {10.48550/arXiv.2111.10904},
  eprint        = {2111.10904},
  eprintclass   = {econ.EM},
  eprinttype    = {arXiv},
  howpublished  = {arXiv},
  organization  = {arXiv},
  archiveprefix = {arXiv},
  primaryclass  = {econ.EM},
  year          = {2022},
}

@Misc{Fang2016,
  author  = {Fang, Zheng},
  date    = {2016-03-14},
  title   = {{Optimal Plug-in Estimators of Directionally Differentiable Functionals}},
  url     = {https://www.dropbox.com/s/6jsi9c69t2755ak/LAM.pdf?dl=0},
  urldate = {2023-05-09},
  file    = {:Fang-2016-Optimal-plug-in-estimators-of-directionally-differentiable-functionals/LAM.pdf:PDF},
  month   = mar,
  year    = {2016},
}

@Article{Fang2019,
  author  = {Fang, Zheng and Santos, Andres},
  journal = {The Review of Economic Studies},
  title   = {{Inference on Directionally Differentiable Functions}},
  year    = {2019},
  issn    = {0034-6527},
  month   = sep,
  number  = {1},
  pages   = {377--412},
  volume  = {86},
  doi     = {10.1093/restud/rdy049},
  eprint  = {https://academic.oup.com/restud/article-pdf/86/1/377/27285299/rdy049.pdf},
  file    = {:Fang-Santos-2019-Inference-on-directionally-differentiable-functions/rdy049.pdf:PDF;:Fang-Santos-2019-Inference-on-directionally-differentiable-functions/rdy049_supp/Supplementary MS21844/online_appendix.pdf:PDF},
  url     = {https://doi.org/10.1093/restud/rdy049},
}

@Article{Feinberg2006,
  author  = {Feinberg, E. A. and Piunovskiy, A. B.},
  journal = {Theory of Probability \& Its Applications},
  title   = {{On the Dvoretzky-Wald-Wolfowitz Theorem on Nonrandomized Statistical Decisions}},
  year    = {2006},
  number  = {3},
  pages   = {463--466},
  volume  = {50},
  doi     = {10.1137/S0040585X97981937},
  eprint  = {https://doi.org/10.1137/S0040585X97981937},
  file    = {:Feinberg-Piunovskiy-2006-On-the-dvoretzky-wald-wolfowitz-theorem-on-nonrandomized-statistical-decisions/s0040585x97981937.pdf:PDF},
  url     = {https://doi.org/10.1137/S0040585X97981937},
}

@Article{Hirano2009,
  author       = {Hirano, Keisuke and Porter, Jack R.},
  date         = {2009},
  journaltitle = {Econometrica},
  title        = {{Asymptotics for Statistical Treatment Rules}},
  doi          = {10.3982/ECTA6630},
  number       = {5},
  pages        = {1683--1701},
  url          = {https://onlinelibrary.wiley.com/doi/abs/10.3982/ECTA6630},
  volume       = {77},
  file         = {:Hirano-Porter-2009-Asymptotics-for-statistical-treatment-rules/ECTA6630.pdf:PDF;:Hirano-Porter-2009-Asymptotics-for-statistical-treatment-rules/6630_Proofs_0.pdf:PDF},
  journal      = {Econometrica},
  keywords     = {Statistical decision theory, treatment assignment, minmax, minmax regret, Bayes rules, semiparametric models},
  year         = {2009},
}

@Article{Hirano2012,
  author   = {Hirano, Keisuke and Porter, Jack R.},
  journal  = {Econometrica},
  title    = {{Impossibility Results for Nondifferentiable Functionals}},
  year     = {2012},
  number   = {4},
  pages    = {1769--1790},
  volume   = {80},
  doi      = {10.3982/ECTA8681},
  file     = {:Hirano-Porter-2012-Impossibility-results-for-nondifferentiable-functionals/Econometrica - 2012 - Hirano - Impossibility Results for Nondifferentiable Functionals.pdf:PDF},
  keywords = {Local asymptotics, moment inequality models, bounds, bias-correction},
  url      = {https://onlinelibrary.wiley.com/doi/abs/10.3982/ECTA8681},
}

@Article{Hong2018,
  author   = {Hong, Han and Li, Jessie},
  journal  = {Journal of Econometrics},
  title    = {The Numerical Delta Method},
  year     = {2018},
  issn     = {0304-4076},
  number   = {2},
  pages    = {379--394},
  volume   = {206},
  doi      = {10.1016/j.jeconom.2018.06.007},
  file     = {:Hong-Li-2018-The-numerical-delta-method/1-s2.0-S0304407618300988-main.pdf:PDF;:Hong-Li-2018-The-numerical-delta-method/1-s2.0-S0304407618300988-mmc1.pdf:PDF},
  keywords = {Delta method, Numerical differentiation, Directional differentiability},
  url      = {https://www.sciencedirect.com/science/article/pii/S0304407618300988},
}

@Article{Hotz2005,
  author   = {Hotz, V. Joseph and Imbens, Guido W. and Mortimer, Julie H.},
  journal  = {Journal of Econometrics},
  title    = {Predicting the Efficacy of Future Training Programs Using Past Experiences at Other Locations},
  year     = {2005},
  issn     = {0304-4076},
  number   = {1},
  pages    = {241--270},
  volume   = {125},
  doi      = {10.1016/j.jeconom.2004.04.009},
  file     = {:Hotz-Imbens-Mortimer-2005-Predicting-the-efficacy-of-future-training-programs-using-past-experiences-at-other-locations/1-s2.0-S0304407604000806-main.pdf:PDF},
  keywords = {Efficacy, Training programs, Heterogeneity, Prediction},
  url      = {https://www.sciencedirect.com/science/article/pii/S0304407604000806},
}

@Book{Ibragimov1981,
  author    = {Ibragimov, I. A. and Has'minskii, R. Z.},
  publisher = {Springer New York, NY},
  title     = {{Statistical Estimation: Asymptotic Theory}},
  year      = {1981},
  doi       = {10.1007/978-1-4899-0027-2},
}

@Misc{Ishihara2021,
  author        = {Ishihara, Takuya and Kitagawa, Toru},
  date          = {2021-08-14},
  title         = {{Evidence Aggregation for Treatment Choice}},
  doi           = {10.48550/arXiv.2108.06473},
  eprint        = {2108.06473},
  eprintclass   = {econ.EM},
  eprinttype    = {arXiv},
  howpublished  = {arXiv},
  organization  = {arXiv},
  archiveprefix = {arXiv},
  primaryclass  = {econ.EM},
  year          = {2021},
}

@InProceedings{Kallus2018,
  author    = {Kallus, Nathan and Zhou, Angela},
  booktitle = {Advances in Neural Information Processing Systems},
  title     = {{Confounding-Robust Policy Improvement}},
  year      = {2018},
  editor    = {S. Bengio and H. Wallach and H. Larochelle and K. Grauman and N. Cesa-Bianchi and R. Garnett},
  publisher = {Curran Associates, Inc.},
  volume    = {31},
  file      = {:Kallus-Zhou-2018-Confounding-Robust-Policy-Improvement/NeurIPS-2018-confounding-robust-policy-improvement-Paper.pdf:PDF},
  url       = {https://proceedings.neurips.cc/paper/2018/file/3a09a524440d44d7f19870070a5ad42f-Paper.pdf},
}

@Misc{Kido2022,
  author        = {Kido, Daido},
  date          = {2022-08-06},
  title         = {Distributionally Robust Policy Learning with Wasserstein Distance},
  doi           = {10.48550/arXiv.2205.04637},
  eprint        = {2205.04637},
  eprintclass   = {econ.EM},
  eprinttype    = {arXiv},
  howpublished  = {arXiv},
  organization  = {arXiv},
  archiveprefix = {arXiv},
  primaryclass  = {econ.EM},
  year          = {2022},
}

@Article{Kitagawa2018,
  author   = {Kitagawa, Toru and Tetenov, Aleksey},
  journal  = {Econometrica},
  title    = {{Who Should Be Treated? Empirical Welfare Maximization Methods for Treatment Choice}},
  year     = {2018},
  number   = {2},
  pages    = {591--616},
  volume   = {86},
  doi      = {10.3982/ECTA13288},
  file     = {:Kitagawa-Tetenov-2018-Who-should-be-treated/Kitagawa-Tetenov-2018-who-should-be-treated-Paper.pdf:PDF;:Kitagawa-Tetenov-2018-Who-should-be-treated/Kitagawa-Tetenov-2018-who-should-be-treated-Supplement.pdf:PDF},
  keywords = {Heterogeneous treatment effects, randomized experiments, program evaluation, individualized treatment rules, empirical risk minimization, risk bounds},
  url      = {https://onlinelibrary.wiley.com/doi/abs/10.3982/ECTA13288},
}

@Article{Manski1989,
  author    = {Manski, Charles F.},
  journal   = {The Journal of Human Resources},
  title     = {{Anatomy of the Selection Problem}},
  year      = {1989},
  number    = {3},
  pages     = {343--360},
  volume    = {24},
  file      = {:Manski-1989-Anatomy-of-the-selection-problem/out.pdf:PDF},
  publisher = {[University of Wisconsin Press, Board of Regents of the University of Wisconsin System]},
  url       = {http://www.jstor.org/stable/145818},
}

@Article{Manski1990,
  author    = {Manski, Charles F.},
  journal   = {American Economic Review},
  title     = {{Nonparametric Bounds on Treatment Effects}},
  year      = {1990},
  month     = may,
  number    = {2},
  pages     = {319--323},
  volume    = {80},
  file      = {:Manski-1990-Nonparametric-bounds-on-treatment-effects/2006592.pdf:PDF},
  publisher = {American Economic Association},
  url       = {http://www.jstor.org/stable/2006592},
  urldate   = {2022-09-20},
}

@Article{Manski2004,
  author  = {Manski, Charles F.},
  journal = {Econometrica},
  title   = {{Statistical Treatment Rules for Heterogeneous Populations}},
  year    = {2004},
  number  = {4},
  pages   = {1221--1246},
  volume  = {72},
  doi     = {10.1111/j.1468-0262.2004.00530.x},
  file    = {:Manski-2004-Statistical-treatment-rules-for-heterogeneous-populations/Econometrica - 2004 - Manski - Statistical Treatment Rules for Heterogeneous Populations.pdf:PDF},
  url     = {http://www.jstor.org/stable/3598783},
}

@Article{Manski2007,
  author   = {Manski, Charles F.},
  journal  = {Journal of Econometrics},
  title    = {Minimax-Regret Treatment Choice with Missing Outcome Data},
  year     = {2007},
  issn     = {0304-4076},
  number   = {1},
  pages    = {105--115},
  volume   = {139},
  doi      = {10.1016/j.jeconom.2006.06.006},
  file     = {:Manski-2007-Minimax-regret-tretment-choice-with-missing-outcome-data/1-s2.0-S0304407606001047-main.pdf:PDF},
  keywords = {Statistical decision theory, Partial identification, Social choice},
  url      = {https://www.sciencedirect.com/science/article/pii/S0304407606001047},
}

@Article{Manski2009,
  author  = {Manski, Charles F.},
  journal = {International Economic Review},
  title   = {Diversified Treatment under Ambiguity},
  year    = {2009},
  number  = {4},
  pages   = {1013--1041},
  volume  = {50},
  doi     = {https://doi.org/10.1111/j.1468-2354.2009.00558.x},
  eprint  = {https://onlinelibrary.wiley.com/doi/pdf/10.1111/j.1468-2354.2009.00558.x},
  url     = {https://onlinelibrary.wiley.com/doi/abs/10.1111/j.1468-2354.2009.00558.x},
}

@Article{Manski2000a,
  author       = {Manski, Charles F. and Pepper, John V.},
  date         = {2000},
  journaltitle = {Econometrica},
  title        = {Monotone Instrumental Variables: With an Application to the Returns to Schooling},
  doi          = {10.1111/1468-0262.t01-1-00144a},
  eprint       = {https://onlinelibrary.wiley.com/doi/pdf/10.1111/1468-0262.t01-1-00144a},
  number       = {4},
  pages        = {997--1010},
  url          = {https://onlinelibrary.wiley.com/doi/abs/10.1111/1468-0262.t01-1-00144a},
  volume       = {68},
}

@Article{Mbakop2021,
  author   = {Mbakop, Eric and Tabord-Meehan, Max},
  journal  = {Econometrica},
  title    = {{Model Selection for Treatment Choice: Penalized Welfare Maximization}},
  year     = {2021},
  number   = {2},
  pages    = {825--848},
  volume   = {89},
  doi      = {10.3982/ECTA16437},
  file     = {:Mbakop-Tabord-Meehan-2021-Model-selection-for-treatment-choice/Econometrica - 2021 - Mbakop - Model Selection for Treatment Choice  Penalized Welfare Maximization.pdf:PDF},
  keywords = {Treatment choice, minimax-regret, statistical learning},
  url      = {https://onlinelibrary.wiley.com/doi/abs/10.3982/ECTA16437},
}

@Article{Mo2021,
  author    = {Mo, Weibin and Qi, Zhengling and Liu, Yufeng},
  journal   = {Journal of the American Statistical Association},
  title     = {{Learning Optimal Distributionally Robust Individualized Treatment Rules}},
  year      = {2021},
  number    = {534},
  pages     = {659--674},
  volume    = {116},
  doi       = {10.1080/01621459.2020.1796359},
  file      = {:Mo-Qi-Liu-2021-Learning-optimal-distributionally-robust-individualized-treatment-rules/Learning Optimal Distributionally Robust Individualized Treatment Rules.pdf:PDF},
  publisher = {Taylor & Francis},
  url       = {https://doi.org/10.1080/01621459.2020.1796359},
}

@Book{Munkres2013,
  author    = {Munkres, James R.},
  date      = {2013-07-24},
  title     = {{Topology: Pearson New International Edition}},
  edition   = {2},
  isbn      = {9781292023625},
  publisher = {Pearson Education Limited},
  file      = {:Munkres-2000-Topology/Munkres-Topology.pdf:PDF},
  month     = jul,
  year      = {2013},
}

@Misc{Ponomarev2022,
  author  = {Ponomarev, Kirill},
  date    = {2022-02-08},
  title   = {Efficient Estimation of Directionally Differentiable Functionals},
  url     = {https://www.dropbox.com/s/icxqnyup3xoee78/efficient estimation of directionally differentiable functionals.pdf?dl=0},
  urldate = {2023-05-08},
  file    = {:Ponomarev-2022-Efficient-estimation-of-directionally-differentiable-functionals/efficient estimation of directionally differentiable functionals.pdf:PDF},
  month   = feb,
  year    = {2022},
}

@Article{Pu2021,
  author   = {Pu, Hongming and Zhang, Bo},
  journal  = {Journal of the Royal Statistical Society: Series B (Statistical Methodology)},
  title    = {Estimating Optimal Treatment Rules with an Instrumental Variable: A Partial Identification Learning Approach},
  year     = {2021},
  number   = {2},
  pages    = {318--345},
  volume   = {83},
  doi      = {10.1111/rssb.12413},
  eprint   = {https://rss.onlinelibrary.wiley.com/doi/pdf/10.1111/rssb.12413},
  file     = {:Pu-Zhang-2021-Estimating-optimal-treatment-rules-with-an-instrumental-variable/Journal of the Royal Statistical Society  Series B  Statistical Methodology - 2021 - Pu - Estimating optimal treatment.pdf:PDF},
  keywords = {causal inference, individualized treatment rule, instrumental variable, partial identification, statistical learning theory},
  url      = {https://rss.onlinelibrary.wiley.com/doi/abs/10.1111/rssb.12413},
}

@Article{Shapiro1990,
  author  = {Shapiro, Alexander},
  journal = {Journal of Optimization Theory and Applications},
  title   = {On Concepts of Directional Differentiability},
  year    = {1990},
  issn    = {1573-2878},
  number  = {3},
  pages   = {477--487},
  volume  = {66},
  doi     = {10.1007/BF00940933},
  file    = {:Shapiro-1990-On-concepts-of-directional-differentiability/BF00940933.pdf:PDF},
  refid   = {Shapiro1990},
  url     = {https://doi.org/10.1007/BF00940933},
}

@InProceedings{Si2020,
  author    = {Si, Nian and Zhang, Fan and Zhou, Zhengyuan and Blanchet, Jose},
  booktitle = {Proceedings of the 37th International Conference on Machine Learning},
  title     = {{Distributionally Robust Policy Evaluation and Learning in Offline Contextual Bandits}},
  year      = {2020},
  editor    = {Daum{\'e} III, Hal and Singh, Aarti},
  month     = jul,
  pages     = {8884--8894},
  publisher = {PMLR},
  series    = {Proceedings of Machine Learning Research},
  volume    = {119},
  file      = {:Si-et-al-2020-Distributionally/Si-et-al-2020-Distributionally-robust-policy-evaluation-and-learning-in-offline-contextual-bandits-Paper.pdf:PDF},
  pdf       = {http://proceedings.mlr.press/v119/si20a/si20a.pdf},
  url       = {https://proceedings.mlr.press/v119/si20a.html},
}

@Article{Siforthcoming,
  author       = {Si, Nian and Zhang, Fan and Zhou, Zhengyuan and Blanchet, Jose},
  journaltitle = {Management Science},
  title        = {{Distributionally Robust Batch Contextual Bandits}},
  doi          = {10.1287/mnsc.2023.4678},
  eprint       = {https://doi.org/10.1287/mnsc.2023.4678},
  pubstate     = {forthcoming},
  url          = {https://doi.org/10.1287/mnsc.2023.4678},
}

@Article{Song2014,
  author    = {Song, Kyungchul},
  journal   = {Econometric Theory},
  title     = {{Point Decisions for Interval-Identified Parameters}},
  year      = {2014},
  number    = {2},
  pages     = {334--356},
  volume    = {30},
  doi       = {10.1017/S0266466613000327},
  file      = {:Song-2014-Point-decisions-for-interval-identified-parameters/point-decisions-for-intervalidentified-parameters.pdf:PDF},
  publisher = {Cambridge University Press},
}

@Article{Song2014a,
  author   = {Song, Kyungchul},
  journal  = {Journal of Multivariate Analysis},
  title    = {Local Asymptotic Minimax Estimation of Nonregular Parameters with Translation-Scale Equivariant Maps},
  year     = {2014},
  issn     = {0047-259X},
  pages    = {136--158},
  volume   = {125},
  doi      = {10.1016/j.jmva.2013.10.020},
  file     = {:Song-2014-Local-asymptotic-minimax-estimation-of-nonregular-parameters-with-translation-scale-equivariant-maps/1-s2.0-S0047259X13002327-main.pdf:PDF},
  keywords = {Nonregular parameters, Translation-scale equivariant transforms, Semiparametric efficiency, Local asymptotic minimax estimation},
  url      = {https://www.sciencedirect.com/science/article/pii/S0047259X13002327},
}

@Article{Stoye2009,
  author   = {Stoye, J{\"o}rg},
  journal  = {Journal of Econometrics},
  title    = {Minimax Regret Treatment Choice with Finite Samples},
  year     = {2009},
  issn     = {0304-4076},
  number   = {1},
  pages    = {70--81},
  volume   = {151},
  doi      = {10.1016/j.jeconom.2009.02.013},
  file     = {:Stoye-2009-Minimax-regret-treatment-choice-with-finite-samples/1-s2.0-S0304407609000724-main.pdf:PDF},
  keywords = {Finite sample theory, Statistical decision theory, Minimax regret, Treatment response, Treatment choice},
  url      = {https://www.sciencedirect.com/science/article/pii/S0304407609000724},
}

@Article{Stoye2012,
  author   = {Stoye, J{\"o}rg},
  journal  = {Journal of Econometrics},
  title    = {Minimax Regret Treatment Choice with Covariates or with Limited Validity of Experiments},
  year     = {2012},
  number   = {1},
  pages    = {138--156},
  volume   = {166},
  doi      = {10.1016/j.jeconom.2011.06.012},
  file     = {:Stoye-2012-Minimax-regret-treatment-choice-with-covariates-or-with-limited-validity-of-experiments/1-s2.0-S0304407611001254-main.pdf:PDF},
  keywords = {Finite sample theory, Statistical decision theory, Minimax regret, Treatment response, Treatment choice},
  url      = {https://www.sciencedirect.com/science/article/pii/S0304407611001254},
}

@Article{Tetenov2012,
  author   = {Tetenov, Aleksey},
  journal  = {Journal of Econometrics},
  title    = {Statistical Treatment Choice Based on Asymmetric Minimax Regret Criteria},
  year     = {2012},
  issn     = {0304-4076},
  number   = {1},
  pages    = {157--165},
  volume   = {166},
  doi      = {10.1016/j.jeconom.2011.06.013},
  file     = {:Tetenov-2007-Statistical-treatment-choice-based-on-asymmetric-minimax-regret-criteria/1-s2.0-S0304407611001266-main.pdf:PDF},
  keywords = {Treatment effects, Loss aversion, Statistical decision theory, Hypothesis testing},
  url      = {https://www.sciencedirect.com/science/article/pii/S0304407611001266},
}

@Article{Vaart1989,
  author       = {van der Vaart, Aad W.},
  date         = {1989},
  journaltitle = {The Annals of Statistics},
  title        = {{On the Asymptotic Information Bound}},
  doi          = {10.1214/aos/1176347377},
  number       = {4},
  pages        = {1487--1500},
  url          = {https://doi.org/10.1214/aos/1176347377},
  volume       = {17},
  file         = {:van-der-vaart-1989-On-the-asymptotic-information-bound/1176347377.pdf:PDF},
  keywords     = {Asymptotic efficiency, Convolution theorem, local asymptotic minimax risk, local asymptotic normality, mixture model, tangent cone},
  publisher    = {Institute of Mathematical Statistics},
  year         = {1989},
}

@Article{Vaart1991,
  author       = {van der Vaart, Aad W.},
  date         = {1991},
  journaltitle = {The Annals of Statistics},
  title        = {{On Differentiable Functionals}},
  doi          = {10.1214/aos/1176347976},
  number       = {1},
  pages        = {178--204},
  url          = {https://doi.org/10.1214/aos/1176347976},
  volume       = {19},
  file         = {:vanderVaart-1991-On-differentiable-functionals/1176347976.pdf:PDF},
  keywords     = {Asymptotic efficiency, Censoring, Convolution theorem, efficient information, information operator, mixture model, semi-parametric model, truncation},
  publisher    = {Institute of Mathematical Statistics},
  year         = {1991},
}

@Article{Vaart1991a,
  author       = {van der Vaart, Aad W.},
  date         = {1991},
  journaltitle = {International Statistical Review / Revue Internationale de Statistique},
  title        = {{An Asymptotic Representation Theorem}},
  number       = {1},
  pages        = {97--121},
  url          = {http://www.jstor.org/stable/1403577},
  urldate      = {2022-09-10},
  volume       = {59},
  file         = {:vanderVaart-1991-An-asymptotic-representation-theorem/1403577.pdf:PDF},
  year         = {1991},
}

@Article{Vaart1991b,
  author       = {van der Vaart, Aad W.},
  date         = {1991},
  journaltitle = {Scandinavian Journal of Statistics},
  title        = {{Efficiency and Hadamard Differentiability}},
  number       = {1},
  pages        = {63--75},
  url          = {http://www.jstor.org/stable/4616189},
  urldate      = {2023-11-05},
  volume       = {18},
}

@Book{Vaart1998,
  author    = {van der Vaart, Aad W.},
  date      = {1998},
  title     = {{Asymptotic Statistics}},
  doi       = {10.1017/CBO9780511802256},
  isbn      = {9780511802256},
  publisher = {Cambridge University Press},
  series    = {Cambridge Series in Statistical and Probabilistic Mathematics},
  year      = {1998},
}

@Book{Vaart1996,
  author    = {van der Vaart, Aad W. and Wellner, Jon A.},
  date      = {1996},
  title     = {{Weak Convergence and Empirical Processes: With Applications to Statistics}},
  doi       = {10.1007/978-1-4757-2545-2},
  isbn      = {978-1-4757-2547-6},
  publisher = {Springer New York NY},
  url       = {https://doi.org/10.1007/978-1-4757-2545-2},
  year      = {1996},
}

@Misc{Yata2021,
  author        = {Yata, Kohei},
  date          = {2021-11-09},
  title         = {Optimal Decision Rules under Partial Identification},
  doi           = {10.48550/arXiv.2111.04926},
  eprint        = {2111.04926},
  eprintclass   = {econ.EM},
  eprinttype    = {arXiv},
  howpublished  = {arXiv},
  organization  = {arXiv},
  archiveprefix = {arXiv},
  primaryclass  = {econ.EM},
  year          = {2021},
}

@Article{Zhao2019,
  author     = {Zhao, Ying{-}Qi and Zeng, Donglin and Tangen, Catherine M. and Leblanc, Michael L.},
  journal    = {Electronic Journal of Statistics},
  title      = {{Robustifying trial-derived optimal treatment rules for a target population}},
  year       = {2019},
  number     = {1},
  pages      = {1717--1743},
  volume     = {13},
  doi        = {10.1214/19-EJS1540},
  file       = {:Robustifying-trial-derived-optimal-treatment-rules-for-a-target-populatoin/19-EJS1540.pdf:PDF},
  keywords   = {Biased sample, ‎classification‎, Individualized treatment rules, Minimax linear decision, Personalized medicine},
  publisher  = {Institute of Mathematical Statistics and Bernoulli Society},
  readstatus = {skimmed},
  url        = {https://doi.org/10.1214/19-EJS1540},
}

\end{document}